\newtheorem{theorem}{Theorem}[section]
\newtheorem{lemma}[theorem]{Lemma}
\newtheorem{corollary}[theorem]{Corollary}
\newtheorem{fact}[theorem]{Fact}
\newtheorem{definition}{Definition}[section]
\newcommand{\braket}[2]{\left< #1 \vphantom{#2} \middle| #2 \vphantom{#1} \right>} 
\newcommand{\ketbra}[2]{\ensuremath{\ket{#1}\!\bra{#2}}}
\DeclarePairedDelimiter\rbra{\lparen}{\rparen}
\DeclarePairedDelimiter\sbra{\lbrack}{\rbrack}
\DeclarePairedDelimiter\cbra{\{}{\}}
\DeclarePairedDelimiter\abs{\lvert}{\rvert}
\DeclarePairedDelimiter\Abs{\lVert}{\rVert}
\DeclarePairedDelimiter\ceil{\lceil}{\rceil}
\DeclarePairedDelimiter\floor{\lfloor}{\rfloor}
\DeclarePairedDelimiter\ket{\lvert}{\rangle}
\DeclarePairedDelimiter\bra{\langle}{\rvert}
\newcommand{\tr} {\operatorname{tr}}
\newcommand{\poly} {\operatorname{poly}}
\newcommand{\sgn} {\operatorname{sgn}}
\newcommand{\Real} {\operatorname{Re}}
\newcommand{\footremember}[2]{%
    \footnote{#2}
    \newcounter{#1}
    \setcounter{#1}{\value{footnote}}%
}
\begin{document}

\title{Information-Theoretic Lower Bounds for Approximating Monomials via Optimal Quantum Tsallis Entropy Estimation}
\author{Qisheng Wang \footremember{1}{Qisheng Wang is with the School of Informatics, University of Edinburgh, Edinburgh, United Kingdom (e-mail: \url{QishengWang1994@gmail.com}).}}
\date{}

\maketitle

\begin{abstract}
    This paper reveals a conceptually new connection from information theory to approximation theory via quantum algorithms for entropy estimation. 
    Specifically, we provide an information-theoretic lower bound $\Omega(\sqrt{n})$ on the approximate degree of the monomial $x^n$, compared to the analytic lower bounds shown in \hyperlink{cite.NR76}{Newman and Rivlin (\textit{Aequ.\ Math.}\ 1976)} via Fourier analysis and in \hyperlink{cite.SV14}{Sachdeva and Vishnoi (\textit{Found.\ Trends Theor.\ Comput.\ Sci.}\ 2014)} via the Markov brothers' inequality. 
    This is done by relating the polynomial approximation of monomials to quantum Tsallis entropy estimation. 
    This further implies a quantum algorithm that estimates to within additive error $\varepsilon$ the Tsallis entropy of integer order $q \geq 2$ of an unknown probability distribution $p$ or an unknown quantum state $\rho$, using $\widetilde \Theta(\frac{1}{\sqrt{q}\varepsilon})$ queries to the quantum oracle that produces a sample from $p$ or prepares a copy of $\rho$, improving the prior best $O(\frac{1}{\varepsilon})$ via the Shift test due to \hyperlink{cite.EAO+02}{Ekert, Alves, Oi, Horodecki, Horodecki and Kwek (\textit{Phys.\ Rev.\ Lett.}\ 2002)}. 
    To the best of our knowledge, this is the \textit{first} quantum entropy estimator with optimal query complexity (up to polylogarithmic factors) for all parameters simultaneously. 
\end{abstract}

\newpage
\tableofcontents
\newpage

\section{Introduction}

In approximation theory, the approximate degree of a function $f \colon A \to B$ to precision $\varepsilon$, denoted by $\widetilde{\deg}_\varepsilon\rbra{f, A, B}$, is the minimum degree of a polynomial $p \colon A \to B$ satisfying 
\begin{equation}
    \sup_{x \in A} \abs*{f\rbra{x} - p\rbra{x}} \leq \varepsilon.
\end{equation}
The theory of (best) polynomial approximation was initiated by \cite{Ber14,Ber38} and later developed as an individual field of mathematics (cf.\ \cite{Tim63,Bus12,Tre19}).
Recently, several applications of best polynomial approximations have been found in computer science, e.g., \cite{VV11a,OSV12,SV14,VV17,JVHW15,JVHW17,WY16,AOST17,AA22,LW25}. 

As one of the basic problems in approximation theory, the polynomial approximation of monomials was investigated by Chebyshev in 1858 (see \cite[Page 58]{Akh56}), where he showed that $x^{2n+1}$ can be approximated uniformly on $\sbra{-1, 1}$ by a polynomial of degree at most $2n$ to precision $2^{-2n}$. 
Zolotarev (in 1868, see \cite[Page 281]{Akh56}) and Bernstein (in 1913, see \cite{Ber58}) further investigated the approximation of $x^{2n+1} - \sigma x^{2n}$ by polynomials of degree at most $2n-1$, where $\sigma$ is a given constant. 
In 1930, Akhiezer (see \cite[Page 287]{Akh56}) considered the problem of approximating the monomial $x^n$ on $\sbra{-1, -a}$ and $\sbra{a, 1}$ ($0 \leq a < 1$) by polynomials of degree at most $n-1$. 
From a practical view, the polynomial approximation of monomials was discussed in \cite[Section VII-9]{Lan56} and \cite[Section 4]{Cod70}.
In particular in \cite{NR76} (extending \cite{RJ72} and further extended by \cite{Red78}), it was shown via Fourier analysis that there is a polynomial of degree $O\rbra{\sqrt{n}}$ that approximates $x^n$ uniformly, together with a matching lower bound:

\begin{theorem}[Analytic lower bounds for approximating monomials, adapted from \cite{NR76}]
    For sufficiently small $\varepsilon > 0$ and for any integer $n \geq 2$, 
    \begin{equation}
        \widetilde{\deg}_\varepsilon\rbra{x^n, \sbra{-1, 1}, \mathbb{R}} = \Omega\rbra{\sqrt{n}}.
    \end{equation}
\end{theorem}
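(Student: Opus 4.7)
The plan is to derive the lower bound from the Markov brothers' inequality by exploiting the fact that $x^n$ drops by a constant factor over an interval of length only $O(1/n)$ near $x = 1$.

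First, I would pick two sample points that witness this rapid decrease. At $x = 1$, we have $x^n = 1$, while at $x = 1 - 1/n$ we have $(1 - 1/n)^n \leq 1/e$. So if $p$ is any polynomial of degree $d$ satisfying $\sup_{x \in [-1,1]} \abs{p(x) - x^n} \leq \varepsilon$, then $p(1) \geq 1 - \varepsilon$ and $p(1 - 1/n) \leq 1/e + \varepsilon$. By the mean value theorem, there exists $\xi \in (1 - 1/n,\, 1)$ with
\begin{equation}
    \abs{p'(\xi)} \geq n \rbra*{1 - \tfrac{1}{e} - 2\varepsilon} = \Omega(n).
\end{equation}

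Next I would bound $\|p'\|_\infty$ from above. Since $\sup_{x \in [-1,1]} \abs{p(x)} \leq 1 + \varepsilon$, the Markov brothers' inequality yields
\begin{equation}
    \sup_{x \in [-1,1]} \abs{p'(x)} \leq d^2 \cdot (1 + \varepsilon).
\end{equation}
Combining the upper and lower bounds on $|p'(\xi)|$ gives $d^2 (1 + \varepsilon) \geq n(1 - 1/e - 2\varepsilon)$, whence $d = \Omega(\sqrt{n})$ whenever $\varepsilon$ is a sufficiently small constant.

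There is no serious obstacle here once the right pair of evaluation points is identified: the whole proof reduces to picking a point where $x^n$ has already dropped by a constant amount from its endpoint value, and the natural choice $x = 1 - 1/n$ works because $(1 - 1/n)^n \to 1/e$. The one thing to be mildly careful about is that the bound $\|p\|_\infty \leq 1 + \varepsilon$ on $[-1,1]$ does follow from the approximation hypothesis together with $\|x^n\|_\infty = 1$, so Markov's inequality may be applied on the full interval $[-1,1]$ without loss. Everything else is a short calculation.
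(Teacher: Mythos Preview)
Your argument is correct. It is essentially the Markov--brothers proof that the paper attributes to \cite{SV14} and reproduces in Appendix~B (\cref{thm:approx-deg-SV14,thm:lb-by-sv14}): pick the evaluation points $1$ and $1-\tfrac{1}{n}$, apply the mean value theorem to force a large derivative $\abs{p'(\xi)} \geq n(1-\tfrac{1}{e}-2\varepsilon)$, and conclude $d=\Omega(\sqrt{n})$ from Markov's inequality. The only cosmetic difference is that you apply Markov directly to $p$ using $\Abs{p}_\infty \leq 1+\varepsilon$, whereas the paper first proves the bound for approximants with range $\sbra{-1,1}$ and then passes to range $\mathbb{R}$ by rescaling $p \mapsto (1-\varepsilon)p$; both are equally valid.

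Note, however, that the statement you were asked to prove is the one the paper credits to \cite{NR76}, and the paper explicitly says that result was obtained \emph{via Fourier analysis}, a proof it does not reproduce. So your argument does not match the proof the paper attributes to this theorem; it matches the alternative \cite{SV14} proof the paper records separately.
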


Another proof of the lower bound for approximating monomials was noted in \cite{SV14} using the Markov brothers' inequality \cite{Mar90,Mar92}, which further provides an explicit range of $\varepsilon$. 

\begin{theorem} [Analytic lower bounds for approximating monomials, implied by \cite{SV14}, see \cref{thm:lb-by-sv14}] \label{thm:sv14-intro}
    For every constant $\varepsilon \in \rbra{0, \frac{e-1}{4e} \approx 0.1580}$ and sufficiently large integer $n$, 
    \begin{equation}
        \widetilde{\deg}_\varepsilon\rbra{x^n, \sbra{-1, 1}, \mathbb{R}} = \Omega\rbra{\sqrt{n}}.
    \end{equation}
\end{theorem}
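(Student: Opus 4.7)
The plan is to combine the Markov brothers' inequality with a mean value argument exploiting how rapidly $x^n$ transitions near the endpoint $x = 1$.

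First, I would reduce the problem to a two-point estimate. Assume $p$ is a polynomial of degree $d$ with $\abs{p(x) - x^n} \leq \varepsilon$ on $\sbra{-1,1}$; by the triangle inequality, $\Abs{p}_\infty \leq 1 + \varepsilon$ on this interval. Evaluating at $x = 1$ and $x = 1 - 1/n$ yields $p(1) \geq 1 - \varepsilon$ and $p(1 - 1/n) \leq (1 - 1/n)^n + \varepsilon \leq 1/e + \varepsilon$, where I use the elementary inequality $(1 - 1/n)^n \leq 1/e$. Consequently,
\begin{equation*}
    p(1) - p(1 - 1/n) \geq 1 - 2\varepsilon - 1/e,
\end{equation*}
which is bounded below by a positive constant whenever $\varepsilon < (e-1)/(2e)$.

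Next, I would apply the mean value theorem on the interval $\sbra{1 - 1/n, 1}$ of length $1/n$ to obtain a point $\xi$ in this interval with $\abs{p'(\xi)} \geq n (1 - 2\varepsilon - 1/e)$, and then invoke the Markov brothers' inequality $\Abs{p'}_\infty \leq d^2 \Abs{p}_\infty$ to obtain the complementary bound $\Abs{p'}_\infty \leq d^2 (1 + \varepsilon)$. Combining these two estimates gives
\begin{equation*}
    d^2 \geq \frac{n (1 - 2\varepsilon - 1/e)}{1 + \varepsilon} = \Omega(n),
\end{equation*}
so $d = \Omega(\sqrt{n})$, valid whenever $\varepsilon < (e-1)/(2e) \approx 0.316$. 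The stated range $(0, (e-1)/(4e))$ is a subset of this and thus follows \emph{a fortiori}; the slightly smaller constant in the theorem presumably reflects a cleaner bookkeeping choice in \cite{SV14} (for instance, replacing $\Abs{p}_\infty \leq 1 + \varepsilon$ by the crude bound $\Abs{p}_\infty \leq 2$ to remove the $\varepsilon$ dependence in the denominator, or absorbing a factor of $2$ somewhere else).

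The main subtlety is the choice of step size $\delta = 1/n$ for the mean value theorem. A constant $\delta$ yields only $d = \Omega(1)$, while $\delta \ll 1/n$ makes the numerator $1 - 2\varepsilon - (1 - \delta)^n$ vanish because $x^n$ barely moves on such a short interval. The scale $\delta = 1/n$ is exactly where $x^n$ transitions from $\Theta(1/e)$ to $1$, so that the numerator remains a positive constant while $1/\delta = n$ couples with Markov's quadratic $d^2$ scaling to yield $d^2 = \Omega(n)$.
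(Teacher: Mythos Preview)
Your proof is correct and follows essentially the same route as the paper's: mean value theorem on $\sbra{1-1/n,1}$ combined with the Markov brothers' inequality. The only difference is bookkeeping for the $\mathbb{R}$-valued case. The paper first proves the bound for polynomials $p \colon \sbra{-1,1} \to \sbra{-1,1}$ (where Markov gives $\abs{p'} \leq d^2$ directly), obtaining the range $\varepsilon \in \rbra{0,\frac{e-1}{2e}}$, and then reduces the $\mathbb{R}$-valued case by rescaling $r(x) = (1-\varepsilon)p(x)$, which forces $\abs{r} \leq 1$ at the cost of doubling the approximation error, hence the halved threshold $\frac{e-1}{4e}$. You instead apply Markov directly with $\Abs{p}_\infty \leq 1+\varepsilon$, avoiding the reduction and recovering the larger range $\varepsilon < \frac{e-1}{2e}$; your guess about the origin of the factor $2$ is exactly right.
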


It turns out that prior techniques for proving the lower bounds on approximate degrees are usually based on analytic properties of the function to be approximated (other examples include \cite{Tim63,AA22}). 
In this paper, we provide a conceptually new proof of the lower bound for approximating monomials from the perspective of quantum entropy estimation, which reveals a novel connection from information theory to approximation theory via quantum computing. 
This new proof is sketched in \cref{sec:proof-intro}. 
At the key step of this new proof, is a quantum algorithm for Tsallis entropy estimation, whose construction relates to the polynomial approximation of monomials. 
On the other hand, we also provide an information-theoretic quantum lower bound for Tsallis entropy estimation.
Combining the both yields a lower bound for approximating monomials. 

Furthermore, by substituting a known approximation polynomial of monomials into the quantum algorithm used in the above proof, we obtain a specific approach to Tsallis entropy estimation (see \cref{sec:optimal-intro}).
This approach, to the best of our knowledge, turns out to be the \textit{first} quantum entropy estimator with \textit{optimal} query complexity (up to polylogarithmic factors) for all parameters simultaneously.
This quantum algorithm has its own significance in quantum property testing, as entropy is one of the core quantities in information theory, which quantifies the uncertainty of a random variable or the outcome of a random process.

\subsection{A new proof of lower bounds for approximating monomials} \label{sec:proof-intro}

The key component of the proof is a quantum algorithm that closely relates the estimation of Tsallis entropy to the polynomial approximation of monomials. 
Here, for an $N$-dimensional probability distribution $p$, the Tsallis entropy of order $q$ \cite{Tsa88} for $q > 0$ and $q \neq 1$ is defined by
\begin{equation}
    \mathrm{H}_q\rbra{p} = \frac{1}{1-q} \rbra*{ \sum_{j=0}^{N-1} p_j^q - 1 }.
\end{equation}
A quantum algorithm that estimates $\mathrm{H}_q\rbra{p}$ for integer $q \geq 2$ in the purified quantum query access model (see \cref{sec:def-quantum-query-model}) is given as follows. 
\begin{lemma} [\cref{lemma:q-complexity-non-uniform} simplified] \label{lemma:meta-intro}
    Given a quantum query oracle $\mathcal{O}$ for a probability distribution $p$, for every (non-constant) integer $q \geq 2$, there is a (non-uniform) quantum algorithm that estimates $\mathrm{H}_q\rbra{p}$ to within additive error $\varepsilon$ using \begin{equation}
        O\rbra*{\frac{\widetilde{\deg}_{\rbra{q-1}\varepsilon/2}\rbra{x^{q-1}, \sbra{-1, 1}, \sbra{-1, 1}}}{q\varepsilon}}
    \end{equation}
    queries to $\mathcal{O}$.\footnote{Here, as a technical detail, the best approximation polynomial of $x^{q-1}$ of degree $\widetilde{\deg}_{\rbra{q-1}\varepsilon/2}\rbra{x^{q-1}, \sbra{-1, 1}, \sbra{-1, 1}}$ is not known to be computable. 
    For this reason, the quantum algorithm in \cref{lemma:meta-intro} is not (Turing) uniform. 
    Nevertheless, for any approximation polynomial $p \colon \sbra{-1, 1} \to \sbra{-1, 1}$ of $x^{q-1}$ to precision $\Theta\rbra{q\varepsilon}$ that is efficiently computable, there is a polynomial-time uniform quantum algorithm that estimates $\mathrm{H}_q\rbra{p}$ to within additive error $\Theta\rbra{\varepsilon}$ with query complexity $O\rbra{\frac{\deg\rbra{p}}{q\varepsilon}}$, as shown in \cref{thm:q-tsallis-estimator}.
    For the general case, see \cref{lemma:meta} for details.}
\end{lemma}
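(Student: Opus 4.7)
The plan is to reduce the estimation of $\mathrm{H}_q\rbra{p}$ to the estimation of the power sum $P_q\rbra{p} := \sum_j p_j^q$, to implement a polynomial approximation of the monomial $x^{q-1}$ via quantum singular value transformation (QSVT), and to extract the scalar $\sum_j p_j P\rbra{p_j}$ through amplitude estimation applied to a suitable block-encoding.

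First I would observe that $\mathrm{H}_q\rbra{p} = \frac{1}{1-q}\rbra{P_q\rbra{p} - 1}$, so it suffices to estimate $P_q\rbra{p}$ to additive error $\rbra{q-1}\varepsilon$. Let $P$ be an approximation polynomial of $x^{q-1}$ on $\sbra{-1,1}$ with range in $\sbra{-1,1}$, of degree $d = \widetilde{\deg}_{\rbra{q-1}\varepsilon/2}\rbra{x^{q-1}, \sbra{-1,1}, \sbra{-1,1}}$ and uniform error at most $\rbra{q-1}\varepsilon/2$. Since every $p_j$ lies in $\sbra{0,1} \subseteq \sbra{-1,1}$ and $\sum_j p_j = 1$, one has $\abs{\sum_j p_j P\rbra{p_j} - P_q\rbra{p}} \leq \rbra{q-1}\varepsilon/2$, so it further suffices to approximate $\sum_j p_j P\rbra{p_j}$ to additive error $\rbra{q-1}\varepsilon/2$.

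Next I would construct, from the purification oracle $\mathcal{O}$ (preparing $\ket{\psi_p} = \sum_j \sqrt{p_j}\ket{j}\ket{\phi_j}$), a block-encoding of the density operator $\rho_p = \sum_j p_j \ketbra{j}{j}$ using $O\rbra{1}$ uses of $\mathcal{O}$, $\mathcal{O}^\dagger$ and their controlled versions, via the standard purified-access construction. Applying QSVT with polynomial $P$ then yields a block-encoding of $P\rbra{\rho_p} = \sum_j P\rbra{p_j} \ketbra{j}{j}$ with an $O\rbra{d}$ multiplicative query overhead. Observing that $\bra{\psi_p} \rbra{P\rbra{\rho_p} \otimes I} \ket{\psi_p} = \tr\rbra{\rho_p P\rbra{\rho_p}} = \sum_j p_j P\rbra{p_j}$, I would compose the block-encoded circuit with the state preparation $\ket{\psi_p} = \mathcal{O}\ket{0}$ into a single amplitude oracle and invoke amplitude estimation to obtain this scalar to additive precision $\rbra{q-1}\varepsilon/2$, using $O\rbra{1/\rbra{\rbra{q-1}\varepsilon}}$ outer iterations. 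Multiplying the two overheads gives the claimed total query complexity $O\rbra{d/\rbra{q\varepsilon}}$, since $q-1 \geq q/2$ for $q \geq 2$.

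The main technical obstacle I anticipate lies in the careful handling of subnormalization and parity inside QSVT: the standard block-encoding of $\rho_p$ carries a subnormalization factor that must be tracked through the polynomial transformation so that $P$ is effectively applied to the genuine spectrum $\cbra{p_j}$ rather than a rescaled version, and QSVT in its most elementary form only accommodates polynomials of a single parity, so one must either decompose $P$ into its even and odd parts and recombine them via a linear combination of unitaries (incurring only $O\rbra{1}$ overhead) or invoke a mixed-parity generalization. A milder secondary issue is that $\sum_j p_j P\rbra{p_j}$ may be negative when $q-1$ is odd, so amplitude estimation must be paired with a Hadamard-type phase comparison to recover the sign.
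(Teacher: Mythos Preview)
Your proposal is correct and follows essentially the same approach as the paper (block-encode $\rho_p$, apply QSVT with a degree-$d$ approximant of $x^{q-1}$, then estimate $\tr(\rho_p P(\rho_p))$ via a Hadamard test combined with amplitude estimation, exactly as in \cref{algo:tsallis} and \cref{lemma:meta}). Your anticipated obstacles are non-issues here: the density-operator block-encoding of \cref{lemma:block-encoding-density} already has scale factor $1$ (no subnormalization to track), the parity constraint is handled by noting that the best approximation of the even/odd function $x^{q-1}$ may itself be taken even/odd (\cref{sec:even-odd-poly-approx}), and the Hadamard test of \cref{lemma:hadamard-test} outputs $(1+\Real(\tr(A\rho)))/2$ directly, so no separate sign recovery is needed.
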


The technical idea of the proof of \cref{lemma:meta-intro} turns out to be simple, which was employed in the quantum algorithms for estimating fidelity \cite{GP22}, trace distance \cite{WZ24b,LGLW23}, von Neumann entropy \cite{LGLW23,WZ24}, and Tsallis entropy \cite{LW25}. 
The difference is that the proof of \cref{lemma:meta-intro} uses a polynomial approximation of monomials $x^q$ of degree dependent on $q$, whereas prior work focused on the dependence on the precision. 

To complete the proof, we also provide an information-theoretic lower bound on the quantum query complexity of Tsallis entropy estimation. 

\begin{lemma} [Lower bounds on the quantum query complexity of Tsallis entropy estimation, \cref{thm:lb-summary}] \label{lemma:low-main}
    Given a quantum query oracle $\mathcal{O}$ for a probability distribution $p$, for every (non-constant) integer $q \geq 2$, any quantum query algorithm that estimates $\mathrm{H}_q\rbra{p}$ to within additive error $\varepsilon$ uses $\Omega\rbra{\frac{1}{\sqrt{q}\varepsilon}}$ queries to $\mathcal{O}$.
\end{lemma}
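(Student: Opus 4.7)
The plan is to use a two-point distinguishability argument: exhibit two distributions $p^{(0)}$ and $p^{(1)}$ whose Tsallis entropies differ by at least $2\varepsilon$ but whose purified quantum oracles produce states whose Euclidean distance is $O\rbra{\varepsilon\sqrt{q}}$. By the standard hybrid bound for quantum query lower bounds, distinguishing the two oracles with constant advantage then requires $\Omega\rbra{\frac{1}{\sqrt{q}\varepsilon}}$ queries, and such distinguishing is precisely what an $\varepsilon$-accurate estimator of $\mathrm{H}_q$ enables.

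For the construction I would use the binary family $p^\alpha = \rbra{\alpha, 1-\alpha}$ with purification $\ket{\psi_\alpha} = \sqrt{\alpha}\ket{0} + \sqrt{1-\alpha}\ket{1}$, operating at $\alpha_0 = 1/(2q)$; this choice is made so that $\alpha_0(1-\alpha_0) = \Theta(1/q)$ (which eventually yields the $\sqrt{q}$ factor) while the derivative
\begin{equation}
    \partial_\alpha \mathrm{H}_q\rbra{p^\alpha} = \frac{q}{q-1}\sbra*{(1-\alpha)^{q-1} - \alpha^{q-1}}
\end{equation}
remains $\Theta(1)$ at $\alpha_0$ for every integer $q \geq 2$ (using the limits $(1-1/(2q))^{q-1} \to e^{-1/2}$ and $(1/(2q))^{q-1} \to 0$, together with direct evaluation for small $q$). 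Setting $\alpha_1 = \alpha_0 + c\varepsilon$ for a sufficiently large constant $c > 0$ then gives $\abs*{\mathrm{H}_q\rbra{p^{(\alpha_1)}} - \mathrm{H}_q\rbra{p^{(\alpha_0)}}} \geq 2\varepsilon$ by Taylor expansion. A first-order expansion of $\ket{\psi_\alpha}$ in $\alpha$ meanwhile yields
\begin{equation}
    \Abs{\ket{\psi_{\alpha_1}} - \ket{\psi_{\alpha_0}}}^2 = \frac{(c\varepsilon)^2}{4\alpha_0(1-\alpha_0)} + O\rbra{\varepsilon^3} = O\rbra{q\varepsilon^2},
\end{equation}
so $\Abs{\ket{\psi_{\alpha_1}} - \ket{\psi_{\alpha_0}}} = O\rbra{\varepsilon\sqrt{q}}$, exactly the scaling needed.

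The last step is the standard quantum hybrid argument: any $T$-query algorithm in the purified-access model whose oracle prepares $\ket{\psi_{\alpha_b}}$ produces output states $\ket{\phi_b}$ satisfying $\Abs{\ket{\phi_0} - \ket{\phi_1}} = O\rbra{T \cdot \Abs{\ket{\psi_{\alpha_0}} - \ket{\psi_{\alpha_1}}}}$ once the oracles are extended canonically, and since a correct estimator must force this output distance to be $\Omega(1)$, we conclude $T = \Omega\rbra{\frac{1}{\sqrt{q}\varepsilon}}$. The main obstacle I anticipate is justifying this bound rigorously, since the purified-access oracle is only specified on $\ket{0}$: one must either supply a canonical unitary extension (for instance the two-dimensional reflection sending $\ket{0}$ to $\ket{\psi_{\alpha_b}}$) whose operator norm matches $\Abs{\ket{\psi_{\alpha_0}} - \ket{\psi_{\alpha_1}}}$, or appeal to a state-only distinguishability bound such as the Fuchs--van de Graaf inequality applied to $T$ copies of $\ket{\psi_{\alpha_b}}$. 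A minor secondary issue is the regime $\varepsilon = \Omega(1/q)$, in which $\Omega\rbra{1/\rbra{\sqrt{q}\varepsilon}} = O(1)$ and the claim is trivial.
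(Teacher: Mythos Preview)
Your approach is essentially the same as the paper's: the paper also uses the binary hard instance with one probability mass at $\Theta(1/q)$ (specifically $p^\pm_1 = \frac{1}{q}\mp\delta$, versus your $\alpha_0 = \frac{1}{2q}$), verifies that the Tsallis-entropy gap is $\Omega(\varepsilon)$ while the state distance is $O(\sqrt{q}\,\varepsilon)$, and then invokes a distinguishability lower bound. The only substantive difference is that the paper does not redo the hybrid argument but instead appeals to Belovs' result (\cref{thm:qlower-dis-prob}) that distinguishing two distributions in the purified-query model costs $\Omega(1/d_{\mathrm{H}}(p,p'))$ queries; since $d_{\mathrm{H}}(p^{\alpha_0},p^{\alpha_1})$ equals your Euclidean distance up to a constant, this is exactly the bound you want and it cleanly sidesteps your worry about extending the oracle from $\ket{0}$ to a full unitary.

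One caution: your fallback option (b), bounding distinguishability via Fuchs--van de Graaf on $T$ independent copies of $\ket{\psi_{\alpha_b}}$, would only yield a \emph{sample} complexity lower bound, not a \emph{query} lower bound, since a $T$-query algorithm with access to $\mathcal{O}$ and $\mathcal{O}^\dag$ can do strictly more than inspect $T$ product copies. Stick with option (a) (or just cite Belovs), where the adversary chooses close unitary extensions; the reflection you describe indeed has $\Abs{U_0 - U_1} = \Theta(\Abs{\ket{\psi_{\alpha_0}} - \ket{\psi_{\alpha_1}}})$, which feeds directly into the standard hybrid bound.
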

\begin{proof}[Proof sketch of \cref{lemma:low-main}]
    To show a matching lower bound on the quantum query complexity of Tsallis entropy estimation, we consider the problem of distinguishing the two $2$-dimensional probability distributions $p^\pm$:
\begin{equation}
    p^\pm_0 = 1 - \frac{1}{q} \pm \varepsilon, \qquad p^\pm_1 = \frac{1}{q} \mp \varepsilon.
\end{equation}
It can be seen that the difference between the Tsallis entropy of integer order $q \geq 2$ is large enough $\mathrm{H}_q\rbra{p^-} - \mathrm{H}_q\rbra{p^+} \geq \Omega\rbra{\varepsilon}$ so that any algorithm that estimates $\mathrm{H}_q\rbra{p}$ to within additive error $\Theta\rbra{\varepsilon}$ can be used to distinguish the two probability distributions $p^\pm$. 
This hard instance was initially used in \cite{CWYZ25} for considering the sample complexity of distinguishing two quantum states with their spectra being the probability distributions $p^{\pm}$, respectively. 

In our case, to give a lower bound on the quantum query complexity, we use the result in \cite{Bel19} (\cref{thm:qlower-dis-prob}) that distinguishing any two probability distributions $p$ and $p'$ requires quantum query complexity $\Omega\rbra{1/d_{\mathrm{H}}\rbra{p, p'}}$, where $d_{\mathrm{H}}\rbra{p, p'}$ is the Hellinger distance between $p$ and $p'$. 
Then, a lower bound of $\Omega\rbra{\frac{1}{\sqrt{q}\varepsilon}}$ on the quantum query complexity of Tsallis entropy estimation can be obtained by verifying that $d_{\mathrm{H}}\rbra{p^+, p^-} \leq O\rbra{\sqrt{q}\varepsilon}$. 
Similar ideas were used in \cite{LWL24,LWWZ24,Wan24} to show matching quantum lower bounds.
In \cite[Theorem 5.7]{LW25}, they also used the similar idea but with non-optimal probability distributions, thereby only resulting in a lower bound of $O\rbra{\frac{1}{\sqrt{\varepsilon}}}$. 
\end{proof}

By \cref{lemma:meta-intro,lemma:low-main}, we immediately have
\begin{equation}
    O\rbra*{\frac{\widetilde{\deg}_{\rbra{q-1}\varepsilon/2}\rbra{x^{q-1}, \sbra{-1, 1}, \sbra{-1, 1}}}{q\varepsilon}} \geq \Omega\rbra*{\frac{1}{\sqrt{q}\varepsilon}},
\end{equation}
which gives $\widetilde{\deg}_{\rbra{q-1}\varepsilon/2}\rbra{x^{q-1}, \sbra{-1, 1}, \sbra{-1, 1}} \geq \Omega\rbra{\sqrt{q}}$.
With a detailed analysis of the range of $\varepsilon$, we have the following information-theoretic lower bound for approximating monomials, in contrast to the analytic lower bound in \cref{thm:sv14-intro}. 

\begin{theorem} [Information-theoretic lower bounds for approximating monomials, \cref{thm:approx-deg-lb}] \label{thm:xq-lb}
    For every constant $\varepsilon \in \rbra{0, \frac{1}{2e} \approx 0.1839}$ and sufficiently large integer $n$, 
    \begin{equation}
        \widetilde{\deg}_\varepsilon\rbra{x^n, \sbra{-1, 1}, \mathbb{R}} = \Omega\rbra{\sqrt{n}}.
    \end{equation}
\end{theorem}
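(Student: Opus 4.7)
The plan is to convert the quantum query bounds for Tsallis entropy estimation into an approximate-degree lower bound for the monomial $x^n$. Specifically, I would chain \cref{lemma:meta-intro} and \cref{lemma:low-main} directly, then reparametrise the precision and transfer the result from the codomain $\sbra{-1,1}$ to the codomain $\mathbb{R}$.

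First I would set $q = n+1$, so that $x^{q-1} = x^n$. \cref{lemma:meta-intro} then supplies a quantum algorithm estimating $\mathrm{H}_q\rbra{p}$ with query complexity $O\rbra{\widetilde{\deg}_{\rbra{q-1}\varepsilon/2}\rbra{x^n, \sbra{-1,1}, \sbra{-1,1}} / \rbra{q\varepsilon}}$, while \cref{lemma:low-main} lower-bounds the same complexity by $\Omega\rbra{1/\rbra{\sqrt{q}\,\varepsilon}}$. Equating the two expressions yields $\widetilde{\deg}_{\rbra{q-1}\varepsilon/2}\rbra{x^n, \sbra{-1,1}, \sbra{-1,1}} = \Omega\rbra{\sqrt{n}}$. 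To recast this in terms of a constant precision, I would let $\delta = \rbra{q-1}\varepsilon/2$ (equivalently $\varepsilon = 2\delta/n$) and observe that the hard instance of \cref{lemma:low-main} remains a valid pair of probability distributions iff $\varepsilon \leq 1/q$, which translates to a constant upper bound on $\delta$.

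For the codomain conversion from $\sbra{-1,1}$ to $\mathbb{R}$, given any polynomial $p \colon \sbra{-1,1} \to \mathbb{R}$ with $\sup_x \abs{p\rbra{x} - x^n} \leq \delta$, we have $\abs{p\rbra{x}} \leq 1+\delta$ on $\sbra{-1,1}$; dividing by $1+\delta$ gives a polynomial of the same degree taking values in $\sbra{-1,1}$ and approximating $x^n$ to precision $2\delta/\rbra{1+\delta} \leq 2\delta$. This yields the bound $\widetilde{\deg}_{2\delta}\rbra{x^n, \sbra{-1,1}, \sbra{-1,1}} \leq \widetilde{\deg}_{\delta}\rbra{x^n, \sbra{-1,1}, \mathbb{R}}$, so the lower bound on the former transfers to one on the latter with $\delta$ halved.

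The main obstacle is pinning down the explicit admissible range $\rbra{0, 1/\rbra{2e}}$ for the precision in the final statement. This requires a quantitative lower bound on the Tsallis entropy gap $\mathrm{H}_q\rbra{p^-} - \mathrm{H}_q\rbra{p^+}$ of the hard instance, which must be at least a definite constant multiple of $\varepsilon$ in order for the distinguishing argument underlying \cref{lemma:low-main} to give the right constant. Expanding $\rbra{1 - 1/q \pm \varepsilon}^q$ to leading order in $\varepsilon$ and using $\rbra{1 - 1/q}^q \to 1/e$ in the limit $q \to \infty$ produces a gap of order $2\varepsilon/e$, and combined with the factor of $2$ lost in the codomain conversion the constant $1/\rbra{2e}$ emerges. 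Everything else is a routine parameter-chasing exercise once \cref{lemma:meta-intro,lemma:low-main} are in hand.
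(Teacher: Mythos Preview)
Your proposal follows essentially the same route as the paper's proof: chain the quantum upper bound from \cref{lemma:meta-intro} with the quantum lower bound from \cref{lemma:low-main}, reparametrise to a constant precision, and then convert the codomain from $\sbra{-1,1}$ to $\mathbb{R}$ by rescaling (the paper multiplies by $1-\varepsilon$ rather than dividing by $1+\delta$, but the effect is the same factor-of-two loss). The structure and the identification of the hard-instance gap $\mathrm{H}_q(p^-)-\mathrm{H}_q(p^+)\approx 2\varepsilon/e$ are both correct.

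There is one genuine gap in your constant tracking. \cref{lemma:meta-intro} fixes the polynomial-approximation precision at $\rbra{q-1}\varepsilon/2$, which corresponds to the choice $t=\tfrac{1}{2}$ in the more general \cref{lemma:q-complexity-non-uniform}. With that fixed split, the detailed lower bound (\cref{thm:lb}) is valid only for $\varepsilon\in\rbra{0,\gamma/q}$ with $\gamma<1/e$, so after substituting $\delta=\rbra{q-1}\varepsilon/2$ you obtain $\delta<\gamma/2<1/\rbra{2e}$ for the $\sbra{-1,1}$ codomain, and then the codomain conversion halves this again to $1/\rbra{4e}$. Your final paragraph's accounting mentions only the $1/e$ from the entropy gap and the factor of $2$ from the codomain conversion, but not the additional factor of $2$ already baked into $\rbra{q-1}\varepsilon/2$. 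The paper recovers the stated $1/\rbra{2e}$ precisely by invoking \cref{lemma:q-complexity-non-uniform} with a free parameter $t\in\rbra{0,1}$ and letting $t\to 1$, so that no constant is lost at the upper-bound step. If you stick with \cref{lemma:meta-intro} as a black box, your argument proves the theorem only for $\varepsilon\in\rbra{0,1/\rbra{4e}}$.
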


The proof of \cref{thm:xq-lb} (actually a new proof of \cref{thm:sv14-intro}) reveals a strong connection from quantum computing and information theory to approximation theory. 
This proof is based on the discrimination of probability distributions, thereby establishing an information-theoretic lower bound $\Omega\rbra{\sqrt{n}}$ on the approximate degree of the monomial $x^n$. 
Also note that the range of the parameter $\varepsilon$ in \cref{thm:xq-lb} is slightly wider than that in \cref{thm:sv14-intro}.
In the following, we discuss the conceptual and technical significance of this new proof from the perspectives of both information theory and quantum computing. 

\begin{itemize}
    \item \textbf{From information theory to approximation theory.} 
    Prior work in approximation theory mainly focuses on the dependence of approximate degrees on the precision $\varepsilon$ (cf.\ \cite{Tim63,Bus12}). 
    The dependence on other parameters were recently popularized in computer science, e.g., the exponents $\beta$ and $n$ respectively of $e^{\beta x}$ and $x^n$ in \cite{OSV12,SV14,AA22}, where their methods are based on the analytic properties of derivatives and Chebyshev polynomials. 
    In sharp contrast, our new proof is based on the information-theoretic properties of distinguishing probability distributions (\cref{lemma:low-main}).
    This suggests a novel idea for proving lower bounds on approximate degrees from the perspective of information theory. 
    
    \item \textbf{From quantum computing to approximation theory.} 
    Prior work in quantum computing uses polynomial approximations to derive quantum lower bounds \cite{MS24} as well as quantum upper bounds (via quantum singular value transformation \cite{GSLW19}). 
    In sharp contrast, this work does the reverse: show lower bounds on approximate degrees by quantum upper and lower bounds (\cref{lemma:meta-intro,lemma:low-main}). 
    Some readers may wonder if the information-theoretic proof can be made quantum-independent. 
    To our knowledge, quantum computing seems to be necessary in this proof, as there is no known similar tight relation between Tsallis entropy estimation and best polynomial approximation. 
    The (prior best) classical estimator for Tsallis entropy of non-constant order $q \geq 2$ based on best polynomial approximations is due to \cite[Theorem 1]{JVHW17} with sample complexity $O\rbra{\frac{1}{\varepsilon^2}}$, whereas only a lower bound of $\Omega\rbra{\frac{1}{q^3\varepsilon^2}}$ is known in \cite[Section VI-A]{JVHW17}.
\end{itemize}

In addition, the proof of \cref{thm:xq-lb} shows a lower bound on approximate degrees from the quantum upper bounds for entropy estimation, which is similar (in spirit) to the work \cite{BDBGK18} where they
proved classical lower bounds from quantum upper bounds.
In sharp contrast, existing results usually focus on the converse direction: proving quantum lower bounds from lower bounds on approximate degrees, e.g., \cite{BBC+01,Kut05,BKT20,BT22,SY23,MS24}. 

\subsection{An optimal quantum estimator for Tsallis entropy} \label{sec:optimal-intro}

In this section, we will provide a quantum estimator for the Tsallis entropy of integer order $q \geq 2$ of a quantum state $\rho$:
\begin{equation}
    \mathrm{S}_q\rbra{\rho} = \frac{1}{1-q} \rbra[\big]{\tr\rbra{\rho^q} - 1},
\end{equation}
in the purified quantum query access model (see \cref{sec:def-quantum-query-model}), which also implies a quantum estimator for probability distributions. 

In addition to its theoretical importance alone, the estimation of Tsallis entropy of integer order is used as a crucial subroutine in many applications, e.g., computing nonlinear functionals of quantum states \cite{EAO+02,Bru04,BCE+05,vEB12,ZL24,CWYZ25} and entanglement spectroscopy \cite{JST17,SCC19,YS21}. 
Efficient approaches were proposed in the literature. 
The first approach is via the Shift test \cite{EAO+02}, generalizing the SWAP test \cite{BCWdW01}. 
Recently, an approach with constant quantum depth was given in \cite{QKW24} and an approach for the low-rank case was given in \cite{SLLJ24}. 

In the following, we first review the techniques of the prior approach of \cite{EAO+02}, and then present our quantum estimator for Tsallis entropy that is significantly more efficient. 

\paragraph{Warm-up: Tsallis entropy estimation via the Shift test.} \label{sec:warmup}

The prior approach based on the Shift test \cite{EAO+02} and quantum amplitude estimation \cite{BHMT02} turns out to have query complexity $O\rbra{\frac{1}{\varepsilon}}$. 
As a warm-up, we briefly introduce how to estimate the Tsallis entropy (of a quantum state $\rho$) via the Shift test \cite{EAO+02} and then analyze the complexity. 
The Shift test allows us to use $q$ samples of a quantum state $\rho$ to produce a random variable $x$ that relates to $\mathrm{S}_q\rbra{\rho}$. 
Specifically, let $W$ be the main part of the Shift test as shown in \cref{fig:shift}, then 
\begin{equation}
    \Pr\sbra{x = 0} = \tr\rbra*{\Pi_0 W \rbra*{\ketbra{0}{0} \otimes \rho^{\otimes q}} W^\dag} = \frac{1 + \tr\rbra{\rho^q}}{2} = 1 + \frac{1-q}{2} \mathrm{S}_q\rbra{\rho},
\end{equation}
where $\Pi_0 = \ketbra{0}{0} \otimes I$ is a projector onto the subspace with the first qubit being $\ket{0}$ and $\textup{Shift}_q$ is the unitary operator that shifts $q$ quantum states: $\textup{Shift}_q \ket{\phi_1} \ket{\phi_2} \dots \ket{\phi_q} = \ket{\phi_q} \ket{\phi_1} \dots \ket{\phi_{q-1}}$. 

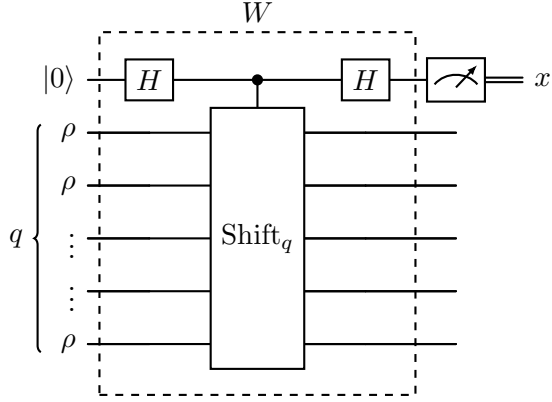
\begin{figure} [!htp]
    \centering
    \begin{quantikz} [row sep = {20pt, between origins}]
        & \lstick{$\ket{0}$} \wireoverride{n} & \gate{H} \gategroup[6,steps=3,style={dashed,inner sep=6pt}]{$W$} & \ctrl{2} & \gate{H} & \meter{} & \setwiretype{c} \rstick{$x$} \\
        \lstick[5]{$q$} & \lstick{$\rho$} \wireoverride{n} & \qw & \gate[5]{\textup{Shift}_q} & \qw & \qw \\
        & \lstick{$\rho$} \wireoverride{n} & \qw & & \qw & \qw \\
        & \lstick{$\vdots$} \wireoverride{n} & \qw & & \qw & \qw \\
        & \lstick{$\vdots$} \wireoverride{n} & \qw & & \qw & \qw \\
        & \lstick{$\rho$} \wireoverride{n} & \qw & & \qw & \qw 
    \end{quantikz}
    \caption{Quantum circuit for the Shift test.}
    \label{fig:shift}
\end{figure}

Given a state-preparation circuit $\mathcal{O}$ that prepares a purification of $\rho$ (defined by \cref{eq:def-O-rho}), we can first prepare $q$ samples of $\rho$ by applying $\mathcal{O}^{\otimes q}$ ($q$ queries to $\mathcal{O}$) and then perform the Shift test.
Let $U = W \cdot \mathcal{O}^{\otimes q}$ be the quantum circuit of this process (for simplicity, here we omit which qubits each unitary operator acts on). 
Then, $U$ has the form:
\begin{equation}
    U \ket{0} = \sqrt{\Pr\sbra{x = 0}} \ket{0}\ket{\varphi_0} + \sqrt{1 - \Pr\sbra{x = 0}} \ket{1}\ket{\varphi_1},
\end{equation}
where $\ket{\varphi_0}$ and $\ket{\varphi_1}$ are two unimportant (normalized) pure states. 
To obtain an estimate of $\mathrm{S}_q\rbra{\rho}$ to within additive error $\varepsilon$, we only have to estimate $\Pr\sbra{x = 0}$ to within additive error $\Theta\rbra{q\varepsilon}$, which can be done via quantum amplitude estimation \cite{BHMT02} using $O\rbra{\frac{1}{q\varepsilon}}$ queries to $U$, thus $q \cdot O\rbra{\frac{1}{q\varepsilon}} = O\rbra{\frac{1}{\varepsilon}}$ queries to $\mathcal{O}$. 
Therefore, we obtain a simple estimator for the Tsallis entropy of integer order $q$ of the quantum state $\rho$. 

\begin{fact} [Folklore quantum estimator for Tsallis entropy of integer order, implied by \cite{EAO+02,BHMT02}] \label{fact:folklore}
    Given a unitary oracle $\mathcal{O}$ that prepares a quantum state $\rho$, for (non-constant) integer $q \geq 2$, $\mathrm{S}_q\rbra{\rho}$ can be estimated to within additive error $\varepsilon$ using $O\rbra{\frac{1}{\varepsilon}}$ queries to $\mathcal{O}$. 
\end{fact}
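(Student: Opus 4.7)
The plan is to formalize the warm-up sketch: the Shift test of \cite{EAO+02} reduces Tsallis entropy estimation to amplitude estimation on a single flag qubit, and quantum amplitude estimation \cite{BHMT02} takes care of the rest.

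First, I would set up the key identity. Let $W$ denote the Shift-test circuit of \cref{fig:shift}. Applied to $\ket{0}\otimes\rho^{\otimes q}$ and measured on the flag qubit, it produces outcome $0$ with probability
\begin{equation}
    p_\star \;:=\; \tr\rbra*{\Pi_0\, W\rbra*{\ketbra{0}{0}\otimes\rho^{\otimes q}} W^\dag} \;=\; \frac{1+\tr\rbra{\rho^q}}{2} \;=\; 1 + \frac{1-q}{2}\mathrm{S}_q\rbra{\rho},
\end{equation}
so $\mathrm{S}_q\rbra{\rho} = \frac{2}{1-q}\rbra{p_\star - 1}$. The Lipschitz constant of this affine map is $\frac{2}{q-1} \leq 2$ for every integer $q \geq 2$, so an additive-$\delta$ estimate of $p_\star$ with $\delta = (q-1)\varepsilon/2 = \Theta\rbra{q\varepsilon}$ yields an additive-$\varepsilon$ estimate of $\mathrm{S}_q\rbra{\rho}$.

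Next, I would assemble the amplitude oracle. Define $U := W\cdot \mathcal{O}^{\otimes q}$, which, on input $\ket{0}$ on the work registers, produces the state $U\ket{0} = \sqrt{p_\star}\ket{0}\ket{\varphi_0} + \sqrt{1-p_\star}\ket{1}\ket{\varphi_1}$ for some normalized states $\ket{\varphi_b}$. A single use of $U$ costs exactly $q$ queries to $\mathcal{O}$. Feeding $U$ into quantum amplitude estimation \cite{BHMT02} with the projector $\ketbra{0}{0}\otimes I$ returns, with constant success probability, an estimate of $p_\star$ to additive error $\delta$ using $O\rbra{1/\delta} = O\rbra{1/(q\varepsilon)}$ calls to $U$.

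Multiplying the two costs, the total query complexity to $\mathcal{O}$ is $q \cdot O\rbra{1/(q\varepsilon)} = O\rbra{1/\varepsilon}$, as claimed; boosting the success probability to any desired constant via a median-of-$O(1)$-runs argument preserves this bound. Since both the Shift test and amplitude estimation are standard black boxes, I do not anticipate any real technical obstacle; the only point that deserves explicit verification is that the prefactor $\frac{2}{q-1}$ stays bounded by a constant for all $q \geq 2$, ensuring that no stray factor of $q$ leaks into the final complexity.
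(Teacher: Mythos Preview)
Your proposal is correct and follows essentially the same route as the paper's own warm-up argument preceding \cref{fact:folklore}: Shift test to encode $\tr\rbra{\rho^q}$ in a flag-qubit probability, then quantum amplitude estimation with precision $\Theta\rbra{q\varepsilon}$, yielding $q\cdot O\rbra{1/(q\varepsilon)} = O\rbra{1/\varepsilon}$ queries. Your closing remark is slightly off in emphasis---the crucial fact is not merely that $\frac{2}{q-1}$ is bounded but that it is $O\rbra{1/q}$, which is exactly what lets you relax the amplitude-estimation precision to $\Theta\rbra{q\varepsilon}$ and cancel the $q$ from $\mathcal{O}^{\otimes q}$---but your computation of $\delta = \rbra{q-1}\varepsilon/2$ already uses this correctly.
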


The quantum estimator for Tsallis entropy in \cref{fact:folklore} is not known to be optimal, and only a lower bound of $\Omega\rbra{\frac{1}{\sqrt{\varepsilon}}}$ was given in \cite{LW25} for every constant $q > 1$. 
Given its simplicity and conciseness, one might think that this folklore approach should be optimal. 
Surprisingly, however, it turns out that we can do even better. 

\paragraph{Optimal quantum Tsallis entropy estimator of integer order.}

By extending \cref{lemma:meta-intro}, we obtain a (uniform) quantum algorithm for estimating the Tsallis entropy of integer order of a quantum state in the purified quantum query access model (\cref{sec:def-quantum-query-model}). 

\begin{theorem} [Optimal quantum estimator for Tsallis entropy of integer order of quantum states, \cref{thm:q-tsallis-estimator}] \label{thm:up-main}
    Given a state-preparation circuit $\mathcal{O}$ of a quantum state $\rho$, for every (non-constant) integer $q \geq 2$, the Tsallis entropy $\mathrm{S}_q\rbra{\rho}$ can be estimated to within additive error $\varepsilon$ using ${O}\rbra{\frac{\sqrt{\log\rbra{1/q\varepsilon}}}{\sqrt{q}\varepsilon}}$ queries to $\mathcal{O}$.
\end{theorem}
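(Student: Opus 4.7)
The plan is to extend the probability-distribution algorithm of \cref{lemma:meta-intro} to quantum states by replacing the sample oracle with a block-encoding of $\rho$ derived from the state-preparation circuit $\mathcal{O}$, and then to substitute a concrete, efficiently computable near-best approximation polynomial for the monomial $x^{q-1}$.

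First, I would fix the polynomial. Invoking the (constructive) upper bounds for approximating monomials (as in Newman--Rivlin and Sachdeva--Vishnoi), there is an efficiently computable polynomial $p \colon \sbra{-1,1} \to \sbra{-1,1}$ of degree $d = O\rbra{\sqrt{q \log\rbra{1/\rbra{q\varepsilon}}}}$ such that $\sup_{x \in \sbra{-1,1}} \abs{x^{q-1} - p\rbra{x}} \leq \Theta\rbra{q\varepsilon}$. I would also arrange $p$ to have the same parity as $q-1$ so that QSVT can apply it directly. Then $\tr\rbra{\rho \cdot p\rbra{\rho}}$ approximates $\tr\rbra{\rho^q}$ to additive error $\Theta\rbra{q\varepsilon}$, and consequently $\frac{1}{1-q}\rbra{\tr\rbra{\rho \cdot p\rbra{\rho}} - 1}$ approximates $\mathrm{S}_q\rbra{\rho}$ to additive error $\Theta\rbra{\varepsilon}$.

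Second, I would implement the quantity $\tr\rbra{\rho \cdot p\rbra{\rho}}$ as a measurable amplitude. From $\mathcal{O}$ one obtains a block-encoding of $\rho$ with subnormalization $1$ in the standard way. Applying quantum singular value transformation with the polynomial $p$ yields a block-encoding of $p\rbra{\rho}$ using $O\rbra{d}$ queries to $\mathcal{O}$. Composing this with one further application of $\mathcal{O}$ (which prepares a purification of $\rho$ and therefore realizes $\sqrt{\rho}$ as a block-encoding) and measuring the appropriate ancilla projector, I obtain a unitary $U$ whose zero-outcome probability equals an affine transform of $\tr\rbra{\rho \cdot p\rbra{\rho}}$ (this is exactly the quantum-state analogue of the oracle $U$ used in the proof sketch of \cref{lemma:meta-intro}). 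Each call to $U$ costs $O\rbra{d}$ queries to $\mathcal{O}$.

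Third, I would invoke quantum amplitude estimation on $U$ to estimate the target probability to additive precision $\Theta\rbra{q\varepsilon}$. This costs $O\rbra{1/\rbra{q\varepsilon}}$ queries to $U$, and hence a total of
\begin{equation}
    O\rbra*{\frac{d}{q\varepsilon}} = O\rbra*{\frac{\sqrt{q\log\rbra{1/\rbra{q\varepsilon}}}}{q\varepsilon}} = O\rbra*{\frac{\sqrt{\log\rbra{1/\rbra{q\varepsilon}}}}{\sqrt{q}\,\varepsilon}}
\end{equation}
queries to $\mathcal{O}$, matching the claimed bound.

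The main obstacle will be the quantum-state implementation of $\tr\rbra{\rho \cdot p\rbra{\rho}}$ as a \emph{single} measurable amplitude with the correct normalization: the QSVT step naturally outputs a block-encoding of $p\rbra{\rho}$, and one must combine it with a second block-encoding of $\sqrt{\rho}$ coming from $\mathcal{O}$, while tracking parity and subnormalization carefully so that the trace is read off without introducing extra dimension-dependent factors. One must also verify that the two separate sources of error, namely the $\Theta\rbra{q\varepsilon}$ polynomial-approximation error and the $\Theta\rbra{q\varepsilon}$ amplitude-estimation error, both translate after the $1/\rbra{1-q}$ rescaling to additive error $\Theta\rbra{\varepsilon}$ in $\mathrm{S}_q\rbra{\rho}$, and that success probability can be boosted to a constant by the usual median trick at only a $\polylog$ overhead absorbed into the $\widetilde{O}$.
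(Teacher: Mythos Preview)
Your proposal is correct and follows essentially the same approach as the paper: block-encode $\rho$ from $\mathcal{O}$, apply QSVT with the Sachdeva--Vishnoi polynomial of degree $O\rbra{\sqrt{q\log\rbra{1/q\varepsilon}}}$ to obtain a block-encoding of $p(\rho)$, read off $\tr\rbra{\rho\,p(\rho)}$ via a Hadamard-test-style construction on the state $\rho$, and finish with quantum amplitude estimation to precision $\Theta\rbra{q\varepsilon}$. The only cosmetic difference is that the paper phrases the trace-extraction step as the Hadamard test for block-encodings (\cref{lemma:hadamard-test}) rather than as composing with a $\sqrt{\rho}$ block-encoding, but these are equivalent.
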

\begin{proof}[Proof sketch of \cref{thm:up-main}]
    The proof can be done with \cref{lemma:meta-intro} by noting that 
    \begin{equation}
        \widetilde{\deg}_{\rbra{q-1}\varepsilon/2}\rbra{x^{q-1}, \sbra{-1, 1}, \sbra{-1, 1}} = O\rbra*{\sqrt{q\log\rbra*{\frac{1}{q\varepsilon}}}},
    \end{equation}
    using the (efficiently-computable) approximation polynomial of the monomial $x^q$ in \cite{SV14} (\cref{thm:poly-approx-monomial}).\footnote{The quantum algorithm can be made uniform with some subtle modifications (see \cref{thm:q-tsallis-estimator} for the details).}
\end{proof}

It is reasonable to require in \cref{thm:up-main} that $\varepsilon \in (0, \frac{1}{q-1}]$ as $0 \leq \mathrm{S}_q\rbra{\rho} \leq \frac{1}{q-1}$. 
The folklore approach mentioned in \cref{fact:folklore} has quantum query complexity $O\rbra{\frac{1}{\varepsilon}}$. 
In comparison, our quantum estimator for Tsallis entropy in \cref{thm:up-main} further improves the query complexity to $\widetilde{O}\rbra{\frac{1}{\sqrt{q}\varepsilon}}$. 
For example, when $\varepsilon = \frac{1}{100q}$, the query complexity in \cref{fact:folklore} is $\Theta\rbra{q}$ while \cref{thm:up-main} gives $\Theta\rbra{\sqrt{q}}$, which yields a notable speedup. 

As a special case, \cref{thm:up-main} also implies a quantum estimator for the Tsallis entropy of probability distributions, as a probability distribution can be understood as a quantum state with the computational basis being its eigenvectors.

\begin{corollary} [Optimal quantum estimator for Tsallis entropy of integer order of probability distributions] \label{coro:up-main}
    Given a quantum query oracle $\mathcal{O}$ for a probability distribution $p$, for every (non-constant) integer $q \geq 2$, the Tsallis entropy $\mathrm{H}_q\rbra{p}$ can be estimated to within additive error $\varepsilon$ using ${O}\rbra{\frac{\sqrt{\log\rbra{1/q\varepsilon}}}{\sqrt{q}\varepsilon}}$ queries to $\mathcal{O}$.
\end{corollary}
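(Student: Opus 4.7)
The plan is to deduce \cref{coro:up-main} as an immediate specialization of \cref{thm:up-main} by identifying a probability distribution with a diagonal quantum state. The key observation is that classical distributions embed into the set of quantum states as those diagonal in a fixed basis, and the classical Tsallis entropy of the distribution then coincides with the quantum Tsallis entropy of the corresponding state. If this identification also carries through at the level of the query oracle, \cref{thm:up-main} applies as a black box.

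The first step would be to associate to the $N$-dimensional probability distribution $p$ the diagonal state $\rho_p = \sum_{j=0}^{N-1} p_j \ketbra{j}{j}$, whose spectrum is exactly $p$. From this one reads off $\tr\rbra{\rho_p^q} = \sum_{j} p_j^q$, hence $\mathrm{S}_q\rbra{\rho_p} = \mathrm{H}_q\rbra{p}$ directly from the two definitions, so that estimating the classical Tsallis entropy reduces to estimating the quantum one for this particular state.

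The second step would be to check that the purified quantum query oracle $\mathcal{O}$ for $p$ from \cref{sec:def-quantum-query-model} simultaneously serves as a state-preparation circuit for $\rho_p$ in the sense required by \cref{thm:up-main}: $\mathcal{O}\ket{0}$ has the form $\sum_j \sqrt{p_j}\ket{j}\ket{\psi_j}$ for some ancilla states $\ket{\psi_j}$, and tracing out the ancilla register recovers $\rho_p$. Invoking \cref{thm:up-main} on $\rho_p$ then yields an estimate of $\mathrm{H}_q\rbra{p}$ to additive error $\varepsilon$ using $O\rbra{\frac{\sqrt{\log\rbra{1/q\varepsilon}}}{\sqrt{q}\varepsilon}}$ queries to $\mathcal{O}$, matching the claim.

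There is no substantive obstacle in this reduction; the only point requiring any care is confirming that the sampling-based quantum query model used for distributions in this paper matches the purified state-preparation model used for quantum states, which is the standard correspondence and should be essentially built into the definitions in \cref{sec:def-quantum-query-model}. Accordingly, I would expect the proof in the paper to consist of little more than a one-line invocation of \cref{thm:up-main} applied to $\rho_p$.
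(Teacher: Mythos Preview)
Your proposal is correct and matches the paper's approach exactly: the paper states \cref{coro:up-main} as an immediate special case of \cref{thm:up-main}, noting only that a probability distribution can be viewed as a quantum state diagonal in the computational basis (so that $\mathrm{H}_q(p)=\mathrm{S}_q(\rho_p)$ and the oracle for $p$ is by \cref{def:oracle-prob} precisely a state-preparation oracle for $\rho_p$). Your anticipated one-line invocation is indeed all the paper provides.
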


The lower bound $\Omega\rbra{\frac{1}{\sqrt{q}\varepsilon}}$ in \cref{lemma:low-main} reveals that our quantum estimators in \cref{thm:up-main,coro:up-main} are optimal only up to a factor of $O\rbra{\sqrt{\log\rbra{1/q\varepsilon}}} \leq O\rbra{\sqrt{\log\rbra{1/\varepsilon}}}$, where the factor is independent of $q$. 
This means that the dependence on $q$ in \cref{thm:up-main} is optimal only up to a constant factor. 

To the best of our knowledge, this is the \textit{first} quantum entropy estimators that achieve \textit{optimal} query complexity for all parameters simultaneously. 
Prior to this work, the only quantum entropy estimator with query complexity close to optimal in the dependence on the dimension $N$ is the ones for Shannon entropy, where the upper bound is $\widetilde{O}\rbra{\frac{\sqrt{N}}{\varepsilon^2}}$ due to \cite[Theorem 6]{LW19} and later improved to $\widetilde{O}\rbra{\frac{\sqrt{N}}{\varepsilon^{1.5}}}$ in \cite[Theorem 12]{GL20} and the lower bound is $\widetilde{\Omega}\rbra{\sqrt{N}}$ given in \cite{BKT20}, whereas the dependence on $\varepsilon$ remains not optimal yet. 
For the readers who are interested in entropy estimation, we provide an extensive review of both classical and quantum entropy estimators in \cref{sec:review-entropy-estimation}.

\subsection{Related work}

\paragraph{Concurrent work.}
Very recently, the concurrent work \cite{ZWZY25} provided a quantum algorithm for estimating $\tr\rbra{\mathcal{O} \rho^q}$ to within additive error $\varepsilon$ with query complexity $O\rbra{\sqrt{q\log\rbra{\Abs{\mathcal{O}}/\varepsilon}}\Abs{\mathcal{O}}/\varepsilon}$, where $\mathcal{O}$ is an observable, which implies the same query complexity stated in \cref{thm:up-main} by taking $\mathcal{O} = I$. 

\paragraph{Approximation of monomials.}
In addition to polynomial approximations of monomials, rational approximations of monomials have also been investigated in the literature \cite{NR76b,Gan79,New79,NR80,Sta03,NT18}, as well as its approximations by reciprocals of polynomials \cite{Bor81}. 
See \cite{Red87} for more details. 

\paragraph{Polynomial approximation for computational complexity.}
Polynomial approximations of the exponential functions $e^x$ and $e^{-x}$ and their applications were studied in \cite{HL97,OSV12,SV14,AA22}. 
Polynomial approximations of the reciprocal function $\frac{1}{x}$ were studied in \cite{CKS17} for solving systems of linear equations, improving the previous quantum algorithms in \cite{HHL09,Amb12}. 
Polynomial approximations of the sign function $\sgn\rbra{x}$ were studied in \cite{EY07}, and were later used in \cite{LC17} for Hamiltonian simulation, improving the prior works \cite{ATS03,BACS07,CW12,BCC+14,BCC+15,BCK15}. 
Polynomial approximations of the (fractional) power functions $x^{\pm c}$ were studied in \cite{CGJ19,Gil19} for implementing block-encodings of the powers of Hamiltonians.
Polynomial approximations of (piece-wise) smooth functions were studied in \cite{vAGGdW20,GSLW19} with applications in efficiently implementing quantum singular value transformation. 

\subsection{Discussion}

In this paper, we presented a quantum query algorithm for estimating the Tsallis entropy of integer order $q \geq 2$ of both probability distributions and quantum states, which, to the best of our knowledge, is the first quantum entropy estimator with optimal query complexity only up to polylogarithmic factors for all parameters simultaneously. 
This discovery further leads to a conceptually new proof of the lower bound on the approximate degree of monomials from the perspective of quantum computing and information theory. 

We therefore raise some open questions for future research. 

\begin{itemize}
    \item Can we generalize our method to obtain other quantum entropy estimators with optimal query complexity? 
    The most possible candidate can be the estimation of Shannon entropy, as the optimal dependence on $\widetilde{\Theta}\rbra{\sqrt{N}}$ is known due to \cite{LW19,GL20,BKT20}. 
    Can we show a matching lower bound on the quantum query complexity in $\varepsilon$ for Shannon entropy estimation? 
    \item The proof of the lower bound on the approximate degree of monomials given in this paper suggests a new approach to approximation theory from information theory. 
    Can we find other such connections to or even prove new bounds for approximation theory from the perspective of information theory?
    \item In \cite{LW25}, they showed that estimating the Tsallis entropy of integer order $q = 2$ of quantum states is $\mathsf{BQP}$-complete, but it is still unknown for the general case of integer $q \geq 3$. 
    An interesting problem is under what condition of $q$, estimating the Tsallis entropy of order $q$ is $\mathsf{BQP}$-complete?
    Also, can we show the computational hardness of R\'enyi entropy estimation?
\end{itemize}

\section{Review of Entropy Estimation} \label{sec:review-entropy-estimation}

In this section, we provide an extensive review of entropy estimation. 
Classical entropy estimators are collected in \cref{sec:classical} while quantum entropy estimators are collected in \cref{sec:quantum}.
Especially in \cref{sec:q-quantum-entropy}, a summary of the quantum query complexity of entropy estimation is given, confirming that there have been no quantum estimators that achieve a query complexity optimal for all parameters simultaneously, prior to our results (\cref{thm:up-main,coro:up-main}). 

\subsection{Classical entropy estimation} \label{sec:classical}

\paragraph{Shannon entropy.}
For a (discrete) probability distribution $p$ with the sample space of size $N$, the Shannon entropy \cite{Sha48a,Sha48b} of $p$ is defined by
\begin{equation}
    \mathrm{H}\rbra{p} = -\sum_{j=0}^{N-1} p_j \ln\rbra{p_j}.
\end{equation}
The estimation of Shannon entropy has a wide range of applications, e.g., measuring the variability of genetic sequences \cite{SEM91}, analyzing neural data \cite{Pan03,NBvS04}, and detecting network anomalies \cite{LSO+06}. 
From the perspective of algorithms and complexity, a series of work \cite{Pan04,VV11a,VV11b,VV17,JVHW15,JVHW17,WY16} investigated the sample complexity of Shannon entropy estimation, i.e., how many samples drawn from the probability distribution $p$ are sufficient and necessary to estimate the Shannon entropy to within additive error $\varepsilon$ (with high probability, say $\geq \frac{2}{3}$), which was finally shown to be $\Theta\rbra{\frac{N}{\varepsilon\log\rbra{N}}+\frac{\log^2\rbra{N}}{\varepsilon^2}}$ in \cite{JVHW15,WY16}.
In addition, the sample complexity of Shannon entropy estimation with multiplicative error $\gamma$ is known to be $\widetilde{O}\rbra{N^{\rbra{1+\eta}/\gamma^2}}$ provided that $\mathrm{H}\rbra{p} = \Omega\rbra{\frac{\gamma}{\eta}}$ \cite{BDKR05}, and an almost tight lower bound of $\Omega\rbra{N^{\rbra{1-o\rbra{1}}/\gamma^2}}$ was shown in \cite{Val11}. 

\paragraph{Tsallis entropy and R\'enyi entropy.}
A popular generalization of the Shannon entropy is the Tsallis entropy \cite{Tsa88} of order $q$, defined for $q > 0$ and $q \neq 1$ by
\begin{equation}
    \mathrm{H}_q\rbra{p} = \frac{1}{1-q} \rbra*{ \sum_{j=0}^{N-1} p_j^q - 1 },
\end{equation}
which reproduces the Shannon entropy in the limiting case when $q$ approaches to $1$, i.e.,
\begin{equation}
    \mathrm{H}\rbra{p} = \lim_{q \to 1} \mathrm{H}_q\rbra{p}. 
\end{equation}
For the special case of $q = 2$, $\mathrm{H}_2\rbra{p}$ is also known as the Gini impurity \cite{Gin12}, which is now widely used in machine learning \cite{BFOS84}. 
For the general case, the Tsallis entropy has been found to be useful particularly in nonextensive statistical mechanics (cf.\ \cite{LT98,Tsa01,AdSM+10}). 
Tsallis entropy is also closely related to R\'enyi entropy \cite{Ren61}:
\begin{equation}
    \mathrm{H}_q^{\mathrm{R}}\rbra{p} = \frac{1}{1-q} \ln\rbra*{ \sum_{j=0}^{N-1} p_j^q } = \frac{1}{1-q}\ln\rbra[\big]{\rbra{1-q}\mathrm{H}_q\rbra{p} + 1}.
\end{equation}
The sample complexity of Tsallis entropy estimation was investigated in \cite{AK01,JVHW15,JVHW17}, where they divided the cases into several ranges of $q$ with almost tight bounds:
\begin{itemize}
    \item For every constant $0 < q \leq \frac{1}{2}$, an upper bound of $O\rbra{\frac{N^{1/q}}{\varepsilon^{1/q}\log\rbra{N/\varepsilon}}}$ is given, together with a matching lower bound of $\Omega\rbra{\frac{N^{1/q}}{\log\rbra{N}}}$ for constant $\varepsilon$.
    \item For every constant $\frac{1}{2} < q < 1$, an upper bound of $O\rbra{\frac{N^{1/q}}{\varepsilon^{1/q}\log\rbra{N/\varepsilon}}+\frac{N^{2\rbra{1-q}}}{\varepsilon^2}}$ is given when $\varepsilon \geq \frac{1}{\poly\rbra{N}}$, together with a matching lower bound of $\Omega\rbra{\frac{N^{1/q}}{\log\rbra{N}}}$ for constant $\varepsilon$.
    \item For every constant $1 < q < \frac{3}{2}$, the sample complexity is $\Theta\rbra{\frac{1}{\varepsilon^{1/\rbra{q-1}}\log\rbra{1/\varepsilon}}}$. 
    \item For every constant $q \geq \frac{3}{2}$, the sample complexity is $\Theta\rbra{\frac{1}{\varepsilon^2}}$. 
\end{itemize}
It is worth noting that for every constant $q > 1$, the sample complexity of estimating the Tsallis entropy of order $q$ is independent of the dimension $N$, which exhibits an exponential advantage in sharp contrast to the estimation of Shannon entropy and R\'enyi entropy. 
In comparison, the sample complexity of R\'enyi entropy estimation was investigated in \cite{AOST17}, where they showed that:
\begin{itemize}
    \item For constant integer $q \geq 2$, the sample complexity is $\Theta\rbra{\frac{N^{\rbra{q-1}/q}}{\varepsilon^2}}$. 
    \item For constant non-integer $q > 1$, an upper bound of $O\rbra{\frac{N}{\log\rbra{N}}}$ is given for constant additive error, together with an almost matching lower bound of $\Omega\rbra{N^{1-o\rbra{1}}}$.\footnote{We avoid discussing the dependence on the additive error $\varepsilon$ for R\'enyi entropy estimation of non-integer order $q$ as a suspicious error about the dependence on $\varepsilon$ in \cite{AOST17} was pointed out in \cite[Section V-C]{WZL24}. Nevertheless, it was also pointed out in \cite{WZL24} that for every constant $0 < q < 1$, the sample complexity of R\'enyi entropy estimation of order $q$ is upper bounded by the sample complexity of Tsallis entropy estimation of order $q$ in \cite{JVHW15}.}
\end{itemize}

\subsection{Quantum entropy estimation} \label{sec:quantum}

With regard to its importance, entropy is also an important quantity in quantum information. 
For a mixed quantum state $\rho$, the von Neumann entropy \cite{vN27,vN32}, which generalizes the Shannon entropy, is defined by 
\begin{equation}
    \mathrm{S}\rbra{\rho} = -\tr\rbra{\rho\ln\rbra{\rho}}. 
\end{equation}
Similarly, the quantum Tsallis entropy is defined by
\begin{equation}
    \mathrm{S}_q\rbra{\rho} = \frac{1}{1-q} \rbra[\big]{\tr\rbra{\rho^q} - 1},
\end{equation}
and the quantum R\'enyi entropy is defined by
\begin{equation}
    \mathrm{S}_q^{\mathrm{R}}\rbra{\rho} = \frac{1}{1-q} \ln\rbra[\big]{\tr\rbra{\rho^q}}.
\end{equation}
The classical entropy can be viewed as a special case of the quantum entropy by representing the discrete probability distribution $p$ as a mixed quantum state
\begin{equation}
    \rho_p = \sum_{j=0}^{N-1} p_j \ketbra{j}{j},
\end{equation}
where $\cbra{\ket{j}}$ is the computational basis and measuring $\rho$ in the computational basis will give the outcome $j$ with probability $p_j$. 
Then, the entropy of the probability distribution $p$ is equal to the entropy of the corresponding quantum state $\rho_p$: $\mathrm{H}\rbra{p} = \mathrm{S}\rbra{\rho_p}$, $\mathrm{H}_q\rbra{p} = \mathrm{S}_q\rbra{\rho_p}$, and $\mathrm{H}_q^{\mathrm{R}}\rbra{p} = \mathrm{S}_q^{\mathrm{R}}\rbra{\rho_p}$. 
In addition to the aforementioned applications, the entropy estimation of quantum states have applications in quantifying quantum entanglement \cite{FIK08,HGKM10,IMP+15}, preparing quantum Gibbs states \cite{WH19,CLW20,WLW21}, and learning Hamiltonians \cite{AAKS21}. 

\subsubsection{Sample complexity}

Given independent and identical samples of an $N$-dimensional quantum state $\rho$, the sample complexity of estimating the entropy of $\rho$ is measured by the number of independent and identical samples of $\rho$, which is a direct analog of the sample complexity of classical entropy estimation. 
This quantum input model was adopted in, for example, quantum state tomography \cite{HHJ+17,OW16,OW17}.

\paragraph{Von Neumann entropy.}
It was shown in \cite{AISW20} that the von Neumann entropy of a quantum state $\rho$ can be estimated using $O\rbra{\frac{N^2}{\varepsilon^2}}$ samples of $\rho$, which was later improved to $O\rbra{\frac{N^2}{\varepsilon}+\frac{\log^2\rbra{N}}{\varepsilon^2}}$ in \cite{BMW16} (also noted in \cite[Theorem 1.27]{OW17}), whereas the current best lower bound is $\Omega\rbra{\frac{N}{\varepsilon}}$ due to \cite{WZ24} (which is based on the lower bound for the mixedness testing of quantum states given in \cite{OW21}).
In addition, a time-efficient quantum algorithm with sample and time complexity $\widetilde{O}\rbra{\frac{N^2}{\varepsilon^5}}$ was provided in \cite{WZ24}.

\paragraph{R\'enyi entropy.}
The sample complexity of R\'enyi entropy estimation of order $q$ of quantum states was shown to be $O\rbra{\frac{N^{2/q}}{\varepsilon^{2/q}}}$ for every constant $0 < q < 1$ and $O\rbra{\frac{N^2}{\varepsilon^2}}$ for every constant $q > 1$ in \cite{AISW20}, whereas the current best lower bound is $\Omega\rbra{\frac{N}{\varepsilon}+\frac{N^{1/q-1}}{\varepsilon^{1/q}}}$ shown in \cite{WZ24}. 
In addition, a time-efficient quantum algorithm was given in \cite{WZ24} with sample and time complexity $\widetilde{O}\rbra{\frac{N^{4/q-2}}{\varepsilon^{1+4/q}}}$ for every constant $0 < q < 1$ and $\widetilde{O}\rbra{\frac{N^{4-2/q}}{\varepsilon^{3+2/q}}}$ for every constant $q > 1$. 

\paragraph{Tsallis entropy.}
For constant integer $q \geq 2$, the Tsallis entropy of order $q$ of quantum states can be estimated with sample complexity $O\rbra{\frac{1}{\varepsilon^2}}$ via the SWAP test \cite{BCWdW01} (for $q = 2$) and the Shift test \cite{EAO+02} (for $q \geq 2$).
For every constant $q > 1$, the sample complexity was shown to be $\widetilde{O}\rbra{\frac{1}{\varepsilon^{3+2/\rbra{q-1}}}}$ in \cite{LW25}, which is independent of $N$, achieving an exponential improvement over the sample complexity implied by the prior works \cite{AISW20,WGL+24,WZL24,WZ24}.
This was later improved to $\widetilde{O}\rbra{\frac{1}{\varepsilon^{\max\cbra{2/(q-1), 2}}}}$ in \cite{CW25}. 
In particular, for constant $q \geq 2$, the sample complexity is known to be $\widetilde{\Theta}\rbra{1/\varepsilon^2}$ \cite{CW25}. 
Specifically, a matching lower bound of $\Omega\rbra{\frac{1}{\varepsilon^2}}$ for $q = 2$ was given in \cite[Theorem 5]{CWLY23} and \cite[Lemma 3]{GHYZ24}, and a lower bound of $\Omega\rbra{\frac{1}{\varepsilon^{\max\cbra{1/(q-1), 2}}}}$ for constant $q > 1$ was given in \cite{CW25}. 
For every constant $0 < q < 1$, see \cref{sec:q-quantum-entropy} for further discussions, as we are only aware of a query complexity result in \cite{WGL+24}.

\subsubsection{Query complexity} \label{sec:q-quantum-entropy}

To further exploit the power of quantum computing, entropy estimation of both probability distributions and quantum states has also been considered in the quantum query input model, in the hope of further speedups. 

\paragraph{Purified quantum query access.}
The now standard quantum query model for probability distributions, called the \textit{purified quantum query access} model (see \cref{sec:def-quantum-query-model} for the formal definition), assumes a quantum unitary oracle $\mathcal{O}$ for the probability distribution $p$ such that
\begin{equation} \label{eq:def-O}
    \mathcal{O}\ket{0}_{\mathsf{A}}\ket{0}_{\mathsf{B}} = \sum_{j=0}^{N-1} \sqrt{p_j} \ket{j}_{\mathsf{A}} \ket{\phi_j}_{\mathsf{B}},
\end{equation}
where $\ket{\phi_j}$ is a (normalized) pure quantum state.\footnote{There are also other quantum models for probability distributions, which are less general than the purified quantum query access model.
For comparison and relationship of other models with the purified quantum query access model, the readers can refer to \cite{GL20,Bel19}.} 
This model was adopted in \cite{HM19,GL20,Bel19,GHS21,LWL24,WZL24,CRO24} for testing properties of probability distributions.
This is a natural quantum generalization, as one can draw a sample from the probability distribution $p$ by measuring the subsystem $\mathsf{A}$ of the quantum state in \cref{eq:def-O} prepared by $\mathcal{O}$ in the computational basis.
In addition, this quantum generalization is also reasonable, as one can implement such a quantum unitary oracle $\mathcal{O}$ for a probability distribution $p$ once the ``source code'' of the generator of $p$ is known. 
The query complexity of a quantum algorithm with access to the quantum unitary oracle $\mathcal{O}$ is measured by the number of queries to (controlled-)$\mathcal{O}$ and its inverse. For convenience, a query to $\mathcal{O}$ can refer to a query to $\mathcal{O}$, $\mathcal{O}^\dag$, controlled-$\mathcal{O}$, or controlled-$\mathcal{O}^\dag$. 

The purified quantum query access model can also be extended to testing properties of quantum states. 
For an $N$-dimensional quantum state $\rho$ with the spectrum decomposition
\begin{equation}
    \rho = \sum_{j=0}^{N-1} p_j \ketbra{\psi_j}{\psi_j},
\end{equation}
a quantum unitary oracle $\mathcal{O}$, also understood as a state-preparation circuit, for the quantum state $\rho$ prepares a purification of $\rho$, i.e.,
\begin{equation} \label{eq:def-O-rho}
    \mathcal{O}\ket{0}_{\mathsf{A}}\ket{0}_{\mathsf{B}} = \sum_{j=0}^{N-1} \sqrt{p_j} \ket{\psi_j}_{\mathsf{A}} \ket{\phi_j}_{\mathsf{B}} \coloneqq \ket{\rho}_{\mathsf{AB}},
\end{equation}
and a sample of $\rho$ can be obtained by tracing out the subsystem $\mathsf{B}$: $\rho = \tr_{\mathsf{B}}\rbra{\ketbra{\rho}{\rho}_{\mathsf{AB}}}$. 
In this way, a probability distribution $p$ can be viewed as a quantum state (by taking $\ket{\psi_j} = \ket{j}$):
\begin{equation} \label{eq:rho-p}
    \rho = \sum_{j=0}^{N-1} p_j \ketbra{j}{j},
\end{equation}
where $\cbra{\ket{j}}$ is the computational basis and measuring $\rho$ in the computational basis will give the outcome $j$ with probability $p_j$. 
This model was adopted in quantum computational complexity \cite{Wat02,BASTS10,RASW23,LGLW23} and in quantum algorithms \cite{GL20,GLM+22,WZC+23,GP22,WZ24b}. 

\paragraph{Shannon entropy and von Neumann entropy.}
Estimating the Shannon entropy with a quadratic quantum speedup in $N$ over classical algorithms was first noted in \cite{BHH11}. 
Later in \cite{LW19}, the quantum query complexity of Shannon entropy estimation was shown to be $\widetilde{O}\rbra{\frac{\sqrt{N}}{\varepsilon^2}}$, which was later improved to $\widetilde{O}\rbra{\frac{\sqrt{N}}{\varepsilon^{1.5}}}$ in \cite{GL20}, and an almost matching lower bound of $\widetilde{\Omega}\rbra{\sqrt{N}}$ was shown in
\cite{BKT20} for constant $\varepsilon$.
In addition, the quantum query complexity of Shannon entropy estimation with multiplicative error $\gamma$ was shown in \cite{GHS21} to be $\widetilde{O}\rbra{N^{\rbra{1+\eta}/2\gamma^2}}$ and $\Omega\rbra{N^{1/3\gamma^2}}$ provided that $\mathrm{H}\rbra{p} = \Omega\rbra{\gamma+\frac{1}{\eta}}$.

The quantum query complexity of von Neumann entropy estimation was shown to be $\widetilde{O}\rbra{\frac{N}{\varepsilon^{1.5}}}$ in \cite{GL20}.
When the quantum state is of rank $r$, the quantum query complexity can be improved to $\widetilde{O}\rbra{\frac{r}{\varepsilon^2}}$ in \cite{WGL+24}. 
In addition, the quantum query complexity of von Neumann entropy estimation with multiplicative error $\gamma$ was shown in \cite{GHS21} to be $\widetilde{O}\rbra{N^{1/2+\rbra{1+\eta}/2\gamma^2}}$.

\paragraph{R\'enyi entropy.}
Quantum query complexity of R\'enyi entropy estimation of probability distributions was systematically studied in \cite{LW19}, which was shown to be $\widetilde{O}\rbra{\frac{N^{1/q-1/2}}{\varepsilon^2}}$ and $\Omega\rbra{\frac{N^{1/7q-o\rbra{1}}}{\varepsilon^{2/7}} + \frac{N^{1/3}}{\varepsilon^{1/6}}}$ for every constant $0 < q < 1$,  $\widetilde{O}\rbra{\frac{N^{1-1/2q}}{\varepsilon^2}}$ and $\Omega\rbra{\frac{N^{1/3}}{\varepsilon^{1/6}}+\frac{N^{1/2-1/2q}}{\varepsilon}}$ for every constant non-integer $q > 1$, and $\widetilde{O}\rbra{\frac{N^{\nu\rbra{1-1/q}}}{\varepsilon^2}}$ for constant integer $q \geq 2$ where $\nu = 1 - \frac{2^{q-2}}{2^{q}-1} < \frac{3}{4}$.
These quantum query complexities were later improved in \cite{WZL24} to $\widetilde{O}\rbra{\frac{N^{1/2q}}{\varepsilon^{1+1/2q}}}$ for every constant $0 < q < 1$ and $\widetilde{O}\rbra{\frac{N^{1-1/2q}}{\varepsilon}+\frac{\sqrt{N}}{\varepsilon^{1+1/2q}}}$ for every constant $q > 1$. 

The quantum query complexity of R\'enyi entropy estimation of quantum states was initially considered in \cite{SH21}. 
In \cite{WGL+24}, it was shown that if the quantum state is of rank $r$, then the quantum query complexity will be $\poly\rbra{r, \frac{1}{\varepsilon}}$ for any constant non-integer $q > 0$.
Later in \cite{WZL24}, the quantum query complexity was shown to be $\widetilde{O}\rbra{\frac{N^{1/2+1/2q}}{\varepsilon^{1+1/2q}}}$ for every constant $0 < q < 1$ and $\widetilde{O}\rbra{\min\cbra{\frac{N}{\varepsilon^{1+1/2q}}+\frac{N^{3/2-1/2q}}{\varepsilon}, \frac{N}{\varepsilon^{1+1/q}}}}$ for every constant $q > 1$. 

\paragraph{Tsallis entropy estimation.}
For integer $q \geq 2$, the Tsallis entropy of order $q$ can be estimated with quantum query complexity $O\rbra{\frac{1}{\varepsilon}}$ via the SWAP test \cite{BCWdW01} (for $q = 2$) and the Shift test \cite{EAO+02} (for $q \geq 2$) (see explanations in \cref{sec:warmup}). 
For constant $0 < q < 1$, the quantum query complexity of Tsallis entropy estimation of quantum states was shown to be $\widetilde{O}\rbra{\frac{N^{\rbra{3-q^2}/2q}}{\varepsilon^{\rbra{3+q}/2q}}}$ in \cite{WGL+24}.
For constant non-integer $q > 1$, the quantum query complexity was shown to be $\poly\rbra{r, \frac{1}{\varepsilon}}$ if the quantum state is of rank $r$ in \cite{WGL+24}, which was later exponentially improved to $O\rbra{\frac{1}{\varepsilon^{1+{1}/\rbra{q-1}}}}$ in \cite{LW25}.
A lower bound of $\Omega\rbra{\frac{1}{\sqrt{\varepsilon}}}$ for general $q \geq 1 + \Omega\rbra{1}$ was given in \cite[Theorem 5.7]{LW25}.

\subsubsection{Under other models or constraints}

There are also several quantum entropy estimators that assume prior knowledge of (an upper bound on) the reciprocal of the minimum non-zero eigenvalue of quantum states, denoted by $\kappa$. 
A quantum estimator for von Neumann entropy was given in \cite{CLW20} with query complexity $\widetilde{O}\rbra{\frac{\kappa^2}{\varepsilon}}$. 
A quantum estimator for R\'enyi entropy of (non-integer) order $q$ was given in \cite{SH21} with query complexity $\widetilde{O}\rbra{\frac{\kappa N^{\max\cbra{q, 1}}}{\varepsilon^2}}$. 
A family of entropy estimators with sample access were studied in \cite{WZW23}, where they gave a von Neumann entropy estimator with sample complexity $\widetilde{O}\rbra{\frac{\kappa^2}{\varepsilon^5}}$. 

In addition, a classical algorithm for estimating the von Neumann entropy of density matrices was proposed in \cite{KDS+20}. 
In \cite{GH20}, they showed that estimating the (von Neumann) entropy of shallow quantum circuits is generally hard. 

\section{Preliminaries}

In this section, we include the necessary preliminaries for quantum computing. 
A pure quantum state is denoted by a vector $\ket{\psi}$ in a Hilbert space, with its norm denoted by $\Abs{\ket{\psi}} = \sqrt{\braket{\psi}{\psi}}$, where $\braket{\varphi}{\psi}$ is the inner product of $\ket{\varphi}$ and $\ket{\psi}$. 
The tensor product of two pure quantum states $\ket{\varphi}$ and $\ket{\psi}$ is denoted by $\ket{\varphi} \otimes \ket{\psi}$ or $\ket{\varphi}\ket{\psi}$. 
The operator norm of a linear operator $A$ is defined by 
\begin{equation}
    \Abs{A} = \sup_{\Abs{\ket{x}} = 1} \Abs{A\ket{x}}.
\end{equation}
A linear operator $U$ is unitary if $U^\dag U = UU^\dag = I$, where $I$ is the identity operator and $U^\dag$ is the Hermitian conjugate of $U$. 

\subsection{Quantum query models} \label{sec:def-quantum-query-model}

The quantum query model used in this paper is usually known as the purified quantum query access model. 
In the following, we define the state-preparation oracle for mixed quantum states. 

\begin{definition} [State-preparation oracles] \label{def:state-preparation}
   An $\rbra{a+b}$ unitary operator $U$ is said to prepare an $a$-qubit mixed quantum state $\rho$, if $U\ket{0}_{\mathsf{A}}\ket{0}_{\mathsf{B}} = \ket{\psi}_{\mathsf{AB}}$ and $\rho = \tr_{\mathsf{B}}\rbra{\ketbra{\psi}{\psi}_{\mathsf{AB}}}$, where the subscripts $\mathsf{A}$ and $\mathsf{B}$ respectively denote the subsystem of the first $a$ qubits and the last $b$ qubits.
\end{definition}

The quantum query oracle for probability distributions is a special case of the state-preparation oracle for mixed quantum states. 

\begin{definition} [Quantum query oracle for probability distributions] \label{def:oracle-prob}
    A unitary operator is said to be a quantum query oracle for an $N$-dimensional probability distribution $p = \rbra{p_0, p_1, \dots, p_{N-1}} \in \mathbb{R}^N$, if it prepares the $\ceil{\log_2\rbra{N}}$-qubit mixed quantum state
    \begin{equation}
        \sum_{j=0}^{N-1} p_j\ketbra{j}{j}.
    \end{equation}
\end{definition}

\subsection{Quantum singular value transformation}

Quantum singular value transformation (QSVT) \cite{GSLW19} is a useful tool in the design of quantum algorithms. 
To formally state this tool, we introduce the notion of block-encoding. 

\begin{definition} [Block-encoding] \label{def:block-encoding}
    Suppose that $A$ is an $n$-qubit operator, $\alpha, \varepsilon \geq 0$, and $a \in \mathbb{N}$. 
    Then, an $\rbra{n+a}$-qubit unitary operator $U$ is said to be a $\rbra{\alpha, a, \varepsilon}$-block-encoding of $A$, if 
    \begin{equation}
        \Abs*{ \alpha \rbra*{\bra{0}^{\otimes a} \otimes I_n} U \rbra*{\ket{0}^{\otimes a} \otimes I_n} - A } \leq \varepsilon,
    \end{equation}
    where $I_n$ is the identity operator on $n$ qubits. 
\end{definition}

In this paper, we need a special case of QSVT, stated as follows. 

\begin{theorem} [Quantum singular value transformation for even/odd polynomials, adapted from {\cite[Theorem 2]{GSLW19}}] \label{thm:qsvt}
    Let $U$ be a unitary operator that is an $\rbra{1, a, 0}$-block-encoding of an Hermitian operator $A$, where $a \geq 1$ is an integer. 
    Suppose that $p \in \mathbb{R}\sbra{x}$ is an even/odd polynomial of degree $d$ such that $\abs{p\rbra{x}} \leq 1$ for all $x \in \sbra{-1, 1}$.
    Then, there is a quantum circuit $W_0$ that is a $\rbra{1, a+1, 0}$-block-encoding of $p\rbra{A}$, consisting of $O\rbra{d}$ queries to $U$. 
    Moreover, if $p$ is efficiently computable,\footnote{A polynomial $p\rbra{x} = c_0 + c_1 x + \dots + c_d x^d \in \mathbb{R}\sbra{x}$ of degree $d$ is efficiently computable, if for each $0 \leq j \leq d$, the coefficient $c_j$ can be computed to within additive error $\varepsilon$ in deterministic time $\poly\rbra{1/\varepsilon}$.
    In \cite{GSLW19}, they implicitly assumed that the polynomial $p$ is efficiently computable, and thus they only consider the case that $\delta$ is non-zero in order to ensure that the description of the quantum circuit $W_{\delta}$ can be efficiently computed.
    Their approach also leads to the case of $\delta = 0$, regardless of the computability.}
    then for every $\delta \in \rbra{0, 1}$, there is a quantum circuit $W_{\delta}$ that is a $\rbra{1, a+1, \delta}$-block-encoding of $p\rbra{A}$, consisting of $O\rbra{d}$ queries to $U$ and $O\rbra{ad}$ two-qubit gates, and the description of $W_{\delta}$ can be computed on a classical computer in $\poly\rbra{d, \log\rbra{\frac{1}{\delta}}}$ time. 
\end{theorem}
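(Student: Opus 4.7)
The plan is to reduce to the general QSVT construction of \cite{GSLW19} and specialize it to the definite-parity case. The key input is the fact that, for any polynomial $p$ of definite parity satisfying $\abs{p\rbra{x}} \leq 1$ on $\sbra{-1, 1}$, there exists a phase sequence $\rbra{\phi_0, \phi_1, \dots, \phi_d} \in \mathbb{R}^{d+1}$ whose associated alternating-phase product, interleaving applications of $U$ and $U^\dag$ with single-qubit rotations on one ancilla qubit, block-encodes $p\rbra{A}$ in the specified subspace. The parity restriction on $p$ is precisely what allows the construction to use only one additional ancilla qubit (turning the $a$-qubit ancilla of $U$ into an $\rbra{a+1}$-qubit ancilla of the output circuit), rather than the two extra qubits typically required for mixed-parity polynomials via a linear combination of two QSVT circuits of opposite parity.

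For the $\delta = 0$ statement, I would cite the existence portion of \cite[Theorem 2]{GSLW19}, which guarantees the phase sequence regardless of whether the phases are computable. Each of the $d+1$ phases is paired with either an application of $U$ or $U^\dag$ together with a controlled reflection about $\ket{0}^{\otimes a}$, yielding $O\rbra{d}$ oracle queries. For the $\delta > 0$ part with efficiently computable $p$, I would invoke the classical algorithm of \cite{GSLW19} (or the later refinement of Haah) that takes the Chebyshev coefficients of $p$ and outputs approximate phase factors in $\poly\rbra{d, \log\rbra{1/\delta}}$ time such that the resulting block-encoding is $\delta$-close to $p\rbra{A}$ in operator norm. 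The $O\rbra{ad}$ two-qubit gate count then follows from decomposing each multi-controlled reflection about $\ket{0}^{\otimes a}$ into $O\rbra{a}$ two-qubit gates via standard ancilla-free synthesis, and from the efficient computability of Chebyshev coefficients from a classical description of $p$.

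The main technical obstacle is propagating the phase-factor approximation error into the block-encoding error, which requires a careful triangle-inequality argument across $O\rbra{d}$ layers, showing that a $O\rbra{\delta / d}$ perturbation per phase suffices to keep the final operator-norm error below $\delta$. This, together with the numerical stability of the phase-factor-finding procedure, is handled explicitly in \cite{GSLW19}; all that remains for the adaptation is to verify that the definite-parity hypothesis eliminates the second ancilla qubit that would otherwise arise from symmetrizing the polynomial with its reflection, and to package the result in the block-encoding notation of \cref{def:block-encoding}.
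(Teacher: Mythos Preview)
Your proposal is appropriate, but note that the paper does not actually supply a proof of this statement: \cref{thm:qsvt} is presented as a preliminary tool, cited directly as ``adapted from \cite[Theorem 2]{GSLW19}'' without further argument. Your outline accurately describes the content of that citation and the way the definite-parity hypothesis yields the single extra ancilla, so it is consistent with (and more detailed than) the paper's own treatment.
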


\subsection{Polynomial approximation}

To apply the QSVT, we need the polynomial approximation for monomials given in \cite{SV14}. 

\begin{theorem} [Polynomial approximation for monomials, {\cite[Theorem 3.3]{SV14}}] \label{thm:poly-approx-monomial}
    For any integer $q \geq 1$ and real number $\varepsilon \in \rbra{0, 1}$, there is an efficiently computable even/odd polynomial $p \in \mathbb{R}\sbra{x}$ of degree $O\rbra{\sqrt{q\log\rbra{\frac{1}{\varepsilon}}}}$ and parity $q \bmod 2$ such that $\abs{p\rbra{x} - x^q} \leq \varepsilon$ and $\abs{p\rbra{x}} \leq 1$ for all $x \in \sbra{-1, 1}$.
\end{theorem}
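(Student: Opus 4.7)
The plan is to construct $p$ as a truncation of the exact Chebyshev expansion of $x^n$. First I would substitute $x = \cos\theta$ and apply Euler's formula to obtain
\begin{equation}
    x^n = \cos^n\theta = 2^{-n}\sum_{k=0}^n \binom{n}{k} \cos\bigl((n-2k)\theta\bigr),
\end{equation}
which, after grouping pairs $\{k, n-k\}$ that contribute the same frequency, yields an exact identity $x^n = \sum_{m} c_m T_m(x)$ in which the index $m$ ranges over $\{0,1,\dots,n\}$ with $m \equiv n \pmod{2}$ and the coefficients $c_m$ are nonnegative and proportional to $\binom{n}{(n-m)/2}/2^n$. Because $T_m$ has parity $m \bmod 2$, this representation already has the required parity $n \bmod 2$.

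The second step is to truncate the expansion by keeping only the terms with $m \leq d$ for a cutoff $d$ to be chosen. Since $|T_m(x)| \leq 1$ on $[-1,1]$, the uniform truncation error is bounded by the tail sum $\sum_{m > d} c_m$, which equals twice the lower-tail probability $\Pr[\operatorname{Bin}(n,1/2) < (n-d)/2]$; Hoeffding's inequality bounds this by $2e^{-d^2/(2n)}$. Choosing $d = \lceil\sqrt{2n\ln(4/\varepsilon)}\rceil = O\bigl(\sqrt{n\log(1/\varepsilon)}\bigr)$ then makes the error at most $\varepsilon/2$. This produces a polynomial $\tilde{p}$ of the correct degree and parity with $\sup_{x \in [-1,1]}|\tilde{p}(x) - x^n| \leq \varepsilon/2$.

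The final step is to enforce the bound $|p(x)| \leq 1$. Since $|\tilde{p}(x)| \leq 1 + \varepsilon/2$ on $[-1,1]$, setting $p := \tilde{p}/(1+\varepsilon/2)$ gives $|p| \leq 1$ together with $|p(x) - x^n| \leq \varepsilon/2 + \varepsilon/2 = \varepsilon$ while preserving the degree and parity. Efficient computability is immediate: each $c_m$ is an explicit dyadic rational built from a binomial coefficient, and $T_m$ has integer monomial coefficients generated by the standard three-term recurrence, so every coefficient of $p$ can be computed to additive error $\delta$ in time $\operatorname{poly}(n, \log(1/\delta))$. The main obstacle is the concentration step: one needs the binomial tail past $n/2 \pm O(\sqrt{n\log(1/\varepsilon)})$ to fall below $\varepsilon$, which is exactly what Chernoff/Hoeffding delivers; the rest is bookkeeping for parity and normalization.
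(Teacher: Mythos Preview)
Your proposal is correct and is essentially the proof given in \cite{SV14}; the paper itself does not prove this theorem but merely imports it from that reference, so there is nothing further to compare. The Chebyshev expansion of $x^n$ via $\cos^n\theta$, the binomial-tail truncation bounded by Hoeffding, and the final rescaling to enforce $|p|\le 1$ are exactly the ingredients of the original argument.
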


\subsection{Quantum algorithmic tools}

When given the state-preparation circuit of a mixed quantum state $\rho$, it is usually needed to implement a unitary block-encoding of $\rho$. 
In the following, we include the subroutine for this. 

\begin{lemma} [Block-encoding of density operators, {\cite[Lemma 7]{LC19}} and {\cite[Lemma 25]{GSLW19}}] \label{lemma:block-encoding-density}
    Suppose that an $\rbra{n+a}$-qubit unitary operator $U$ prepares an $n$-qubit mixed quantum state $\rho$. 
    Then, we can implement a $\rbra{2n+a}$-qubit quantum circuit $W$ that is a $\rbra{1, n+a, 0}$-block-encoding of $\rho$, using $2$ queries to $U$ and $O\rbra{n}$ two-qubit gates.
\end{lemma}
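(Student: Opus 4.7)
The plan is to use a simple ``swap through the purification'' construction. Introduce a fresh $n$-qubit register $\mathsf{A}'$ alongside the $(n+a)$-qubit system $\mathsf{A}\mathsf{B}$ on which $U$ acts, and define
\[
  W \;\coloneqq\; \rbra*{U^\dagger_{\mathsf{A}\mathsf{B}} \otimes I_{\mathsf{A}'}} \cdot \textup{SWAP}_{\mathsf{A},\mathsf{A}'} \cdot \rbra*{U_{\mathsf{A}\mathsf{B}} \otimes I_{\mathsf{A}'}},
\]
treating $\mathsf{A}\mathsf{B}$ (of size $n+a$) as the block-encoding ancilla and $\mathsf{A}'$ (of size $n$) as the main register; the total width is $2n+a$, as required.

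To verify the block-encoding property, I would write $U\ket{0}_{\mathsf{A}\mathsf{B}} = \ket{\psi}_{\mathsf{A}\mathsf{B}}$ and invoke the Schmidt decomposition $\ket{\psi}_{\mathsf{A}\mathsf{B}} = \sum_j \sqrt{p_j}\, \ket{\psi_j}_{\mathsf{A}} \ket{\phi_j}_{\mathsf{B}}$ with $\cbra*{\ket{\psi_j}}$ and $\cbra*{\ket{\phi_j}}$ both orthonormal, so that $\rho = \sum_j p_j \ketbra{\psi_j}{\psi_j}$. Peeling off the outer $U$ and $U^\dagger$ turns the block-encoded operator into
\[
  \rbra*{\bra{0}_{\mathsf{A}\mathsf{B}} \otimes I_{\mathsf{A}'}} W \rbra*{\ket{0}_{\mathsf{A}\mathsf{B}} \otimes I_{\mathsf{A}'}} \;=\; \rbra*{\bra{\psi}_{\mathsf{A}\mathsf{B}} \otimes I_{\mathsf{A}'}}\, \textup{SWAP}_{\mathsf{A},\mathsf{A}'}\, \rbra*{\ket{\psi}_{\mathsf{A}\mathsf{B}} \otimes I_{\mathsf{A}'}}.
\]
Expanding the Schmidt sum and using $\textup{SWAP}_{\mathsf{A},\mathsf{A}'}\ket{\psi_j}_{\mathsf{A}}\ket{x}_{\mathsf{A}'} = \ket{x}_{\mathsf{A}}\ket{\psi_j}_{\mathsf{A}'}$ together with orthonormality of both $\cbra*{\ket{\phi_j}}$ and $\cbra*{\ket{\psi_j}}$, the resulting double sum collapses to $\sum_j p_j \braket{\psi_j}{x}\, \ket{\psi_j}_{\mathsf{A}'} = \rho \ket{x}_{\mathsf{A}'}$ for every $\ket{x}_{\mathsf{A}'}$, establishing exact equality (thus $\varepsilon = 0$).

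For the resource count, $W$ invokes $U$ and $U^\dagger$ once each, giving two queries to $U$, and its only additional primitive is the SWAP between two $n$-qubit registers, which decomposes into $n$ two-qubit SWAP gates, i.e.\ $O\rbra{n}$ two-qubit gates. I do not foresee a genuine obstacle: the single subtle point is that one must invoke the Schmidt decomposition to secure orthonormality of $\cbra*{\ket{\phi_j}}$ on the purifying side, but this is automatic for \emph{every} purification of $\rho$ regardless of how $U$ is chosen, so the construction is independent of the specific form of $U$ provided by \cref{def:state-preparation}.
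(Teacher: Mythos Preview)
Your proposal is correct and is precisely the standard construction given in the cited references \cite[Lemma 25]{GSLW19} and \cite[Lemma 7]{LC19}; the paper itself does not supply an independent proof but simply invokes those results. The one cosmetic point is that each two-register $\textup{SWAP}$ decomposes into three CNOTs rather than a single native two-qubit gate, but this still yields $O(n)$ two-qubit gates as claimed.
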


Once we have a unitary operator that is a block-encoding of a matrix $A$, we can estimate $\tr\rbra{A\rho}$ by the Hadamard test \cite{AJL09} on the quantum state $\rho$. 
Here, we use the version in \cite{GP22}. 

\begin{lemma} [Hadamard test for block-encodings, {\cite[Lemma 9]{GP22}}] \label{lemma:hadamard-test}
    Suppose that an $\rbra{n+a}$-qubit unitary operator $U$ is a $\rbra{1, a, 0}$-block-encoding of $A$. 
    Then, we can implement an $\rbra{n+a+1}$-qubit quantum circuit $W$ that on input any $n$-qubit mixed quantum state $\rho$ outputs $0$ with probability $\frac{1 + \Real\rbra{\tr\rbra{A\rho}}}{2}$, using $1$ query to $U$ and $O\rbra{1}$ two-qubit gates. 
    In other words, the quantum circuit $W$ satisfies that
    \begin{equation}
        \tr\rbra*{\Pi W \rbra*{\rho \otimes \ketbra{0}{0}^{\otimes \rbra{a+1}}} W^\dag } = \frac{1 + \Real\rbra{\tr\rbra{A\rho}}}{2},
    \end{equation}
    where $\Pi = \ketbra{0}{0} \otimes I$ is the projector onto the subspace with the first qubit being $\ket{0}$.
\end{lemma}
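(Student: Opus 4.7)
The plan is to construct $W$ as the standard Hadamard test circuit instantiated with the controlled block-encoding. Specifically, I would introduce one fresh ancilla qubit (call it the \emph{test qubit}) initialised in $\ket{0}$, and define
\begin{equation}
    W = \rbra*{H \otimes I_{a+n}} \cdot \rbra*{\ketbra{0}{0} \otimes I_{a+n} + \ketbra{1}{1} \otimes U} \cdot \rbra*{H \otimes I_{a+n}},
\end{equation}
where $H$ acts on the test qubit and the controlled-$U$ uses the test qubit as control, with $U$ acting on the remaining $a+n$ qubits (the $a$ block-encoding ancillas plus the $n$-qubit register that will hold $\rho$). By convention a query to controlled-$U$ counts as a single query to $U$, and beyond this $W$ uses only the two Hadamards and one control, i.e.\ $O(1)$ two-qubit gates.

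The key step is a direct calculation of the output probability. Applying $W$ to the initial state $\sigma = \ketbra{0}{0} \otimes \ketbra{0}{0}^{\otimes a} \otimes \rho$ and computing $\tr\rbra{\Pi W \sigma W^\dag}$ reduces by the standard Hadamard test identity to
\begin{equation}
    \tr\rbra*{\Pi W \sigma W^\dag} = \frac{1 + \Real\rbra{\tr\rbra*{U \cdot \rbra*{\ketbra{0}{0}^{\otimes a} \otimes \rho}}}}{2}.
\end{equation}
It then remains to identify the trace on the right-hand side with $\tr\rbra{A\rho}$. Writing $\rho = \sum_k r_k \ketbra{v_k}{v_k}$ in its spectral decomposition and expanding the trace, one gets
\begin{equation}
    \tr\rbra*{U \cdot \rbra*{\ketbra{0}{0}^{\otimes a} \otimes \rho}} = \sum_k r_k \bra{v_k} \rbra*{\bra{0}^{\otimes a} \otimes I_n} U \rbra*{\ket{0}^{\otimes a} \otimes I_n} \ket{v_k} = \tr\rbra{A\rho},
\end{equation}
where the second equality uses exactly the block-encoding condition with $\alpha = 1$ and $\varepsilon = 0$ from \cref{def:block-encoding}.

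There is no serious obstacle here: the result is essentially a repackaging of the usual Hadamard test together with the defining property of a block-encoding. The only mild subtleties are bookkeeping ones, namely (i) making sure the controlled-$U$ is correctly interpreted as one oracle query under the paper's convention, and (ii) tracking that the projector $\Pi = \ketbra{0}{0} \otimes I$ is on the newly added test qubit rather than on any of the $a$ block-encoding ancillas, so that the probability read off corresponds precisely to the Hadamard test output rather than a post-selected amplitude estimation of $A$ itself.
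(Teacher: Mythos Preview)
Your construction and calculation are correct; this is precisely the standard Hadamard test applied to the controlled block-encoding, and the identification $\tr\rbra{U\cdot\rbra{\ketbra{0}{0}^{\otimes a}\otimes\rho}}=\tr\rbra{A\rho}$ follows directly from \cref{def:block-encoding} as you indicate. Note that the paper does not actually prove this lemma: it is quoted verbatim from \cite[Lemma~9]{GP22} and used as a black box, so there is no in-paper argument to compare against. Your write-up is essentially the expected proof one would find in that reference, and the two bookkeeping subtleties you flag (counting controlled-$U$ as one query, and ensuring $\Pi$ acts on the fresh test qubit rather than on one of the $a$ block-encoding ancillas or on the $\rho$ register) are exactly the points worth noting; the tensor ordering in your $\sigma$ differs from the statement's $\rho\otimes\ketbra{0}{0}^{\otimes(a+1)}$, but this is an immaterial relabelling of registers.
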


To efficiently estimate the amplitudes of a (pure) quantum state, we need the subroutine for quantum amplitude estimation. 

\begin{theorem} [Quantum amplitude estimation, {\cite[Theorem 12]{BHMT02}}] \label{thm:ampl-est}
    Suppose that $U$ is a unitary operator acting on a one-qubit subsystem $\mathsf{A}$ and an $n$-qubit sybsystem $\mathsf{B}$ such that 
    \begin{equation}
        U\ket{0}_{\mathsf{A}}\ket{0}_{\mathsf{B}} = \sqrt{p} \ket{0}_\mathsf{A} \ket{\phi_0}_{\mathsf{B}} + \sqrt{1 - p} \ket{1}_{\mathsf{A}} \ket{\phi_1}_{\mathsf{B}},
    \end{equation}
    where $p \in \sbra{0, 1}$ and $\ket{\phi_0}$ and $\ket{\phi_1}$ are normalized pure quantum states. 
    Then, for any $\varepsilon \in \rbra{0, 1}$, there is a quantum query algorithm that outputs an estimate $\tilde p \in \sbra{0, 1}$ of $p$ such that 
    \begin{equation}
        \Pr\sbra*{ \abs*{\tilde p - p} \leq \varepsilon } \geq \frac{2}{3},
    \end{equation}
    using $O\rbra{\frac{1}{\varepsilon}}$ queries to $U$ and $O\rbra{\frac{n}{\varepsilon}}$ two-qubit gates. 
\end{theorem}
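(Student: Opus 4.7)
The plan is to reduce amplitude estimation to quantum phase estimation applied to a Grover-style walk operator whose phase encodes $p$. First I would rewrite the action of $U$ as $U\ket{0}_{\mathsf{A}}\ket{0}_{\mathsf{B}} = \sin(\theta)\ket{\Psi_{\mathrm{g}}} + \cos(\theta)\ket{\Psi_{\mathrm{b}}}$, where $\ket{\Psi_{\mathrm{g}}} = \ket{0}_{\mathsf{A}}\ket{\phi_0}_{\mathsf{B}}$, $\ket{\Psi_{\mathrm{b}}} = \ket{1}_{\mathsf{A}}\ket{\phi_1}_{\mathsf{B}}$, and $\theta \in \sbra{0, \pi/2}$ is the unique angle with $\sin^2(\theta) = p$, so that estimating $p$ becomes estimating $\theta$. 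Then I introduce the two reflections $S_\chi = I - 2\Pi$, where $\Pi = \ketbra{0}{0}_{\mathsf{A}} \otimes I_{\mathsf{B}}$ projects onto the ``good'' subspace, and $S_0 = I - 2\ketbra{0}{0}^{\otimes(n+1)}$, which reflects about the initial state, and form the walk operator $Q = -U S_0 U^\dag S_\chi$. Each application of $Q$ costs $O(1)$ queries to $U, U^\dag$ and $O(n)$ two-qubit gates for the multi-controlled reflections.

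Next I would analyze the spectrum of $Q$. Restricted to the two-dimensional invariant subspace $\spanspace\cbra{\ket{\Psi_{\mathrm{g}}}, \ket{\Psi_{\mathrm{b}}}}$, $Q$ is the composition of two reflections whose axes make an angle $\theta$, hence it acts as a rotation by $2\theta$, with eigenvalues $e^{\pm 2i\theta}$ and eigenvectors $\ket{\psi_\pm} = \tfrac{1}{\sqrt{2}}\rbra{\ket{\Psi_{\mathrm{g}}} \pm i \ket{\Psi_{\mathrm{b}}}}$. A direct expansion yields $U\ket{0}\ket{0} = \tfrac{-i}{\sqrt{2}}\rbra{e^{i\theta}\ket{\psi_+} - e^{-i\theta}\ket{\psi_-}}$, so the initial state already lies entirely in the eigenspace relevant to phase estimation.

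I would then apply textbook quantum phase estimation to $Q$ using $U\ket{0}\ket{0}$ as the target register and $M = \Theta(1/\varepsilon)$ controlled applications of $Q$. Measuring the phase register returns an estimate $\tilde\theta$ (up to an irrelevant sign, since the two eigenvalues are complex conjugates) satisfying $\abs{\tilde\theta - \theta} = O(\varepsilon)$ with at least constant probability. Setting $\tilde p \coloneqq \sin^2(\tilde\theta) \in \sbra{0, 1}$ and using the identity $\sin^2(\tilde\theta) - \sin^2(\theta) = \sin(\tilde\theta + \theta) \sin(\tilde\theta - \theta)$, whose magnitude is bounded by $\abs{\tilde\theta - \theta}$, we obtain $\abs{\tilde p - p} \leq O(\varepsilon)$. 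A standard $O(1)$-fold median amplification boosts the success probability to $\geq 2/3$ without changing the query count $O(1/\varepsilon)$ or gate count $O(n/\varepsilon)$.

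The main obstacle is the spectral analysis of $Q$ together with verifying that $U\ket{0}\ket{0}$ has full support on the two relevant eigenvectors; once this is done, the rest reduces to invoking textbook phase estimation bounds and a short trigonometric calculation to convert angular error into additive error on $p$. A minor subtlety is that phase estimation recovers $\theta$ only modulo sign and modulo $2\pi$, but both ambiguities are absorbed by taking $\tilde p = \sin^2(\tilde\theta)$, which is invariant under $\tilde\theta \mapsto -\tilde\theta$ and $\tilde\theta \mapsto \tilde\theta + \pi$; and rescaling the target accuracy by a constant factor ensures the final error is at most $\varepsilon$ rather than $O(\varepsilon)$.
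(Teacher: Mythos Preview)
Your proposal is a correct sketch of the standard Brassard--H{\o}yer--Mosca--Tapp proof of quantum amplitude estimation. Note, however, that the paper itself does not prove this statement: it is quoted in the preliminaries as \cite[Theorem 12]{BHMT02} and used as a black box, so there is no ``paper's own proof'' to compare against. What you have written is essentially the argument from the original reference.
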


\subsection{Quantum query lower bound for distinguishing probability distributions}

To show the quantum query lower bounds in this paper, we need the following result about the quantum query lower bound for distinguishing probability distributions. 

\begin{theorem} [Quantum query lower bound for distinguishing probability distributions, {\cite[Theorem 4]{Bel19}}] \label{thm:qlower-dis-prob}
    Given a quantum query oracle $\mathcal{O}$ for a probability distribution $P$ (defined by \cref{def:oracle-prob}), promised that $P$ is one of the two probability distributions $p$ and $q$ on a sample space of $N$ elements, any quantum query algorithm that determines whether $P = p$ or $P = q$ with probability $\geq \frac{2}{3}$ requires query complexity $\Omega\rbra{\frac{1}{d_\mathrm{H}\rbra{p, q}}}$, where 
    \begin{equation}
        d_\mathrm{H}\rbra{p, q} = \sqrt{\frac{1}{2} \sum_{j=0}^{N-1} \rbra*{\sqrt{p_j} - \sqrt{q_j}}^2}
    \end{equation}
    is the Hellinger distance. 
\end{theorem}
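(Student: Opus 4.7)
The plan is to prove this lower bound via a hybrid argument that reduces the distinguishing problem to quantum state discrimination against adversarially chosen oracle completions. First, I would fix worst-case purifications of $p$ and $q$: by Uhlmann's theorem we may choose the auxiliary vectors $\ket{\phi_j}$ from \cref{def:oracle-prob} so that the two purifications $\ket{\psi_p}, \ket{\psi_q}$ satisfy
\begin{equation}
    \braket{\psi_p}{\psi_q} = \sum_{j=0}^{N-1} \sqrt{p_j q_j} = 1 - d_\mathrm{H}\rbra{p, q}^2.
\end{equation}
A short isometry-extension trick (for instance, let $W$ be the planar rotation in $\spanspace\cbra{\ket{\psi_p}, \ket{\psi_q}}$ mapping $\ket{\psi_p}$ to $\ket{\psi_q}$, extended by the identity on the orthogonal complement, and define $\mathcal{O}_q \coloneqq W\mathcal{O}_p$) then allows me to arrange
\begin{equation}
    \Abs{\mathcal{O}_p - \mathcal{O}_q} = \Abs{\ket{\psi_p} - \ket{\psi_q}} = \sqrt{2}\, d_\mathrm{H}\rbra{p, q}.
\end{equation}

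Next, for any $T$-query quantum algorithm interleaving the oracle with fixed unitaries $U_0, U_1, \dots, U_T$, I would write the final state as $\ket{\Phi_P^T} = U_T \mathcal{O}_P U_{T-1} \mathcal{O}_P \cdots U_1 \mathcal{O}_P U_0 \ket{0}$ and apply a telescoping hybrid decomposition to obtain
\begin{equation}
    \Abs{\ket{\Phi_p^T} - \ket{\Phi_q^T}} \leq T \cdot \Abs{\mathcal{O}_p - \mathcal{O}_q} \leq \sqrt{2}\, T \, d_\mathrm{H}\rbra{p, q}.
\end{equation}
By the Helstrom bound, success probability at least $\frac{2}{3}$ forces the two final pure states to have norm distance $\Omega(1)$, which combined with the previous inequality yields $T = \Omega\rbra{1/d_\mathrm{H}\rbra{p, q}}$.

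The step I expect to be the main obstacle is the first one: verifying that valid unitary completions of the two oracles really can be chosen with operator-norm distance matching $\Abs{\ket{\psi_p} - \ket{\psi_q}}$, rather than a weaker (constant) bound caused by misaligned action on the orthogonal complement of $\ket{0}\ket{0}$. The rotation-based construction above accomplishes this, but one must verify that the resulting $\mathcal{O}_p$ and $\mathcal{O}_q$ are both legitimate instances of the oracle model in \cref{def:oracle-prob}, a standard but careful isometry-extension check.
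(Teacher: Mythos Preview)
The paper does not give its own proof of this statement: \cref{thm:qlower-dis-prob} is quoted verbatim from \cite{Bel19} and used as a black box in \cref{sec:lower}. So there is no ``paper's proof'' to compare against beyond the citation.

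That said, your hybrid argument is a correct and essentially standard reconstruction of why the bound holds. The purification choice is right (taking the auxiliary states $\ket{\phi_j}$ equal for $p$ and $q$ already achieves $\braket{\psi_p}{\psi_q}=\sum_j\sqrt{p_jq_j}=1-d_{\mathrm{H}}(p,q)^2$; Uhlmann is not really needed), and the rotation trick $\mathcal{O}_q=W\mathcal{O}_p$ indeed gives $\Abs{\mathcal{O}_p-\mathcal{O}_q}=\Abs{I-W}=2\abs{\sin(\theta/2)}=\Abs{\ket{\psi_p}-\ket{\psi_q}}$, so the step you flagged as the obstacle goes through. One point worth making explicit: in this paper's model a ``query'' may be to $\mathcal{O}$, $\mathcal{O}^\dag$, controlled-$\mathcal{O}$, or controlled-$\mathcal{O}^\dag$, but since $\Abs{\mathcal{O}_p^\dag-\mathcal{O}_q^\dag}=\Abs{\mathcal{O}_p-\mathcal{O}_q}$ and the same holds for the controlled versions, your telescoping bound is unaffected. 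After that, the BBBV-style hybrid plus Helstrom finishes the argument exactly as you describe.
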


\section{Meta-Algorithm} \label{sec:mata-algo}

In this section, we present a meta-algorithm for Tsallis entropy estimation, which not only allows us to provide a quantum estimator for Tsallis entropy of integer order (in \cref{sec:upper}) but also enables us to prove the lower bound on the approximate degree of monomials $x^q$ (in \cref{sec:approx-deg}). 

\subsection{Description of the algorithm}

We first formalize the setting and the goal. 
\begin{itemize}
    \item \textbf{Setting}: We assume quantum query access to the state-preparation circuit of a mixed quantum state $\rho$ (defined by \cref{def:state-preparation}).
    To be precise, we assume that $\rho$ is an $n$-qubit mixed quantum state and the state-preparation circuit $\mathcal{O}$ is an $\rbra{n+a}$-qubit unitary operator. 
    \item \textbf{Goal}: Estimate the value of the $q$-Tsallis entropy of $\rho$, $\mathrm{S}_q\rbra{\rho}$, where $q \geq 2$ is an integer. 
\end{itemize}

Now we describe the meta-algorithm in the following six steps. 
See \cref{algo:tsallis} for its formal description. 

\begin{algorithm}[!htp]
    \caption{Meta-algorithm for quantum $q$-Tsallis entropy estimation.}
    \label{algo:tsallis}
    \begin{algorithmic}[1]
    \Require An $\rbra{n+a}$-qubit quantum unitary oracle $\mathcal{O}$ that prepares an $n$-qubit mixed quantum state $\rho$; parameters $\varepsilon_{\poly}, \varepsilon_{\textup{QSVT}}, \varepsilon_{\textup{QAE}} \in \sbra{0, 1}$ whose subscripts respectively stand for the polynomial approximation, QSVT, and quantum amplitude estimation. 

    \Ensure An estimate $\tilde S$ of $\mathrm{S}_q\rbra{\rho}$.
    
    \State Let $U_\rho$ be the $\rbra{2n+a}$-qubit unitary operator (by \cref{lemma:block-encoding-density}) that is a $\rbra{1, n+a, 0}$-block-encoding of $\rho$, using $O\rbra{1}$ queries to $\mathcal{O}$.

    \State Let $p \in \mathbb{R}\sbra{x}$ be an even/odd polynomial with parity $\rbra{q-1} \bmod 2$ such that $\abs{p\rbra{x} - x^{q-1}} \leq \varepsilon_{\poly}$ and $\abs{p\rbra{x}} \leq 1$ for all $x \in \sbra{-1, 1}$.

    \State Let $U_{p\rbra{\rho}}$ be the $\rbra{2n+a+1}$-qubit unitary operator (by \cref{thm:qsvt}) that is a $\rbra{1, n+a+1, \varepsilon_{\textup{QSVT}}}$-block-encoding of $p\rbra{\rho}$, using $O\rbra{\deg\rbra{p}}$ queries to $U_\rho$. 

    \State Let $V$ be the $\rbra{2n+2a+2}$-qubit unitary operator for the Hadamard test (by \cref{lemma:hadamard-test}) such that the measurement outcome of its first qubit is $0$ with probability $\gamma \approx \frac{1+\tr\rbra{\rho p\rbra{\rho}}}{2}$, using $1$ query to each of $U_{p\rbra{\rho}}$ and $\mathcal{O}$. 

    \State Let $\tilde \gamma$ be the estimate of $\gamma$ (by \cref{thm:ampl-est}) such that $\abs{\tilde \gamma - \gamma} \leq \varepsilon_{\textup{QAE}}$ with probability $\geq \frac{2}{3}$, using $O\rbra{\frac{1}{\varepsilon_{\textup{QAE}}}}$ queries to $V$.

    \State \Return $\frac{2\rbra{1-\tilde\gamma}}{q-1}$.
    
    \end{algorithmic}
\end{algorithm}

\textbf{Step 1: Implement a unitary block-encoding of $\rho$.} 
This can be simply done by \cref{lemma:block-encoding-density}, which allows us to implement a $\rbra{2n+a}$-qubit quantum circuit $U_\rho$ that is a $\rbra{1, n+a, 0}$-block-encoding of $\rho$, using $O\rbra{1}$ queries to $\mathcal{O}$ and $O\rbra{n}$ additional two-qubit gates. 

\textbf{Step 2: Find a polynomial approximating $x^{q-1}$.}
The required polynomial $p \in \mathbb{R}\sbra{x}$ should satisfy the three conditions: 
\begin{enumerate}[(i)]
    \item $\abs{p\rbra{x} - x^{q-1}} \leq \varepsilon_{\poly}$ for all $x \in \sbra{-1, 1}$, where $\varepsilon_{\poly} \in \sbra{0, 1}$ is a freely specified parameter. \label{item:cond1-poly}
    \item $\abs{p\rbra{x}} \leq 1$ for all $x \in \sbra{-1, 1}$. \label{item:cond2-poly}
    \item $p$ is either even or odd with parity $\rbra{q-1} \bmod 2$. \label{item:cond3:poly}
\end{enumerate}
Condition \ref{item:cond1-poly} means that the polynomial $p$ is uniformly $\varepsilon_{\poly}$-close to the monomial $x^{q-1}$. 
Conditions \ref{item:cond2-poly} and \ref{item:cond3:poly} are for the purpose of performing QSVT for the even/odd polynomial $p$ later (as required in \cref{thm:qsvt}). 
The existence of polynomials that satisfy the three conditions is guaranteed by \cref{thm:poly-approx-monomial}.

\textbf{Step 3: Implement a unitary block-encoding of $p\rbra{\rho}$.}
This is done by directly performing QSVT according to \cref{thm:qsvt}. 
Specifically, we can implement a $\rbra{2n+a+1}$-qubit quantum circuit $U_{p\rbra{\rho}}$ that is a $\rbra{1, n+a+1, \varepsilon_{\textup{QSVT}}}$-block-encoding of $p\rbra{\rho}$, using $O\rbra{\deg\rbra{p}}$ queries to $U_\rho$ and $O\rbra{\rbra{n+a}\deg\rbra{p}}$ additional two-qubit gates, where $\varepsilon_{\textup{QSVT}} \in \sbra{0, 1}$ is a parameter that can be specified freely. 
In particular, when $\varepsilon_{\textup{QSVT}} > 0$ and the polynomial $p$ is efficiently computable, the circuit description (in terms of gates and oracle queries) can be efficiently computed on a classical computer in $\poly\rbra{\deg\rbra{p}, \log\rbra{\frac{1}{\varepsilon_{\textup{QSVT}}}}}$ time. 

\textbf{Step 4: Implement an Hadamard test that outputs $0$ with probability $\approx \frac{1+\tr\rbra{\rho p\rbra{\rho}}}{2}$.}
This is done by applying the Hadamard test specified in \cref{lemma:hadamard-test} that uses $1$ query to the unitary operator $U_{p\rbra{\rho}}$ and $1$ sample of the mixed quantum state $\rho$ (preparable by $1$ query to $\mathcal{O}$). 
Let $V$ be the $\rbra{2n+2a+2}$-qubit quantum circuit of this Hadamard test, using $1$ query to each of $U_{p\rbra{\rho}}$ and $\mathcal{O}$ and $O\rbra{1}$ additional two-qubit gates. 
If $U_{p\rbra{\rho}}$ is a $\rbra{1, n+a+1, 0}$-block-encoding of $A$, then measuring the first qubit of $V\ket{0}^{\otimes \rbra{2n+2a+2}}$ gives $0$ with probability 
\begin{equation} \label{eq:def-gamma}
    \gamma = \frac{1+\Real\rbra{\tr\rbra{A\rho}}}{2} \in \sbra{0, 1}.
\end{equation}
In other words, $V$ is of the form
\begin{equation}
    V\ket{0}^{\otimes \rbra{2n+2a+2}} = \sqrt{\gamma} \ket{0} \ket{\phi_0} + \sqrt{1 - \gamma} \ket{1} \ket{\phi_1},
\end{equation}
where $\ket{\phi_0}$ and $\ket{\phi_1}$ are two (normalized) $\rbra{2n+2a+1}$-qubit pure quantum states. 

\textbf{Step 5: Estimate $\gamma$.}
This can be done by quantum amplitude estimation (\cref{thm:ampl-est}). 
For any $\varepsilon_{\textup{QAE}} \in \rbra{0, 1}$, we can obtain an estimate $\tilde \gamma$ of $\gamma$ such that
\begin{equation} \label{eq:diff-gamma}
    \Pr \sbra*{ \abs*{\tilde \gamma - \gamma} \leq \varepsilon_{\textup{QAE}} } \geq \frac{2}{3},
\end{equation}
using $O\rbra{\frac{1}{\varepsilon_{\textup{QAE}}}}$ queries to $V$ and $O\rbra{\frac{n+a}{\varepsilon_{\textup{QAE}}}}$ additional two-qubit gates. 
Here, note that the parameter $\varepsilon_{\textup{QAE}}$ can be specified freely. 

\textbf{Step 6: Return $\frac{2\rbra{1-\tilde\gamma}}{q-1}$ as an estimate of $\mathrm{S}_q\rbra{\rho}$.}

\subsection{Correctness}

We summarize the properties of \cref{algo:tsallis} in the following lemma. 

\begin{lemma} \label{lemma:meta}
    By specifying the three parameters $\varepsilon_{\poly}, \varepsilon_{\textup{QSVT}}, \varepsilon_{\textup{QAE}}$, with probability $\geq \frac{2}{3}$, the estimate of $\mathrm{S}_q\rbra{\rho}$ returned by
    \cref{algo:tsallis} is bounded by
    \begin{equation}
        \abs*{\frac{2\rbra{1-\tilde \gamma}}{q-1} - \mathrm{S}_q\rbra{\rho}} \leq \frac{1}{q-1} \rbra*{\varepsilon_{\textup{QSVT}} + \varepsilon_{\poly} + 2\varepsilon_{\textup{QAE}}}.
    \end{equation}
    In addition, \cref{algo:tsallis} uses $O\rbra{\frac{\deg\rbra{p}}{\varepsilon_{\textup{QAE}}}}$ queries to $\mathcal{O}$ and $O\rbra{\frac{\rbra{n+a}\deg\rbra{p}}{\varepsilon_{\textup{QAE}}}}$ two-qubit gates. 
    Moreover, if $\varepsilon_{\textup{QSVT}} > 0$ and $p$ is efficiently computable, the quantum circuit description of \cref{algo:tsallis} can be computed on a classical computer in $\poly\rbra{\deg\rbra{p}, n, a, \frac{1}{\varepsilon_{\textup{QAE}}}, \log\rbra{\frac{1}{\varepsilon_{\textup{QSVT}}}}}$ time. 
\end{lemma}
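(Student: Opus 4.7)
The plan is to track how the three error parameters $\varepsilon_{\poly}$, $\varepsilon_{\textup{QSVT}}$, $\varepsilon_{\textup{QAE}}$ propagate through the six steps of \cref{algo:tsallis}, and then to read off the complexity bounds directly from the corresponding subroutines. The high-level reduction is to rewrite everything in terms of the acceptance probability of the Hadamard test. Define the idealized target $\gamma^\star \coloneqq \frac{1+\tr(\rho^q)}{2}$, so that, using $\tr(\rho^q) = 1 - (q-1)\mathrm{S}_q(\rho)$, we get the algebraic identity $\mathrm{S}_q(\rho) = \frac{2(1-\gamma^\star)}{q-1}$. Hence the output error of the algorithm equals $\frac{2}{q-1}|\gamma^\star - \tilde\gamma|$, and the whole task reduces to bounding this single quantity.

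I would then split $|\gamma^\star - \tilde\gamma|$ via the triangle inequality into $|\gamma^\star - \gamma| + |\gamma - \tilde\gamma|$, where $\gamma = \frac{1+\Real(\tr(A\rho))}{2}$ is the \emph{actual} probability realized by the Hadamard test and $A$ is the operator actually block-encoded by $U_{p(\rho)}$, satisfying $\Abs{A - p(\rho)} \leq \varepsilon_{\textup{QSVT}}$. The quantum amplitude estimation bound in \cref{thm:ampl-est} gives $|\gamma - \tilde\gamma| \leq \varepsilon_{\textup{QAE}}$ with probability $\geq 2/3$. For the remaining term, insert $\frac{1+\tr(\rho\, p(\rho))}{2}$ as a midpoint and use (i) the matrix Hölder-type inequality $|\tr(\rho(p(\rho)-A))| \leq \Abs{\rho}_1 \Abs{p(\rho)-A} \leq \varepsilon_{\textup{QSVT}}$, together with (ii) the spectral-calculus estimate $|\tr(\rho\, p(\rho)) - \tr(\rho^q)| = |\sum_j \lambda_j(p(\lambda_j)-\lambda_j^{q-1})| \leq \varepsilon_{\poly}$, valid because the eigenvalues $\lambda_j$ of $\rho$ lie in $[0,1] \subseteq [-1,1]$ so condition~\ref{item:cond1-poly} applies pointwise. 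Combining the three pieces yields $|\gamma^\star - \tilde\gamma| \leq \tfrac{1}{2}(\varepsilon_{\textup{QSVT}} + \varepsilon_{\poly}) + \varepsilon_{\textup{QAE}}$, which after the factor $\tfrac{2}{q-1}$ matches exactly the claimed additive error.

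The complexity counts are then a straightforward bookkeeping: \cref{lemma:block-encoding-density} makes $U_\rho$ cost $O(1)$ queries to $\mathcal{O}$ and $O(n)$ two-qubit gates; \cref{thm:qsvt} lifts this to $O(\deg(p))$ queries to $\mathcal{O}$ and $O((n+a)\deg(p))$ gates for $U_{p(\rho)}$; the Hadamard test circuit $V$ of \cref{lemma:hadamard-test} adds a constant factor; and finally \cref{thm:ampl-est} multiplies through by $O(1/\varepsilon_{\textup{QAE}})$. This gives $O(\deg(p)/\varepsilon_{\textup{QAE}})$ queries to $\mathcal{O}$ and $O((n+a)\deg(p)/\varepsilon_{\textup{QAE}})$ two-qubit gates. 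For the classical preprocessing cost, the only nontrivial contribution is computing the QSVT phase sequence for $U_{p(\rho)}$, which by \cref{thm:qsvt} requires time $\poly(\deg(p), \log(1/\varepsilon_{\textup{QSVT}}))$ whenever $\varepsilon_{\textup{QSVT}} > 0$ and $p$ is efficiently computable; the remaining circuits involve only elementary gates and amplitude estimation, contributing an additional $\poly(n,a,1/\varepsilon_{\textup{QAE}})$ factor.

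I do not anticipate a serious obstacle: the only subtlety is handling the fact that QSVT produces a block-encoding of some $A$ that is merely $\varepsilon_{\textup{QSVT}}$-close to $p(\rho)$ rather than $p(\rho)$ itself, and that the Hadamard test produces $\Real(\tr(A\rho))$ rather than $\tr(\rho\,p(\rho))$. Both issues are resolved uniformly by the Hölder bound above, using $\Abs{\rho}_1 = 1$. The other potential concern --- that $p(\rho)$ need not be Hermitian when $p$ is odd if $\rho$ has negative eigenvalues --- does not arise, since $\rho \succeq 0$ makes $p(\rho)$ Hermitian regardless of parity, so $\tr(\rho\, p(\rho))$ is automatically real.
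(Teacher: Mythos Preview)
Your proposal is correct and follows essentially the same approach as the paper's proof: both reduce to bounding $\abs{\Real(\tr(A\rho))-\tr(\rho^q)}$ via a triangle-inequality split through $p(\rho)$, invoke the matrix H\"older inequality with $\Abs{\rho}_1=1$ to control the QSVT error, use the spectral bound on $\sbra{0,1}$ for the polynomial error, and finish with quantum amplitude estimation; the complexity bookkeeping is likewise identical.
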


\begin{proof}
    To analyze the error, we first show the following inequality. 
    \begin{equation} \label{eq:gamma-vs-tr}
        \abs*{\rbra{2 \gamma - 1} - \tr\rbra{\rho^q}} \leq \varepsilon_{\textup{QSVT}} + \varepsilon_{\poly}.
    \end{equation}
    To see this, by \cref{eq:def-gamma}, \cref{eq:gamma-vs-tr} simplifies to 
    \begin{equation} \label{eq:diff-Re-tr-pow}
        \abs*{ \Real\rbra{\tr\rbra{A\rho}} -  \tr\rbra{\rho^q} } \leq \varepsilon_{\textup{QSVT}} + \varepsilon_{\poly}.
    \end{equation}
    Given the fact that $\abs{\Real\rbra{x} - \Real\rbra{y}} \leq \abs{x - y}$ for any $x, y \in \mathbb{C}$ and that $\tr\rbra{\rho^q} \in \mathbb{R}$, we have
    \begin{align}
        \abs*{\Real\rbra{\tr\rbra{A\rho}} -  \tr\rbra{\rho^q}} 
        & \leq \abs*{ \tr\rbra{A\rho} -  \tr\rbra{\rho^q}} \\
        & \leq \Abs*{A - \rho^{q-1}}, \label{eq:diff-A-rho-pow}
    \end{align}
    where the last inequality uses the fact that $\abs{\tr\rbra{AB}} \leq \Abs{A} \tr\rbra{\abs{B}}$ for any two square matrices $A$ and $B$ (which is a special case of the matrix H\"older inequality, e.g., \cite[Theorem 2]{Bau11}). 
    As $U_{p\rbra{\rho}}$ is a $\rbra{1, n+a+1, 0}$-block-encoding of $A$ and also a $\rbra{1, n+a+1, \varepsilon_{\textup{QSVT}}}$-block-encoding of $p\rbra{\rho}$, by the definition of block-encoding (\cref{def:block-encoding}), we have
    \begin{equation} \label{eq:diff-A-p-rho}
        \Abs{A - p\rbra{\rho}} \leq \varepsilon_{\textup{QSVT}}.
    \end{equation}
    By condition \ref{item:cond1-poly} in Step 2 for the polynomial $p$, we have
    \begin{equation} \label{eq:diff-p-rho-rho-pow}
        \Abs*{p\rbra{\rho} - \rho^{q-1}} \leq \varepsilon_{\poly
        },
    \end{equation}
    since the eigenvalues of $\rho$ lie in the range $\sbra{0, 1}$. 
    Combining \cref{eq:diff-A-p-rho,eq:diff-p-rho-rho-pow}, we have
    \begin{align}
        \eqref{eq:diff-A-rho-pow}
        & \leq \Abs{A - p\rbra{\rho}} + \Abs*{p\rbra{\rho} - \rho^{q-1}} \\
        & \leq \varepsilon_{\textup{QSVT}} + \varepsilon_{\poly},
    \end{align}
    thereby yielding \cref{eq:diff-Re-tr-pow} and thus \cref{eq:gamma-vs-tr}. 
    By \cref{eq:diff-gamma,eq:gamma-vs-tr}, we have 
    \begin{equation}
        \Pr \sbra*{ \abs*{ \rbra{2\tilde \gamma - 1} - \tr\rbra{\rho^q} } \leq \varepsilon_{\textup{QSVT}} + \varepsilon_{\poly} + 2\varepsilon_{\textup{QAE}} } \geq \frac{2}{3}.
    \end{equation}
    Noting that $\mathrm{S}_q\rbra{\rho} = \frac{1-\tr\rbra{\rho^q}}{q-1}$, we finally yield 
    \begin{equation}
        \Pr \sbra*{ \abs*{ \frac{2\rbra{1-\tilde\gamma}}{q-1} - \mathrm{S}_q\rbra{\rho} } \leq \frac{1}{q-1} \rbra*{\varepsilon_{\textup{QSVT}} + \varepsilon_{\poly} + 2\varepsilon_{\textup{QAE}}} } \geq \frac{2}{3}.
    \end{equation}

    To complete the proof, now we analyze the complexity of \cref{algo:tsallis}. 
    We list the query complexity of each unitary operator in \cref{algo:tsallis}: $U_\rho$ uses $O\rbra{1}$ queries to $\mathcal{O}$, $U_{p\rbra{\rho}}$ uses $O\rbra{\deg\rbra{p}}$ queries to $\mathcal{O}$, $V$ uses $O\rbra{\deg\rbra{p}}$ queries to $\mathcal{O}$, and finally the overall algorithm uses $O\rbra{\frac{1}{\varepsilon_{\textup{QAE}}}}$ queries to $V$. 
    From this, we can see that \cref{algo:tsallis} uses $O\rbra{\frac{\deg\rbra{p}}{\varepsilon_{\textup{QAE}}}}$ queries to $\mathcal{O}$. 
    The number of two-qubit gates used in \cref{algo:tsallis} can be analyzed similarly. 
\end{proof}

\section{Quantum Query Complexity of Tsallis Entropy Estimation}

In this section, we first provide a quantum query algorithm for estimating the Tsallis entropy of integer order in \cref{sec:upper} based on the meta-algorithm given in \cref{sec:mata-algo}, and then establish a matching lower bound (up to polylogarithmic factors) in \cref{sec:lower}. 

\subsection{Upper bounds} \label{sec:upper}

\begin{theorem} [Optimal quantum estimator for Tsallis entropy] \label{thm:q-tsallis-estimator}
    Given quantum query access to the state-prepraration circuit of an $n$-qubit mixed quantum state $\rho$, for every integer $q \geq 2$ and $\varepsilon \in (0, 1/q]$, there is a quantum query algorithm that estimates $\mathrm{S}_q\rbra{\rho}$ to within additive error $\varepsilon$ with query complexity $O\rbra{\frac{\sqrt{\log\rbra{1/q\varepsilon}}}{\sqrt{q}\varepsilon}}$. 
    Moreover, if the state-preparation circuit acts on $m$ qubits, then the quantum query algorithm uses $O\rbra{\frac{m\sqrt{\log\rbra{1/q\varepsilon}}}{\sqrt{q}\varepsilon}}$ additional two-qubit gates, and the circuit description of the quantum algorithm can be computed on a classical computer in $\poly\rbra{q, m, \frac{1}{\varepsilon}}$ time. 
\end{theorem}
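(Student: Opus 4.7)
The plan is to instantiate the meta-algorithm of \cref{algo:tsallis} with a judicious choice of approximating polynomial and parameter settings, and then invoke \cref{lemma:meta} to bound both the error and the query complexity. The only real design choice is how to split the three error budgets $\varepsilon_{\poly}, \varepsilon_{\textup{QSVT}}, \varepsilon_{\textup{QAE}}$.

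First, I would select the polynomial $p \in \mathbb{R}[x]$ approximating $x^{q-1}$ via \cref{thm:poly-approx-monomial} with target precision $\varepsilon_{\poly} = (q-1)\varepsilon/4$. This yields an efficiently computable even/odd polynomial of parity $(q-1) \bmod 2$, of degree
\[
d = O\rbra*{\sqrt{q\log\rbra*{\tfrac{1}{(q-1)\varepsilon}}}} = O\rbra*{\sqrt{q\log\rbra*{\tfrac{1}{q\varepsilon}}}},
\]
satisfying $\abs{p(x)} \leq 1$ and $\abs{p(x) - x^{q-1}} \leq \varepsilon_{\poly}$ on $[-1,1]$, meeting all three conditions required by Step 2 of \cref{algo:tsallis}. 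I would then set $\varepsilon_{\textup{QSVT}} = (q-1)\varepsilon/4$ and $\varepsilon_{\textup{QAE}} = (q-1)\varepsilon/8$.

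With these choices, \cref{lemma:meta} immediately gives
\[
\abs*{\frac{2(1-\tilde\gamma)}{q-1} - \mathrm{S}_q(\rho)} \leq \frac{1}{q-1}\rbra*{\varepsilon_{\textup{QSVT}} + \varepsilon_{\poly} + 2\varepsilon_{\textup{QAE}}} \leq \varepsilon
\]
with probability $\geq 2/3$, establishing correctness. For the query complexity, \cref{lemma:meta} yields $O(d/\varepsilon_{\textup{QAE}}) = O\rbra{\sqrt{q\log(1/(q\varepsilon))}/((q-1)\varepsilon)}$; since $q \geq 2$ gives $q-1 \geq q/2$, this simplifies to the claimed $O\rbra{\sqrt{\log(1/(q\varepsilon))}/(\sqrt{q}\varepsilon)}$. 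The two-qubit gate count follows from the same lemma with $n+a = O(m)$, and uniformity of the circuit description follows from the efficient computability of $p$ in \cref{thm:poly-approx-monomial} together with the $\poly(d, n, a, 1/\varepsilon_{\textup{QAE}}, \log(1/\varepsilon_{\textup{QSVT}})) = \poly(q, m, 1/\varepsilon)$ bound from \cref{lemma:meta}.

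There is no conceptually difficult step here, since the machinery is already assembled. The only delicate point is the bookkeeping of how the $1/(q-1)$ factor in front of the three error terms in \cref{lemma:meta} combines with the $\sqrt{q}$ from $\deg(p)$ to produce the desired $1/(\sqrt{q}\varepsilon)$ dependence: setting all three error budgets proportional to $(q-1)\varepsilon$ is precisely what makes the final $\sqrt{q}/(q-1) = \Theta(1/\sqrt{q})$ scaling come out. Everything else is a direct substitution into existing lemmas.
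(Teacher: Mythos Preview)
Your proposal is correct and essentially identical to the paper's own proof: both instantiate \cref{lemma:meta} with the polynomial from \cref{thm:poly-approx-monomial} and set all three error budgets to $\Theta((q-1)\varepsilon)$ (the paper uses $\varepsilon_{\poly}=\varepsilon_{\textup{QSVT}}=\varepsilon_{\textup{QAE}}=(q-1)\varepsilon/4$, while you split $\varepsilon_{\textup{QAE}}$ slightly differently, which makes no difference). The key observation you highlight --- that the $\sqrt{q}$ from $\deg(p)$ divided by the $\Theta(q\varepsilon)$ from $\varepsilon_{\textup{QAE}}$ yields the $1/(\sqrt{q}\varepsilon)$ scaling --- is exactly the paper's computation.
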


\begin{proof}
    This can be achieved by applying \cref{lemma:meta} with the parameters $\varepsilon_{\textup{QSVT}} = \varepsilon_{\poly} = \varepsilon_{\textup{QAE}} = \frac{\rbra{q-1}\varepsilon}{4} = \Theta\rbra{q\varepsilon}$, which gives the quantum query complexity $O\rbra{\frac{\deg\rbra{p}}{\varepsilon_{\textup{QAE}}}} = O\rbra{\frac{\deg\rbra{p}}{q\varepsilon}}$, provided that there exists an efficiently computable polynomial $p \in \mathbb{R}\sbra{x}$ satisfying Conditions \ref{item:cond1-poly}, \ref{item:cond2-poly}, and \ref{item:cond3:poly}. 
    To complete the proof, we choose the polynomial specified in \cref{thm:poly-approx-monomial} with parameters $\varepsilon \coloneqq \varepsilon_{\poly}$, which is of degree $\deg\rbra{p} = O\rbra{\sqrt{q\log\rbra{\frac{1}{\varepsilon_{\poly}}}}} = O\rbra{\sqrt{q\log\rbra{\frac{1}{q\varepsilon}}}}$.
    Therefore, the overall quantum query complexity is 
    \begin{equation}
        O\rbra*{\frac{\deg\rbra{p}}{\varepsilon_{\textup{QAE}}}} = O\rbra*{\frac{\sqrt{q\log\rbra{1/q\varepsilon}}}{q\varepsilon}} = O\rbra*{\frac{\sqrt{\log\rbra{1/q\varepsilon}}}{\sqrt{q} \varepsilon}}.
    \end{equation}
    The number of two-qubit gates used in the algorithm can be analyzed similarly. 
\end{proof}

\subsection{Lower bounds} \label{sec:lower}

Our quantum query lower bound for Tsallis entropy estimation is given as follows. 

\begin{theorem} [Quantum lower bound for every integer $q \geq 2$] \label{thm:lb-summary}
    Given quantum query access to the oracle for a probability distribution $p$, for every integer $q \geq 2$ and sufficiently small $\varepsilon > 0$, any quantum query algorithm that estimates $\mathrm{H}_q\rbra{p}$ to within additive error $\varepsilon$ requires query complexity $\Omega\rbra{\frac{1}{\sqrt{q}\varepsilon}}$.
\end{theorem}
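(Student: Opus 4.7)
The plan is to reduce a probability-distribution distinguishing task to Tsallis entropy estimation and then invoke Belovs' Hellinger-distance quantum query lower bound (\cref{thm:qlower-dis-prob}). Following the sketch in \cref{lemma:low-main}, I would pick the binary distributions $p^{\pm}$ with $p_0^{\pm} = 1 - 1/q \pm \varepsilon$ and $p_1^{\pm} = 1/q \mp \varepsilon$, which lie in the probability simplex for all $\varepsilon < 1/q$. The argument then reduces to two quantitative claims: an entropy gap $|\mathrm{H}_q(p^+) - \mathrm{H}_q(p^-)| = \Omega(\varepsilon)$, and a Hellinger bound $d_{\mathrm{H}}(p^+, p^-) = O(\sqrt{q}\,\varepsilon)$.

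For the entropy gap, I would write
\[
(q-1)\bigl[\mathrm{H}_q(p^-) - \mathrm{H}_q(p^+)\bigr] = f(\varepsilon) - f(-\varepsilon), \qquad f(t) = (1 - 1/q + t)^q + (1/q - t)^q,
\]
and Taylor-expand around $t = 0$. A direct computation gives $f'(0) = q\bigl[(1-1/q)^{q-1} - (1/q)^{q-1}\bigr]$. For $q \geq 3$ the bracketed quantity is bounded below by a positive absolute constant --- it equals $1/3$ at $q = 3$, is monotone increasing in $q$, and tends to $1/e$ as $q \to \infty$ --- so $f'(0) = \Omega(q)$ and thus $|f(\varepsilon) - f(-\varepsilon)| = 2\varepsilon f'(0) + O(\varepsilon^3) = \Omega(q\varepsilon)$ for $\varepsilon$ sufficiently small. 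Dividing by $q-1$ yields $|\mathrm{H}_q(p^-) - \mathrm{H}_q(p^+)| = \Omega(\varepsilon)$, so any algorithm that estimates $\mathrm{H}_q$ to additive error $c\varepsilon$ for a small absolute constant $c > 0$ can distinguish $p^+$ from $p^-$ with success probability at least $2/3$.

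For the Hellinger bound I would apply the identity $(\sqrt{a} - \sqrt{b})^2 = (a-b)^2/(\sqrt{a} + \sqrt{b})^2$ coordinatewise. Coordinate $0$ contributes at most $O(\varepsilon^2/(1 - 1/q)) = O(\varepsilon^2)$, while coordinate $1$ contributes at most $O(\varepsilon^2/(1/q)) = O(q\varepsilon^2)$; summing these and taking a square root yields $d_{\mathrm{H}}(p^+, p^-) = O(\sqrt{q}\,\varepsilon)$. Plugging into \cref{thm:qlower-dis-prob} gives a lower bound of $\Omega(1/d_{\mathrm{H}}(p^+, p^-)) = \Omega(1/(\sqrt{q}\,\varepsilon))$ on the quantum query complexity of distinguishing the two distributions, which by the previous step transfers to Tsallis entropy estimation after rescaling $\varepsilon$ by the constant $c$.

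The main technical obstacle I foresee is twofold: first, pinning down the uniform constant in the inequality $(1-1/q)^{q-1} - (1/q)^{q-1} \geq \Omega(1)$ over all integers $q \geq 3$, together with checking that the $O(\varepsilon^3)$ remainder in the Taylor expansion is dominated by the leading $\Omega(q\varepsilon)$ term for the specified range of $\varepsilon$; and second, handling the borderline case $q = 2$, where the symmetric pair $p^{\pm}$ is degenerate (the two distributions are permutations of each other with identical Tsallis entropy). For $q = 2$ the claim $\Omega(1/(\sqrt{q}\,\varepsilon)) = \Omega(1/\varepsilon)$ is the standard purity-estimation lower bound and can be re-derived by running the same Hellinger strategy against a suitably chosen asymmetric pair. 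Apart from these arithmetic subtleties, the remainder of the proof is a routine Taylor expansion plus a single application of \cref{thm:qlower-dis-prob}.
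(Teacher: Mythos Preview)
Your proposal is essentially the same approach as the paper's: the same binary hard instance $p^{\pm}$, the same reduction to \cref{thm:qlower-dis-prob} via a Hellinger-distance bound, and the same separate treatment of the degenerate case $q=2$ with an asymmetric pair (the paper uses $p_0^{\pm} = 2/3 \pm \varepsilon$). Two small remarks. First, your monotonicity claim for $(1-1/q)^{q-1} - (1/q)^{q-1}$ is not quite right --- the expression rises above $0.4$ near $q=5$ before descending to $1/e$ --- though the uniform lower bound $1/3$ over $q \ge 3$ that you actually need does hold. Second, the paper sidesteps your flagged remainder issue by replacing the Taylor expansion with the exact binomial identity $(a+x)^q - (a-x)^q = 2\sum_{j} \binom{q}{2j+1} a^{q-2j-1} x^{2j+1}$: applied at $a = 1-1/q$ and $a = 1/q$, every term in the difference is nonnegative for $q \ge 3$, so the $j=0$ term already gives a clean lower bound with no error term to control.
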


\begin{proof}
    We provide a lower bound of $\Omega\rbra{\frac{1}{\sqrt{q}\varepsilon}}$ for sufficiently large $q \geq 3$ in \cref{thm:lb} and a lower bound of $\Omega\rbra{\frac{1}{\varepsilon}}$ for constant $q \geq 2$ in \cref{thm:lb-constant-q}. 
    The two theorems together yield the proof. 
\end{proof}

We first prove the case of sufficiently large $q \geq 3$.

\begin{theorem} [Quantum lower bounds for sufficiently large $q \geq 3$] \label{thm:lb}
    Given quantum query access to the oracle for a probability distribution $p$, for every constant $t \in \rbra{0, \frac 1 e}$ and sufficiently integer $q \geq 3$, any quantum query algorithm that estimates $\mathrm{H}_q\rbra{p}$ to within additive error $\varepsilon$ requires query complexity $\Omega\rbra{\frac{1}{\sqrt{q}\varepsilon}}$ for all $\varepsilon \in \rbra{0, \frac{t}{q}}$.
\end{theorem}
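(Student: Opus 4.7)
The plan is to follow the distinguishing-distributions strategy already sketched for \cref{lemma:low-main}, turning a quantum lower bound for identifying one of two carefully chosen binary distributions into the desired lower bound on estimating $\mathrm{H}_q$. Concretely, I would fix the two $2$-dimensional distributions $p^\pm$ with $p_0^\pm = 1 - \tfrac{1}{q} \pm \varepsilon$ and $p_1^\pm = \tfrac{1}{q} \mp \varepsilon$. The hypothesis $\varepsilon \in (0, t/q)$ with $t < 1/e < 1$ guarantees that both $p^+$ and $p^-$ are valid probability distributions, and it will also be what makes the entropy and Hellinger estimates below have clean constants.

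The first substantive step is to certify a separation $\mathrm{H}_q(p^-) - \mathrm{H}_q(p^+) \geq c \varepsilon$ for some absolute constant $c = c(t) > 0$ and all sufficiently large $q$. By the mean value theorem applied to $x \mapsto x^q$ on the two intervals $(1 - 1/q - \varepsilon, 1 - 1/q + \varepsilon)$ and $(1/q - \varepsilon, 1/q + \varepsilon)$,
\begin{equation}
    \mathrm{H}_q(p^-) - \mathrm{H}_q(p^+) = \frac{2 \varepsilon q}{q-1} \bigl( \xi_1^{q-1} - \xi_2^{q-1} \bigr),
\end{equation}
for some $\xi_1 \geq 1 - (1+t)/q$ and $\xi_2 \leq (1+t)/q$. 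Since $\bigl(1 - (1+t)/q\bigr)^{q-1} \to e^{-(1+t)} > 0$ and $\bigl((1+t)/q\bigr)^{q-1} \to 0$ as $q \to \infty$, the bracket is bounded below by a positive constant for sufficiently large $q$, producing the claimed gap. Consequently, any quantum algorithm that estimates $\mathrm{H}_q$ to additive error $\varepsilon$ also estimates it to additive error $c\varepsilon/3$ after simple rescaling, and such an estimator must distinguish $p^+$ from $p^-$ with probability $\geq 2/3$.

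The second step is the Hellinger-distance upper bound $d_{\mathrm{H}}(p^+, p^-) = O(\sqrt{q}\,\varepsilon)$. Using the identity $(\sqrt{a} - \sqrt{b})^2 = (a-b)^2/(\sqrt{a} + \sqrt{b})^2$ on both coordinates, the coordinate near $1 - 1/q$ contributes $O(\varepsilon^2)$, while the coordinate near $1/q$ contributes $O(q \varepsilon^2)$ because $1/q - \varepsilon \geq (1-t)/q$. Summing gives $d_{\mathrm{H}}(p^+, p^-)^2 = O(q \varepsilon^2)$, as required.

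The final step is to invoke \cref{thm:qlower-dis-prob} (Belovs' lower bound on distinguishing probability distributions), which says that distinguishing $p^+$ from $p^-$ in the purified query model requires $\Omega\bigl(1/d_{\mathrm{H}}(p^+,p^-)\bigr) = \Omega(1/(\sqrt{q}\,\varepsilon))$ queries; combined with the reduction from the previous two steps, this yields the claimed bound. The main obstacle I anticipate is purely bookkeeping: tracking how the constants $t$ and $c(t)$ propagate so that the separation and Hellinger bounds hold uniformly for every $\varepsilon \in (0, t/q)$ once $q$ is taken sufficiently large, and in particular verifying that the asymptotic estimates $e^{-(1+t)}$ and the $(1-t)$ denominator are controlled by a single $q$-threshold depending only on $t$, not on $\varepsilon$.
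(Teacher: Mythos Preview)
Your overall strategy matches the paper's exactly: the same hard instances $p^\pm$, the same Hellinger estimate (the paper uses the equivalent inequality $\sqrt{a-x}\geq\sqrt{a}-x/\sqrt{a}$ in place of your identity $(\sqrt{a}-\sqrt{b})^2=(a-b)^2/(\sqrt{a}+\sqrt{b})^2$), and the same appeal to \cref{thm:qlower-dis-prob}.

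The one real issue is the ``simple rescaling'' step. Your mean-value bound gives a gap $\geq c(t)\,\varepsilon$ with $c(t)\approx 2e^{-(1+t)}<1<2$, so an algorithm achieving error $\varepsilon$ \emph{cannot} distinguish your $p^\pm$; the claim that ``an algorithm that estimates to error $\varepsilon$ also estimates to error $c\varepsilon/3$ after rescaling'' is false (you cannot sharpen an estimator's accuracy by rescaling). If instead you enlarge the instance parameter to some $\delta>\varepsilon$, your bound $\xi_1\geq 1-1/q-\delta$ degrades with $\delta$, and working this through only recovers the statement for $t$ below roughly $e^{-2}$, not for all $t<1/e$. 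The paper avoids this by (i) decoupling the instance parameter $\delta$ from the target error via $\delta=\varepsilon/t<1/q$, and (ii) lower-bounding the gap through the binomial expansion, obtaining
\[
\mathrm{H}_q(p^-)-\mathrm{H}_q(p^+)>\frac{2q}{q-1}\Bigl[\bigl(1-\tfrac1q\bigr)^{q-1}-\bigl(\tfrac1q\bigr)^{q-1}\Bigr]\delta,
\]
whose coefficient tends to $2/e$ \emph{independently of} $\delta$. Since $2/e>2t$, for large $q$ the gap exceeds $2t\delta=2\varepsilon$ and an $\varepsilon$-estimator distinguishes directly, with no rescaling needed. Equivalently, your mean-value argument can be sharpened: by convexity of $x\mapsto x^{q-1}$ on $[0,1]$, the intermediate point actually satisfies $\xi_1^{\,q-1}\geq(1-1/q)^{q-1}$, which recovers the paper's constant and closes the gap.
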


\begin{proof}
    We start with considering the problem of distinguishing the two probability distributions $p^\pm$ on the sample space $\cbra{0, 1}$, defined by
    \begin{align}
        p^\pm_0 & = 1 - \frac{1}{q} \pm \delta, \\
        p^\pm_1 & = \frac{1}{q} \mp \delta,
    \end{align}
    where $\delta \in (0, \frac{1}{q}]$ is to be determined later. 
    This hard instance was initially used in \cite{CWYZ25} where they considered the sample complexity of distinguishing two quantum states with their spectra being the probability distributions $p^{\pm}$. 

    The Hellinger distance between the two distributions $p^+$ and $p^-$ is 
    \begin{align}
        d_{\mathrm{H}}\rbra{p^+, p^-} 
        & = \sqrt{1 - \sqrt{\rbra*{1 - \frac{1}{q}}^2 - \delta^2} - \sqrt{\rbra*{\frac{1}{q}}^2 - \delta^2}} \\
        & \leq \sqrt{1 - \rbra*{\rbra*{1 - \frac{1}{q}} - \frac{\delta^2}{1-\frac{1}{q}}} - \rbra*{\frac{1}{q} - {q\delta^2}}} \label{eq:dHgeq} \\
        & = \frac{q\delta}{\sqrt{q-1}} \\
        & \leq O\rbra{\sqrt{q} \delta}, \label{eq:dH-p-pm}
    \end{align}
    where \cref{eq:dHgeq} uses the fact that $\sqrt{a - x} \geq \sqrt{a} - \frac{x}{\sqrt{a}}$ for all $0 \leq x \leq a$.

    Now suppose that we are given a quantum query oracle $\mathcal{O}_{p}$ for an unknown probability distribution $p$ on the sample space $\cbra{0, 1}$, i.e., $\mathcal{O}_{p}$ prepares the mixed quantum state 
    \begin{equation}
        \rho_p = p_0\ketbra{0}{0} + p_1\ketbra{1}{1},
    \end{equation}
    where it is promised that $p$ is either $p^+$ or $p^-$. 
    Then, we consider the difference between $\mathrm{H}_q\rbra{p^+}$ and $\mathrm{H}_q\rbra{p^-}$ as follows.
    \begin{align}
        \mathrm{H}_q\rbra{p^-} - \mathrm{H}_q\rbra{p^+} 
        & = \frac{1}{q-1}\rbra*{\rbra*{1-\frac{1}{q}+\delta}^q - \rbra*{1-\frac{1}{q}-\delta}^q + \rbra*{\frac{1}{q}-\delta}^q - \rbra*{\frac{1}{q}+\delta}^q} \\
        & = \frac{2}{q-1} \sum_{j=0}^{\floor{\frac{q-1}{2}}} \binom{q}{2j+1} \rbra*{\rbra*{1-\frac{1}{q}}^{q-2j-1} - \rbra*{\frac{1}{q}}^{q-2j-1}} \delta^{2j+1} \label{eq:Sq-diff} \\
        & > \frac{2}{q-1} \binom{q}{1} \rbra*{\rbra*{1-\frac{1}{q}}^{q-1} - \rbra*{\frac{1}{q}}^{q-1}} \delta^1 \label{eq:Sq-diff-ge} \\
        & = 2\rbra*{\rbra*{1-\frac{1}{q}}^q - \rbra*{1-\frac{1}{q}}\rbra*{\frac{1}{q}}^{q-1}} \delta. \label{eq:Sq-diff-q-delta}
    \end{align}
    where \cref{eq:Sq-diff} uses the fact that
    \begin{align}
        \rbra{a+x}^k - \rbra{a-x}^k
        & = \sum_{j=0}^k \binom{k}{j} a^{k-j} x^j - \sum_{j=0}^k \binom{k}{j} a^{k-j} \rbra{-x}^j \\
        & = 2\sum_{j = 0}^{\floor{\frac{k-1}{2}}} \binom{k}{2j+1} a^{k-2j-1} x^{2j+1}
    \end{align}
    for any real numbers $a \geq 0$ and $x \geq 0$ and integer $k \geq 0$, and \cref{eq:Sq-diff-ge} assumes that $q > 2$.
    Note that
    \begin{equation}
        \lim_{q \to \infty} \rbra*{\rbra*{1-\frac{1}{q}}^q - \rbra*{1-\frac{1}{q}}\rbra*{\frac{1}{q}}^{q-1}} = \frac{1}{e}.
    \end{equation}
    Therefore, for every constant $t \in \rbra{0, \frac{1}{e}}$, we have by \cref{eq:Sq-diff-q-delta} that 
    \begin{equation}
        \mathrm{H}_q\rbra{p^-} - \mathrm{H}_q\rbra{p^+} > 2t\delta
    \end{equation}
    for sufficiently large $q$.
    
    By taking $\varepsilon = t\delta \in \rbra{0, \frac{t}{q}}$, we have $\mathrm{H}_q\rbra{p^-} - \mathrm{H}_q\rbra{p^+} > 2\varepsilon$, and thus every quantum query algorithm with query complexity $Q$ that estimates $\mathrm{H}_q\rbra{p}$ to within additive error $\varepsilon$ can be used to determine whether $p$ is $p^+$ or $p^-$ using $Q$ queries to $\mathcal{O}_p$.
    Specifically, we first obtain an estimate, $\tilde H$,  of $\mathrm{H}_q\rbra{p}$ to within additive error $\varepsilon$ with probability $\geq \frac{2}{3}$ via this algorithm using $Q$ queries to $\mathcal{O}_p$. 
    Then, answer $p = p^+$ if $\tilde H < \mathrm{H}_q\rbra{p^+} + \varepsilon$, and $p = p^-$ otherwise. 
    It can be seen that this approach correctly determines whether $p = p^+$ or $p = p^-$ with probability $\geq \frac{2}{3}$.
    On the other hand, by \cref{thm:qlower-dis-prob}, any quantum query algorithm that distinguishes $p^+$ and $p^-$ requires query complexity $\Omega\rbra{\frac{1}{d_{\mathrm{H}}\rbra{p^+, p^-}}}$. 
    Therefore, by \cref{eq:dH-p-pm}, we have
    \begin{equation}
        Q \geq \Omega\rbra*{\frac{1}{d_{\mathrm{H}}\rbra{p^+, p^-}}} = \Omega\rbra*{\frac{1}{\sqrt{q}\delta}} = \Omega\rbra*{\frac{t}{\sqrt{q}\varepsilon}} = \Omega\rbra*{\frac{1}{\sqrt{q}\varepsilon}},
    \end{equation}
    where the last equality uses that $t$ is constant. 
\end{proof}

Then, we prove the case of constant $q \geq 2$. 

\begin{theorem} [Quantum lower bounds for constant $q \geq 2$] \label{thm:lb-constant-q}
    Given quantum query access to the oracle for a probability distribution $p$, for every constant integer $q \geq 2$, any quantum query algorithm that estimates $\mathrm{H}_q\rbra{p}$ to within additive error $\varepsilon$ requires query complexity $\Omega\rbra{\frac{1}{\varepsilon}}$ for sufficiently small $\varepsilon > 0$.
\end{theorem}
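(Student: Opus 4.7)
The plan is to adapt the argument from \cref{thm:lb} using a hard instance that does not degenerate at $q = 2$. The two-point distributions used there, namely $p^{\pm} = (1 - 1/q \pm \delta,\, 1/q \mp \delta)$, degenerate at $q = 2$ because then $1 - 1/q = 1/q = 1/2$, making the leading-order coefficient $a^{q-1} - (1-a)^{q-1}$ of the entropy difference vanish. To avoid this, I will replace the parameter $a = 1 - 1/q$ by a fixed constant $a \in (1/2, 1)$ independent of $q$ (for concreteness, $a = 3/4$), and consider $p^{\pm} = (a \pm \delta,\, 1-a \mp \delta)$.

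First, I will Taylor-expand the entropy difference, following the same computation as in \cref{eq:Sq-diff}-\cref{eq:Sq-diff-q-delta}. Extracting only the first-order term in $\delta$ gives
\begin{equation}
    \mathrm{H}_q\rbra{p^-} - \mathrm{H}_q\rbra{p^+} = \frac{2q}{q-1} \bigl(a^{q-1} - (1-a)^{q-1}\bigr) \delta + O\rbra{\delta^3}.
\end{equation}
For any integer $q \geq 2$ and $a > 1/2$, the coefficient of $\delta$ is strictly positive; for $q = 2$ and $a = 3/4$ it equals $4(2a-1) = 2$, in contrast to the vanishing coefficient of the original hard instance. For constant $q$ and sufficiently small $\delta$, this yields $\mathrm{H}_q\rbra{p^-} - \mathrm{H}_q\rbra{p^+} \geq c_q \delta$ for some constant $c_q > 0$ depending on $q$ and $a$.

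Next, I will estimate the Hellinger distance between $p^+$ and $p^-$ using the same technique as in \cref{eq:dHgeq}. Since $a$ and $1-a$ are fixed constants in $(0,1)$, the inequality $\sqrt{r^2 - \delta^2} \geq r - \delta^2/r$ yields $d_{\mathrm{H}}\rbra{p^+, p^-} \leq C \delta$ for some constant $C$ independent of $q$. Setting $\delta = 2\varepsilon/c_q = \Theta(\varepsilon)$ (since $q$ is constant) makes the entropy difference at least $2\varepsilon$ while keeping $d_{\mathrm{H}}\rbra{p^+, p^-} = O\rbra{\varepsilon}$. Then, as in the last paragraph of the proof of \cref{thm:lb}, any quantum algorithm that estimates $\mathrm{H}_q\rbra{p}$ to additive error $\varepsilon$ with probability $\geq 2/3$ can be used to distinguish $p^+$ from $p^-$, and \cref{thm:qlower-dis-prob} gives a query lower bound of $\Omega\rbra{1/d_{\mathrm{H}}} = \Omega\rbra{1/\varepsilon}$.

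The main obstacle is exactly the degeneracy at $q = 2$ of the original hard instance, which I resolve by decoupling the choice of $a$ from $q$. For any constant integer $q \geq 3$, one could alternatively invoke \cref{thm:lb} directly, since $\sqrt{q} = O(1)$ and hence $\Omega\rbra{1/\rbra{\sqrt{q}\varepsilon}} = \Omega\rbra{1/\varepsilon}$; but the above uniform argument covers both cases in one stroke, at the modest cost of slightly worse hidden constants depending on $q$.
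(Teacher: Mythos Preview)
Your proposal is correct and follows essentially the same approach as the paper: the paper uses the hard instance $p_0^\pm = \tfrac{2}{3}\pm\varepsilon$, $p_1^\pm = \tfrac{1}{3}\mp\varepsilon$, which is exactly your construction with $a=\tfrac{2}{3}$ instead of $a=\tfrac{3}{4}$, and then argues (just as you do) that $\mathrm{H}_q(p^-)-\mathrm{H}_q(p^+)\geq\Omega(\varepsilon)$ and $d_{\mathrm{H}}(p^+,p^-)\leq O(\varepsilon)$ before invoking \cref{thm:qlower-dis-prob}. Your diagnosis of why the $a=1-1/q$ instance degenerates at $q=2$ and your fix of decoupling $a$ from $q$ are precisely what the paper does.
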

\begin{proof}
    We start with considering the problem of distinguishing the two probability distributions $p^{\pm}$ on the sample space $\cbra{0, 1}$, defined by
    \begin{align}
        p^\pm_0 & = \frac{2}{3} \pm \varepsilon, \\
        p^\pm_1 & = \frac{1}{3} \mp \varepsilon,
    \end{align}
    where $\varepsilon > 0$ is sufficiently small. 
    This hard instance was used in \cite[Theorem 5]{CWLY23} and \cite[Lemma 3]{GHYZ24} where they considered the sample complexity of estimating the purity of quantum states with their spectra being the probability distributions $p^{\pm}$, respectively. 
    For constant integer $q \geq 2$, similar to the proof of \cref{thm:lb}, it can be shown that 
    \begin{align}
        \mathrm{H}_q\rbra{p^-} - \mathrm{H}_q\rbra{p^+} & \geq \Omega\rbra{\varepsilon}, \\
        d_{\mathrm{H}}\rbra{p^+, p^-} & \leq O\rbra{\varepsilon}. 
    \end{align}
    Therefore, any estimator for Tsallis entropy to within additive error $\Theta\rbra{\varepsilon}$ can be used to distinguish the two probability distributions $p^\pm$, which, by \cref{thm:qlower-dis-prob}, requires quantum query complexity 
    \begin{equation}
        \Omega\rbra*{\frac{1}{d_{\mathrm{H}}\rbra{p^+, p^-}}} = \Omega\rbra*{\frac{1}{\varepsilon}}.
    \end{equation}
\end{proof}

\section{Lower Bounds for Approximating Monomials} \label{sec:approx-deg}

In this section, we provide a proof of the lower bound $\Omega\rbra{\sqrt{q}}$ on the approximate degree of monomials $x^q$, which is based on the quantum (meta-)algorithm in \cref{lemma:meta} and the information-theoretic quantum query lower bound for Tsallis entropy estimation in \cref{thm:lb}. 

In the following, we first give a (non-uniform) quantum query algorithm for estimating the Tsallis entropy of a quantum state.

\begin{lemma} [(Non-uniform) quantum estimator for Tsallis entropy] \label{lemma:q-complexity-non-uniform}
    Given quantum query access to the state-preparation circuit of a mixed quantum state $\rho$, for every $t \in \rbra{0, 1}$, there is a (non-uniform) quantum query algorithm that, with probability $\geq \frac{2}{3}$, estimates $\mathrm{S}_q\rbra{\rho}$ to within additive error $\varepsilon$ with query complexity 
    \begin{equation}
        O\rbra*{\frac{\widetilde{\deg}_{t\rbra{q-1}\varepsilon}\rbra{x^{q-1}, \sbra{-1, 1}, \sbra{-1, 1}}}{\rbra{1-t}q\varepsilon}}.
    \end{equation}
\end{lemma}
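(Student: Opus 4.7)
The plan is to instantiate the meta-algorithm of \cref{lemma:meta} with the best approximation polynomial of the monomial $x^{q-1}$ and then choose the three error parameters $\varepsilon_{\poly}, \varepsilon_{\textup{QSVT}}, \varepsilon_{\textup{QAE}}$ so that the total error matches $\varepsilon$ while the query complexity matches the claimed bound. Since there is no requirement of computability here, we are free to set $\varepsilon_{\textup{QSVT}} = 0$ (the non-uniform branch of \cref{thm:qsvt}), which eliminates one of the three error terms.

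Concretely, set
\begin{equation}
\varepsilon_{\textup{QSVT}} = 0, \qquad \varepsilon_{\poly} = t\rbra{q-1}\varepsilon, \qquad \varepsilon_{\textup{QAE}} = \frac{\rbra{1-t}\rbra{q-1}\varepsilon}{2},
\end{equation}
and let $p$ be any polynomial of minimum degree satisfying the three conditions of Step 2 in \cref{sec:mata-algo} at accuracy $\varepsilon_{\poly}$. The existence of such a $p$ of degree $\widetilde{\deg}_{t\rbra{q-1}\varepsilon}\rbra{x^{q-1}, \sbra{-1, 1}, \sbra{-1, 1}}$ is essentially guaranteed by the definition of the approximate degree: condition~\ref{item:cond2-poly} is built into the codomain $\sbra{-1,1}$ of the approximate-degree notation, and condition~\ref{item:cond3:poly} on the parity comes for free from the symmetry of $x^{q-1}$ on $\sbra{-1,1}$ together with the standard symmetrization argument (replacing $p\rbra{x}$ by $\tfrac12\rbra{p\rbra{x} \pm p\rbra{-x}}$ can only decrease the uniform error, while preserving the codomain constraint).

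By \cref{lemma:meta}, the additive error of the returned estimate is at most
\begin{equation}
\frac{1}{q-1}\rbra*{\varepsilon_{\textup{QSVT}} + \varepsilon_{\poly} + 2\varepsilon_{\textup{QAE}}} = \frac{1}{q-1}\rbra*{t\rbra{q-1}\varepsilon + \rbra{1-t}\rbra{q-1}\varepsilon} = \varepsilon,
\end{equation}
with success probability $\geq \tfrac{2}{3}$, and the total query complexity is
\begin{equation}
O\rbra*{\frac{\deg\rbra{p}}{\varepsilon_{\textup{QAE}}}} = O\rbra*{\frac{\widetilde{\deg}_{t\rbra{q-1}\varepsilon}\rbra{x^{q-1}, \sbra{-1, 1}, \sbra{-1, 1}}}{\rbra{1-t}\rbra{q-1}\varepsilon}} = O\rbra*{\frac{\widetilde{\deg}_{t\rbra{q-1}\varepsilon}\rbra{x^{q-1}, \sbra{-1, 1}, \sbra{-1, 1}}}{\rbra{1-t}q\varepsilon}},
\end{equation}
where the last equality uses $q - 1 = \Theta\rbra{q}$ for $q \geq 2$.

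The only subtle point I anticipate is the parity condition \ref{item:cond3:poly} for an optimal polynomial $p$ attaining the approximate degree. This is not an obstacle in substance, but must be justified explicitly: the symmetrization argument above shows that among polynomials of the same degree meeting conditions \ref{item:cond1-poly} and \ref{item:cond2-poly}, there is always one with the correct parity, so the degree bound is unaffected. Everything else is a direct parameter substitution into \cref{lemma:meta}, and non-uniformity is inherited from the fact that a minimum-degree approximating polynomial need not be efficiently computable, which is exactly what allows us to take $\varepsilon_{\textup{QSVT}} = 0$.
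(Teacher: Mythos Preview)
Your proposal is correct and follows essentially the same approach as the paper's proof: both instantiate \cref{lemma:meta} with $\varepsilon_{\textup{QSVT}} = 0$, $\varepsilon_{\poly} = t\rbra{q-1}\varepsilon$, $\varepsilon_{\textup{QAE}} = \tfrac{1}{2}\rbra{1-t}\rbra{q-1}\varepsilon$, take $p$ to be a (not necessarily computable) optimal approximation polynomial of $x^{q-1}$, and handle the parity requirement via the standard symmetrization argument (the paper defers this to \cref{sec:even-odd-poly-approx}). Your remark that symmetrization also preserves the codomain constraint $\sbra{-1,1}$ is a nice touch that the paper leaves implicit.
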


\begin{proof}
Let $p \in \mathbb{R}\sbra{x}$ be the best approximation polynomial for $x^{q-1}$ (note that $p$ is not necessarily computable) such that 
\begin{equation}
    \sup_{x \in \sbra{-1, 1}} \abs*{p\rbra{x} - x^{q-1}} \leq \varepsilon_{\poly}, \quad \deg\rbra{p} = \widetilde{\deg}_{\varepsilon_{\poly}}\rbra{x^{q-1}, \sbra{-1, 1}, \sbra{-1, 1}}.
\end{equation}
Without loss of generality, we can assume that $p$ is even/odd with parity $\rbra{q-1} \bmod 2$ (see \cref{sec:even-odd-poly-approx}). 

By \cref{lemma:meta} with $\varepsilon_{\textup{QSVT}} \coloneqq 0$, $\varepsilon_{\poly} \coloneqq t\rbra{q-1}\varepsilon$ and $\varepsilon_{\textup{QAE}} \coloneqq \rbra{1-t}\rbra{q-1}\varepsilon/2$, there is a (non-uniform) quantum query algorithm that estimates $\mathrm{S}_q\rbra{\rho}$ to within additive error 
\begin{equation}
    \frac{1}{q-1} \rbra*{\varepsilon_{\textup{QSVT}} + \varepsilon_{\poly} + 2\varepsilon_{\textup{QAE}}} = \varepsilon
\end{equation}
and with query complexity
\begin{equation}
    O\rbra*{\frac{\deg\rbra{p}}{\varepsilon_{\textup{QAE}}}} = O\rbra*{\frac{\widetilde{\deg}_{t\rbra{q-1}\varepsilon}\rbra{x^{q-1}, \sbra{-1, 1}, \sbra{-1, 1}}}{\rbra{1-t}q\varepsilon}}.
\end{equation}

\end{proof}

Combining the (non-uniform) quantum query algorithm in \cref{lemma:q-complexity-non-uniform} and the quantum lower bound in \cref{thm:lb}, we can show a matching lower bound on the approximate degree of the monomial $x^q$ as follows. 

\begin{theorem} \label{thm:monomial-approx-lower-bound-fixed-range}
    For every constant $\varepsilon \in \rbra{0, \frac{1}{e}}$, we have 
    \begin{equation}
        \widetilde{\deg}_{\varepsilon}\rbra{x^{q}, \sbra{-1, 1}, \sbra{-1, 1}} = \Omega\rbra{\sqrt{q}}
    \end{equation}
    for all sufficiently large integer $q$. 
\end{theorem}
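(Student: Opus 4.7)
The plan is to combine the (non-uniform) quantum upper bound from \cref{lemma:q-complexity-non-uniform} with the information-theoretic quantum lower bound from \cref{thm:lb}: if the approximate degree of $x^{q-1}$ were too small, then \cref{lemma:q-complexity-non-uniform} would give a Tsallis entropy estimator beating the complexity floor imposed by \cref{thm:lb}. After deriving the bound for $x^{q-1}$, a trivial reindexing $q \mapsto q+1$ yields the bound for $x^q$.

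\textbf{Parameter setup.} Fix a constant $\varepsilon \in \rbra{0, 1/e}$. Because $\varepsilon < 1/e$, I can choose constants $s \in \rbra{\varepsilon, 1/e}$ and $t \in \rbra{0,1}$ with $st > \varepsilon$ (pick $s$ strictly between $\varepsilon$ and $1/e$, then pick $t$ close enough to $1$). For sufficiently large integer $q \geq 3$, we have $st\,(q-1)/q > \varepsilon$, so setting
\begin{equation}
    \varepsilon' \coloneqq \frac{\varepsilon}{t\rbra{q-1}}
\end{equation}
ensures $\varepsilon' \in \rbra{0, s/q}$, which is the range in which \cref{thm:lb} applies at order $q$.

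\textbf{Combining the two bounds.} Apply \cref{lemma:q-complexity-non-uniform} at order $q$ with the parameter $t$ above and target additive error $\varepsilon'$; this produces a (non-uniform) quantum algorithm for estimating $\mathrm{H}_q\rbra{p}$ with query complexity
\begin{equation}
    O\rbra*{\frac{\widetilde{\deg}_{t\rbra{q-1}\varepsilon'}\rbra{x^{q-1}, \sbra{-1,1}, \sbra{-1,1}}}{\rbra{1-t}q\varepsilon'}} = O\rbra*{\frac{\widetilde{\deg}_{\varepsilon}\rbra{x^{q-1}, \sbra{-1,1}, \sbra{-1,1}}}{\rbra{1-t}q\varepsilon'}},
\end{equation}
using $t\rbra{q-1}\varepsilon' = \varepsilon$. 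By \cref{thm:lb}, any such estimator must use $\Omega\rbra{1/\rbra{\sqrt{q}\varepsilon'}}$ queries, so
\begin{equation}
    \frac{\widetilde{\deg}_{\varepsilon}\rbra{x^{q-1}, \sbra{-1,1}, \sbra{-1,1}}}{\rbra{1-t}q\varepsilon'} \geq \Omega\rbra*{\frac{1}{\sqrt{q}\varepsilon'}}.
\end{equation}
Rearranging and using that $t$ is a constant gives $\widetilde{\deg}_{\varepsilon}\rbra{x^{q-1}, \sbra{-1,1}, \sbra{-1,1}} \geq \Omega\rbra{\sqrt{q}}$. Replacing $q$ by $q+1$ yields the claim $\widetilde{\deg}_{\varepsilon}\rbra{x^{q}, \sbra{-1,1}, \sbra{-1,1}} = \Omega\rbra{\sqrt{q}}$ for all sufficiently large integers $q$.

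\textbf{Main obstacle.} The only delicate point is the parameter bookkeeping: the target-error parameter $\varepsilon'$ driving the Tsallis estimator must simultaneously satisfy the constraint $\varepsilon' < s/q$ imposed by \cref{thm:lb} \emph{and} the equation $t\rbra{q-1}\varepsilon' = \varepsilon$ needed to extract an $\varepsilon$-approximation bound for $x^{q-1}$. Together these force $\varepsilon < st\rbra{q-1}/q$, whose limit $st$ can be made strictly less than $1/e$ but no larger. This is precisely why the threshold $1/e$ appears in the statement, and extending the admissible range of $\varepsilon$ beyond $1/e$ would require a quantitatively sharper version of \cref{thm:lb}.
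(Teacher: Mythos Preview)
Your proof is correct and follows essentially the same approach as the paper: combine the non-uniform upper bound of \cref{lemma:q-complexity-non-uniform} with the lower bound of \cref{thm:lb}, then reindex $q\mapsto q+1$. Your parameter bookkeeping (fixing the target $\varepsilon$ first and then choosing auxiliary constants $s,t$) is slightly more streamlined than the paper's chain of substitutions, but the argument is the same.
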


\begin{proof}
By \cref{lemma:q-complexity-non-uniform}, for every constant $t \in \rbra{0, 1}$, there is a (non-uniform) quantum query algorithm that estimates $\mathrm{S}_q\rbra{\rho}$ to within additive error $\varepsilon$ with probability $\geq \frac{2}{3}$ with query complexity
\begin{equation}
    O\rbra*{\frac{\widetilde{\deg}_{t\rbra{q-1}\varepsilon}\rbra{x^{q-1}, \sbra{-1, 1}, \sbra{-1, 1}}}{q\varepsilon}}.
\end{equation}
By \cref{thm:lb}, for every constant $\gamma \in \rbra{0, 1/e}$ and all sufficiently large integer $q \geq 3$, we have that for every $\varepsilon \in \rbra{0, \gamma/q}$,
\begin{equation}
    O\rbra*{\frac{\widetilde{\deg}_{t\rbra{q-1}\varepsilon}\rbra{x^{q-1}, \sbra{-1, 1}, \sbra{-1, 1}}}{q\varepsilon}} \geq \Omega\rbra*{\frac{1}{\sqrt{q}\varepsilon}},
\end{equation}
which gives
\begin{equation} \label{eq:deg-geq-sqrt-q}
    \widetilde{\deg}_{t\rbra{q-1}\varepsilon}\rbra{x^{q-1}, \sbra{-1, 1}, \sbra{-1, 1}} = \Omega\rbra{\sqrt{q}}.
\end{equation}
By letting $\varepsilon' = t\rbra{q-1}\varepsilon \in \rbra{0, t\gamma\rbra{1-1/q}}$, $t' = t\gamma$ and $q' = q - 1$, we can rephrase \cref{eq:deg-geq-sqrt-q} as follows: 
for every constant $t' \in \rbra{0, 1/e}$ and all sufficiently large integer $q' \geq 3$, 
\begin{equation}
    \widetilde{\deg}_{\varepsilon'}\rbra{x^{q'}, \sbra{-1, 1}, \sbra{-1, 1}} = \Omega\rbra{\sqrt{q'}}
\end{equation}
for every $\varepsilon' \in \rbra{0, t'-t'/q}$.
This actually implies that for every constant $\varepsilon'' \in \rbra{0, 1/e}$, we have 
\begin{equation}
    \widetilde{\deg}_{\varepsilon''}\rbra{x^{q''}, \sbra{-1, 1}, \sbra{-1, 1}} = \Omega\rbra{\sqrt{q''}}
\end{equation}
for sufficiently large integer $q'' \geq 3$, which can be seen by taking constant $t' = \frac{\varepsilon''}{2} + \frac{1}{2e} \in \rbra{\varepsilon'', 1/e}$ and sufficiently large integer $q' > \frac{t'}{t'-\varepsilon''}$. 
\end{proof}

To make the lower bound in \cref{thm:monomial-approx-lower-bound-fixed-range} more general, we extend it to the case where the function value of the approximation polynomial can be any real numbers. 

\begin{theorem} \label{thm:approx-deg-lb}
    For every constant $\varepsilon \in \rbra{0, \frac{1}{2e}}$, we have
    \begin{equation}
        \widetilde{\deg}_{\varepsilon}\rbra{x^{q}, \sbra{-1, 1}, \mathbb{R}} = \Omega\rbra{\sqrt{q}}
    \end{equation}
    for sufficiently large integer $q$.
\end{theorem}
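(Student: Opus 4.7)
The plan is to reduce the $\mathbb{R}$-valued case to the $[-1,1]$-valued case covered by \cref{thm:monomial-approx-lower-bound-fixed-range}, by a simple rescaling trick that costs at most a factor of $2$ in the precision parameter.

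Suppose $p \in \mathbb{R}[x]$ is any polynomial of degree $d$ with $\sup_{x \in [-1,1]} |p(x) - x^q| \leq \varepsilon$. Since $|x^q| \leq 1$ on $[-1,1]$, the triangle inequality gives $|p(x)| \leq 1 + \varepsilon$ on $[-1,1]$. Define the rescaled polynomial $\tilde p(x) \coloneqq p(x)/(1+\varepsilon)$, which has the same degree as $p$ and satisfies $|\tilde p(x)| \leq 1$ on $[-1,1]$. The key calculation I would perform is
\begin{equation}
    |\tilde p(x) - x^q| \leq \frac{|p(x) - x^q|}{1+\varepsilon} + \left|\frac{x^q}{1+\varepsilon} - x^q\right| \leq \frac{\varepsilon}{1+\varepsilon} + \frac{\varepsilon}{1+\varepsilon} \leq 2\varepsilon,
\end{equation}
so $\tilde p$ is a $[-1,1]$-valued polynomial that approximates $x^q$ on $[-1,1]$ to precision $2\varepsilon$. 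This shows the general inequality
\begin{equation}
    \widetilde{\deg}_{2\varepsilon}(x^q, [-1,1], [-1,1]) \leq \widetilde{\deg}_{\varepsilon}(x^q, [-1,1], \mathbb{R}).
\end{equation}

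Now I would apply \cref{thm:monomial-approx-lower-bound-fixed-range} to the left-hand side with precision $2\varepsilon$. That theorem requires $2\varepsilon \in (0, 1/e)$, which translates exactly into the hypothesis $\varepsilon \in (0, \frac{1}{2e})$ of the statement to prove. Under this hypothesis, and for sufficiently large integer $q$, \cref{thm:monomial-approx-lower-bound-fixed-range} yields $\widetilde{\deg}_{2\varepsilon}(x^q, [-1,1], [-1,1]) = \Omega(\sqrt{q})$, and combining with the above reduction gives $\widetilde{\deg}_{\varepsilon}(x^q, [-1,1], \mathbb{R}) = \Omega(\sqrt{q})$, as desired.

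There is no real obstacle here: the hard work has been done in establishing \cref{thm:monomial-approx-lower-bound-fixed-range} via the information-theoretic route, and only a one-line rescaling argument is needed to extend the conclusion from $[-1,1]$-valued approximants to $\mathbb{R}$-valued ones. The only minor point worth double-checking is the arithmetic giving the constant $\frac{1}{2e}$: the factor of $2$ loss in precision is tight for this argument, which is why the admissible range in the conclusion is exactly half of the $(0, 1/e)$ range in the preceding theorem.
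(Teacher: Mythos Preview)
Your proposal is correct and takes essentially the same approach as the paper: both reduce to \cref{thm:monomial-approx-lower-bound-fixed-range} by rescaling the approximating polynomial so that it lands in $[-1,1]$, incurring a factor-of-two loss in precision that explains the constant $\frac{1}{2e}$. The only cosmetic difference is that the paper multiplies by $(1-\varepsilon)$ whereas you divide by $(1+\varepsilon)$; both work and yield the same reduction $\widetilde{\deg}_{\varepsilon}(x^q,[-1,1],\mathbb{R}) \geq \widetilde{\deg}_{2\varepsilon}(x^q,[-1,1],[-1,1])$.
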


\begin{proof}
Let $p \in \mathbb{R}\sbra{x}$ be the best approximation polynomial of $x^{q}$ such that 
\begin{equation}
    \sup_{x \in \sbra{-1, 1}} \abs*{p\rbra{x} - x^{q}} \leq \varepsilon, \quad \deg\rbra{p} = \widetilde{\deg}_{\varepsilon}\rbra{x^{q}, \sbra{-1, 1}, \mathbb{R}}.
\end{equation}
Let $r\rbra{x} = \rbra{1-\varepsilon}p\rbra{x}$.
Then,
\begin{equation}
    \sup_{x \in \sbra{-1, 1}} \abs*{r\rbra{x} - x^{q}} \leq 2\varepsilon, \quad \sup_{x \in \sbra{-1, 1}} \abs{r\rbra{x}} \leq 1.
\end{equation}
Therefore,
\begin{equation}
    \deg\rbra{p} = \deg\rbra{r} \geq \widetilde{\deg}_{2\varepsilon}\rbra{x^{q}, \sbra{-1, 1}, \sbra{-1, 1}} \geq \Omega\rbra{\sqrt{q}},
\end{equation}
where the last inequality is by \cref{thm:monomial-approx-lower-bound-fixed-range}. 
\end{proof}

\bibliographystyle{alphaurl}
\bibliography{main}

\newcommand{\etalchar}[1]{$^{#1}$}
\begin{thebibliography}{vAGGdW20}

\bibitem[AA22]{AA22}
Amol Aggarwal and Josh Alman.
\newblock Optimal-degree polynomial approximations for exponentials and {Gaussian} kernel density estimation.
\newblock In {\em Proceedings of the 37th Computational Complexity Conference}, pages 22:1--22:23, 2022.
\newblock \href {https://doi.org/10.4230/LIPIcs.CCC.2022.22} {\path{doi:10.4230/LIPIcs.CCC.2022.22}}.

\bibitem[AAKS21]{AAKS21}
Anurag Anshu, Srinivasan Arunachalam, Tomotaka Kuwahara, and Mehdi Soleimanifar.
\newblock Sample-efficient learning of interacting quantum systems.
\newblock {\em Nature Physics}, 17(8):931--935, 2021.
\newblock \href {https://doi.org/10.1038/s41567-021-01232-0} {\path{doi:10.1038/s41567-021-01232-0}}.

\bibitem[AdSM{\etalchar{+}}10]{AdSM+10}
J.~S. {Andrade, Jr.}, G.~F.~T. da~Silva, A.~A. Moreira, F.~D. Nobre, and E.~M.~F. Curado.
\newblock Thermostatistics of overdamped motion of interacting particles.
\newblock {\em Physical Review Letters}, 105(26):260601, 2010.
\newblock \href {https://doi.org/10.1103/PhysRevLett.105.260601} {\path{doi:10.1103/PhysRevLett.105.260601}}.

\bibitem[AISW20]{AISW20}
Jayadev Acharya, Ibrahim Issa, Nirmal~V. Shende, and Aaron~B. Wagner.
\newblock Estimating quantum entropy.
\newblock {\em IEEE Journal on Selected Areas in Information Theory}, 1(2):454--468, 2020.
\newblock \href {https://doi.org/10.1109/JSAIT.2020.3015235} {\path{doi:10.1109/JSAIT.2020.3015235}}.

\bibitem[AJL09]{AJL09}
Dorit Aharonov, Vaughan Jones, and Zeph Landau.
\newblock A polynomial quantum algorithm for approximating the {Jones} polynomial.
\newblock {\em Algorithmica}, 55(3):395--421, 2009.
\newblock \href {https://doi.org/10.1007/s00453-008-9168-0} {\path{doi:10.1007/s00453-008-9168-0}}.

\bibitem[AK01]{AK01}
Andr{\'{a}}s Antos and Ioannis Kontoyiannis.
\newblock Convergence properties of functional estimates for discrete distributions.
\newblock {\em Random Structures \& Algorithms}, 19(3--4):163--193, 2001.
\newblock \href {https://doi.org/10.1002/rsa.10019} {\path{doi:10.1002/rsa.10019}}.

\bibitem[Akh56]{Akh56}
Naum~Il'ich Akhiezer.
\newblock {\em Theory of Approximation}.
\newblock Frederick Ungar Publishing Company, 1956.

\bibitem[Amb12]{Amb12}
Andris Ambainis.
\newblock Variable time amplitude amplification and quantum algorithms for linear algebra problems.
\newblock In {\em Proceedings of the 29th International Symposium on Theoretical Aspects of Computer Science}, pages 636--647, 2012.
\newblock \href {https://doi.org/10.4230/LIPIcs.STACS.2012.636} {\path{doi:10.4230/LIPIcs.STACS.2012.636}}.

\bibitem[AOST17]{AOST17}
Jayadev Acharya, Alon Orlitsky, Ananda~Theertha Suresh, and Himanshu Tyagi.
\newblock Estimating {Renyi} entropy of discrete distributions.
\newblock {\em IEEE Transactions on Information Theory}, 63(1):38--56, 2017.
\newblock \href {https://doi.org/10.1109/TIT.2016.2620435} {\path{doi:10.1109/TIT.2016.2620435}}.

\bibitem[ATS03]{ATS03}
Dorit Aharonov and Amnon Ta-Shma.
\newblock Adiabatic quantum state generation and statistical zero knowledge.
\newblock In {\em Proceedings of the 35th Annual ACM Symposium on Theory of Computing}, pages 20--29, 2003.
\newblock \href {https://doi.org/10.1145/780542.780546} {\path{doi:10.1145/780542.780546}}.

\bibitem[BACS07]{BACS07}
Dominic~W. Berry, Graeme Ahokas, Richard Cleve, and Barry~C. Sanders.
\newblock Efficient quantum algorithms for simulating sparse {Hamiltonians}.
\newblock {\em Communications in Mathematical Physics}, 270(2):359--371, 2007.
\newblock \href {https://doi.org/10.1007/s00220-006-0150-x} {\path{doi:10.1007/s00220-006-0150-x}}.

\bibitem[BASTS10]{BASTS10}
Avraham Ben-Aroya, Oded Schwartz, and Amnon Ta-Shma.
\newblock Quantum expanders: motivation and construction.
\newblock {\em Theory of Computing}, 6(3):47--79, 2010.
\newblock \href {https://doi.org/10.4086/toc.2010.v006a003} {\path{doi:10.4086/toc.2010.v006a003}}.

\bibitem[Bau11]{Bau11}
Bernhard Baumgartner.
\newblock An inequality for the trace of matrix products, using absolute values.
\newblock ArXiv e-prints, 2011.
\newblock \href {https://arxiv.org/abs/1106.6189} {\path{arXiv:1106.6189}}.

\bibitem[BBC{\etalchar{+}}01]{BBC+01}
Robert Beals, Harry Buhrman, Richard Cleve, Michele Mosca, and Ronald de~Wolf.
\newblock Quantum lower bounds by polynomials.
\newblock {\em Journal of the ACM}, 48(4):778--797, 2001.
\newblock \href {https://doi.org/10.1145/502090.502097} {\path{doi:10.1145/502090.502097}}.

\bibitem[BCC{\etalchar{+}}14]{BCC+14}
Dominic~W. Berry, Andrew~M. Childs, Richard Cleve, Robin Kothari, and Rolando~D. Somma.
\newblock Exponential improvement in precision for simulating sparse {Hamiltonians}.
\newblock In {\em Proceedings of the 46th Annual ACM Symposium on Theory of Computing}, pages 283--292, 2014.
\newblock \href {https://doi.org/10.1145/2591796.2591854} {\path{doi:10.1145/2591796.2591854}}.

\bibitem[BCC{\etalchar{+}}15]{BCC+15}
Dominic~W. Berry, Andrew~M. Childs, Richard Cleve, Robin Kothari, and Rolando~D. Somma.
\newblock Simulating {Hamiltonian} dynamics with a truncated {Taylor} series.
\newblock {\em Physical Review Letters}, 114(9):090502, 2015.
\newblock \href {https://doi.org/10.1103/PhysRevLett.114.090502} {\path{doi:10.1103/PhysRevLett.114.090502}}.

\bibitem[BCE{\etalchar{+}}05]{BCE+05}
Fabio~Antonio Bovino, Giuseppe Castagnoli, Artur Ekert, Pawe{\l} Horodecki, Carolina~Moura Alves, and Alexander~Vladimir Sergienko.
\newblock Direct measurement of nonlinear properties of bipartite quantum states.
\newblock {\em Physical Review Letters}, 95(24):240407, 2005.
\newblock \href {https://doi.org/10.1103/PhysRevLett.95.240407} {\path{doi:10.1103/PhysRevLett.95.240407}}.

\bibitem[BCK15]{BCK15}
Dominic~W. Berry, Andrew~M. Childs, and Robin Kothari.
\newblock Hamiltonian simulation with nearly optimal dependence on all parameters.
\newblock In {\em Proceedings of the 56th IEEE Annual Symposium on Foundations of Computer Science}, pages 792--809, 2015.
\newblock \href {https://doi.org/10.1109/FOCS.2015.54} {\path{doi:10.1109/FOCS.2015.54}}.

\bibitem[BCWdW01]{BCWdW01}
Harry Buhrman, Richard Cleve, John Watrous, and Ronald de~Wolf.
\newblock Quantum fingerprinting.
\newblock {\em Physical Review Letters}, 87(16):167902, 2001.
\newblock \href {https://doi.org/10.1103/PhysRevLett.87.167902} {\path{doi:10.1103/PhysRevLett.87.167902}}.

\bibitem[BDBGK18]{BDBGK18}
Shalev Ben-David, Adam Bouland, Ankit Garg, and Robin Kothari.
\newblock Classical lower bounds from quantum upper bounds.
\newblock In {\em Proceedings of the 59th IEEE Annual Symposium on Foundations of Computer Science}, pages 339--349, 2018.
\newblock \href {https://doi.org/10.1109/FOCS.2018.00040} {\path{doi:10.1109/FOCS.2018.00040}}.

\bibitem[BDKR05]{BDKR05}
Tugkan Batu, Sanjoy Dasgupta, Ravi Kumar, and Ronitt Rubinfeld.
\newblock The complexity of approximating the entropy.
\newblock {\em SIAM Journal on Computing}, 35(1):132--150, 2005.
\newblock \href {https://doi.org/10.1137/S0097539702403645} {\path{doi:10.1137/S0097539702403645}}.

\bibitem[Bel19]{Bel19}
Aleksandrs Belovs.
\newblock Quantum algorithms for classical probability distributions.
\newblock In {\em Proceedings of the 27th Annual European Symposium on Algorithms}, pages 16:1--16:11, 2019.
\newblock \href {https://doi.org/10.4230/LIPIcs.ESA.2019.16} {\path{doi:10.4230/LIPIcs.ESA.2019.16}}.

\bibitem[Ber14]{Ber14}
Serge Bernstein.
\newblock Sur la meilleure approximation de $|x|$ par des polynomes de degr{\'e}s donn{\'e}s.
\newblock {\em Acta Mathematica}, 37(1):1--57, 1914.
\newblock \href {https://doi.org/10.1007/BF02401828} {\path{doi:10.1007/BF02401828}}.

\bibitem[Ber38]{Ber38}
Serge Bernstein.
\newblock Sur la meilleure approximation de $|x|^p$ par des polynômes de degrés très élevés.
\newblock {\em Izvestiya Akademii Nauk SSSR. Seriya Matematicheskaya}, 2(2):169--190, 1938.
\newblock URL: \url{https://www.mathnet.ru/eng/im3513}.

\bibitem[Ber58]{Ber58}
S.~N. Bernstein.
\newblock The constructive theory of functions (1905--1930).
\newblock In {\em Collected Works}, volume~1. Atomic Energy Commission, 1958.

\bibitem[BFOS84]{BFOS84}
Leo Breiman, Jerome~H. Friedman, Richard~A. Olshen, and Charles~J. Stone.
\newblock {\em Classification and Regression Trees}.
\newblock Chapman \& Hall/CRC, 1 edition, 1984.
\newblock \href {https://doi.org/10.1201/9781315139470} {\path{doi:10.1201/9781315139470}}.

\bibitem[BHH11]{BHH11}
Sergey Bravyi, Aram~W. Harrow, and Avinatan Hassidim.
\newblock Quantum algorithms for testing properties of distributions.
\newblock {\em IEEE Transactions on Information Theory}, 57(6):3971--3981, 2011.
\newblock \href {https://doi.org/10.1109/TIT.2011.2134250} {\path{doi:10.1109/TIT.2011.2134250}}.

\bibitem[BHMT02]{BHMT02}
Gilles Brassard, Peter H{\o}yer, Michele Mosca, and Alain Tapp.
\newblock Quantum amplitude amplification and estimation.
\newblock In Samuel~J. Lomonaco, Jr. and Howard~E. Brandt, editors, {\em Quantum Computation and Information}, volume 305 of {\em Contemporary Mathematics}, pages 53--74. AMS, 2002.
\newblock \href {https://doi.org/10.1090/conm/305/05215} {\path{doi:10.1090/conm/305/05215}}.

\bibitem[BKT20]{BKT20}
Mark Bun, Robin Kothari, and Justin Thaler.
\newblock The polynomial method strikes back: tight quantum query bounds via dual polynomials.
\newblock {\em Theory of Computing}, 16(10):1--71, 2020.
\newblock \href {https://doi.org/10.4086/toc.2020.v016a010} {\path{doi:10.4086/toc.2020.v016a010}}.

\bibitem[BMW16]{BMW16}
Mohammad Bavarian, Saeed Mehraban, and John Wright.
\newblock Learning entropy.
\newblock A manuscript on von Neumann entropy estimation, private communication, 2016.

\bibitem[Bor81]{Bor81}
Peter~B. Borwein.
\newblock Approximation of $x^n$ by reciprocals of polynomials.
\newblock {\em Journal of Approximation Theory}, 32(3):241--246, 1981.
\newblock \href {https://doi.org/10.1016/0021-9045(81)90119-2} {\path{doi:10.1016/0021-9045(81)90119-2}}.

\bibitem[Bru04]{Bru04}
Todd~A. Bruni.
\newblock Measurimg polynomial functions of states.
\newblock {\em Quantum Information and Computation}, 4(5):401--408, 2004.
\newblock \href {https://doi.org/10.26421/QIC4.5-6} {\path{doi:10.26421/QIC4.5-6}}.

\bibitem[BT22]{BT22}
Mark Bun and Justin Thaler.
\newblock Approximate degree in classical and quantum computing.
\newblock {\em Foundations and Trends\textregistered~in Theoretical Computer Science}, 15(3--5):229--423, 2022.
\newblock \href {https://doi.org/10.1561/0400000107} {\path{doi:10.1561/0400000107}}.

\bibitem[Bus12]{Bus12}
Jorge Bustamante.
\newblock {\em Algebraic Approximation: A Guide to Past and Current Solutions}.
\newblock Frontiers in Mathematics. Birkh{\"{a}}user Basel, 2012.
\newblock \href {https://doi.org/10.1007/978-3-0348-0194-2} {\path{doi:10.1007/978-3-0348-0194-2}}.

\bibitem[CGJ19]{CGJ19}
Shantanav Chakraborty, Andr\'{a}s Gily\'{e}n, and Stacey Jeffery.
\newblock The power of block-encoded matrix powers: Improved regression techniques via faster {Hamiltonian} simulation.
\newblock In {\em Proceedings of the 46th International Colloquium on Automata, Languages, and Programming}, pages 33:1--33:14, 2019.
\newblock \href {https://doi.org/10.4230/LIPIcs.ICALP.2019.33} {\path{doi:10.4230/LIPIcs.ICALP.2019.33}}.

\bibitem[CKO24]{CRO24}
Cl{\'e}ment~L. Canonne, Robin Kothari, and Ryan O'Donnell.
\newblock Uniformity testing when you have the source code.
\newblock ArXiv e-prints, 2024.
\newblock \href {https://arxiv.org/abs/2411.04972} {\path{arXiv:2411.04972}}.

\bibitem[CKS17]{CKS17}
Andrew~M. Childs, Robin Kothari, and Rolando~D. Somma.
\newblock Quantum algorithm for systems of linear equations with exponentially improved dependence on precision.
\newblock {\em SIAM Journal on Computing}, 46(6):1920--1950, 2017.
\newblock \href {https://doi.org/10.1137/16M1087072} {\path{doi:10.1137/16M1087072}}.

\bibitem[CLW20]{CLW20}
Anirban~N. Chowdhury, Guang~Hao Low, and Nathan Wiebe.
\newblock A variational quantum algorithm for preparing quantum {Gibbs} states.
\newblock ArXiv e-prints, 2020.
\newblock \href {https://arxiv.org/abs/2002.00055} {\path{arXiv:2002.00055}}.

\bibitem[Cod70]{Cod70}
W.~J. Cody.
\newblock A survey of practical rational and polynomial approximation of functions.
\newblock {\em SIAM Review}, 12(3):400--423, 1970.
\newblock \href {https://doi.org/10.1137/1012082} {\path{doi:10.1137/1012082}}.

\bibitem[CW12]{CW12}
Andrew~M. Childs and Nathan Wiebe.
\newblock Hamiltonian simulation using linear combinations of unitary operations.
\newblock {\em Quantum Information and Computation}, 12(11--12):901--924, 2012.
\newblock \href {https://doi.org/10.26421/QIC12.11-12-1} {\path{doi:10.26421/QIC12.11-12-1}}.

\bibitem[CW25]{CW25}
Kean Chen and Qisheng Wang.
\newblock Improved sample upper and lower bounds for trace estimation of quantum state powers.
\newblock In {\em Proceedings of the 38th Conference on Learning Theory}, pages 1008--1028, 2025.
\newblock URL: \url{https://proceedings.mlr.press/v291/chen25d.html}.

\bibitem[CWLY23]{CWLY23}
Kean Chen, Qisheng Wang, Peixun Long, and Mingsheng Ying.
\newblock Unitarity estimation for quantum channels.
\newblock {\em IEEE Transactions on Information Theory}, 69(8):5116--5134, 2023.
\newblock \href {https://doi.org/10.1109/TIT.2023.3263645} {\path{doi:10.1109/TIT.2023.3263645}}.

\bibitem[CWYZ25]{CWYZ25}
Kean Chen, Qisheng Wang, Zhan Yu, and Zhicheng Zhang.
\newblock Simultaneous estimation of nonlinear functionals of a quantum state.
\newblock ArXiv e-prints, 2025.
\newblock \href {https://arxiv.org/abs/2505.16715} {\path{arXiv:2505.16715}}.

\bibitem[EAO{\etalchar{+}}02]{EAO+02}
Artur~K. Ekert, Carolina~Moura Alves, Daniel K.~L. Oi, Micha{\l} Horodecki, Pawe{\l} Horodecki, and L.~C. Kwek.
\newblock Direct estimations of linear and nonlinear functionals of a quantum state.
\newblock {\em Physical Review Letters}, 88(21):217901, 2002.
\newblock \href {https://doi.org/10.1103/PhysRevLett.88.217901} {\path{doi:10.1103/PhysRevLett.88.217901}}.

\bibitem[EY07]{EY07}
Alexandre Eremenko and Peter Yuditskii.
\newblock Uniform approximation of $\operatorname{sgn} x$ by polynomials and entire functions.
\newblock {\em Journal d'Analyse Math{\'{e}}matique}, 101(1):313--324, 2007.
\newblock \href {https://doi.org/10.1007/s11854-007-0011-3} {\path{doi:10.1007/s11854-007-0011-3}}.

\bibitem[FIK08]{FIK08}
F.~Franchini, A.~R. Its, and V.~E. Korepin.
\newblock Renyi entropy of the {XY} spin chain.
\newblock {\em Journal of Physics A: Mathematical and Theoretical}, 41(2):025302, 2008.
\newblock \href {https://doi.org/10.1088/1751-8113/41/2/025302} {\path{doi:10.1088/1751-8113/41/2/025302}}.

\bibitem[Gan79]{Gan79}
T.~Ganelius.
\newblock Rational approximation to $x^\alpha$ on $[0, 1]$.
\newblock {\em Analysis Mathematica}, 5(1):19--33, 1979.
\newblock \href {https://doi.org/10.1007/BF02079347} {\path{doi:10.1007/BF02079347}}.

\bibitem[GH20]{GH20}
Alexandru Gheorghiu and Matty~J. Hoban.
\newblock Estimating the entropy of shallow circuit outputs is hard.
\newblock ArXiv e-prints, 2020.
\newblock \href {https://arxiv.org/abs/2002.12814} {\path{arXiv:2002.12814}}.

\bibitem[GHS21]{GHS21}
Tom Gur, Min-Hsiu Hsieh, and Sathyawageeswar Subramanian.
\newblock Sublinear quantum algorithms for estimating von {Neumann} entropy.
\newblock ArXiv e-prints, 2021.
\newblock \href {https://arxiv.org/abs/2111.11139} {\path{arXiv:2111.11139}}.

\bibitem[GHYZ24]{GHYZ24}
Weiyuan Gong, Jonas Haferkamp, Qi~Ye, and Zhihan Zhang.
\newblock On the sample complexity of purity and inner product estimation.
\newblock ArXiv e-prints, 2024.
\newblock \href {https://arxiv.org/abs/2410.12712} {\path{arXiv:2410.12712}}.

\bibitem[Gil19]{Gil19}
Andr\'{a}s Gily\'{e}n.
\newblock {\em Quantum Singular Value Transformation \& Its Algorithmic Applications}.
\newblock Phd dissertation, University of Amsterdam, 2019.
\newblock URL: \url{https://pure.uva.nl/ws/files/35292358/Thesis.pdf}.

\bibitem[Gin12]{Gin12}
Corrado Gini.
\newblock {\em Variabilit{\`a} e Mutabilit{\`a}: contributo allo studio delle distribuzioni e delle relazioni statistiche}.
\newblock Tipografia di Paolo Cuppin, 1912.
\newblock URL: \url{https://www.byterfly.eu/islandora/object/librib:680892}.

\bibitem[GL20]{GL20}
Andr{\'a}s Gily{\'e}n and Tongyang Li.
\newblock Distributional property testing in a quantum world.
\newblock In {\em Proceedings of the 11th Innovations in Theoretical Computer Science Conference}, pages 25:1--25:19, 2020.
\newblock \href {https://doi.org/10.4230/LIPIcs.ITCS.2020.25} {\path{doi:10.4230/LIPIcs.ITCS.2020.25}}.

\bibitem[GLM{\etalchar{+}}22]{GLM+22}
Andr{\'{a}}s Gily{\'{e}}n, Seth Lloyd, Iman Marvian, Yihui Quek, and Mark~M. Wilde.
\newblock Quantum algorithm for {Petz} recovery channels and pretty good measurements.
\newblock {\em Physical Review Letters}, 128(22):220502, 2022.
\newblock \href {https://doi.org/10.1103/PhysRevLett.128.220502} {\path{doi:10.1103/PhysRevLett.128.220502}}.

\bibitem[GP22]{GP22}
Andr\'{a}s {Gily\'{e}n} and Alexander Poremba.
\newblock Improved quantum algorithms for fidelity estimation.
\newblock ArXiv e-prints, 2022.
\newblock \href {https://arxiv.org/abs/2203.15993} {\path{arXiv:2203.15993}}.

\bibitem[GSLW19]{GSLW19}
Andr\'{a}s Gily\'{e}n, Yuan Su, Guang~Hao Low, and Nathan Wiebe.
\newblock Quantum singular value transformation and beyond: exponential improvements for quantum matrix arithmetics.
\newblock In {\em Proceedings of the 51st Annual ACM SIGACT Symposium on Theory of Computing}, pages 193--204, 2019.
\newblock \href {https://doi.org/10.1145/3313276.3316366} {\path{doi:10.1145/3313276.3316366}}.

\bibitem[HGKM10]{HGKM10}
Matthew~B. Hastings, Iv{\'{a}}n Gonz{\'{a}}lez, Ann~B. Kallin, and Roger~G. Melko.
\newblock Measuring {Renyi} entanglement entropy in quantum {Monte} {Carlo} simulations.
\newblock {\em Physical Review Letters}, 104(15):157201, 2010.
\newblock \href {https://doi.org/10.1103/PhysRevLett.104.157201} {\path{doi:10.1103/PhysRevLett.104.157201}}.

\bibitem[HHJ{\etalchar{+}}17]{HHJ+17}
Jeongwan Haah, Aram~W. Harrow, Zhengfeng Ji, Xiaodi Wu, and Nengkun Yu.
\newblock Sample-optimal tomography of quantum states.
\newblock {\em IEEE Transactions on Information Theory}, 63(9):5628--5641, 2017.
\newblock \href {https://doi.org/10.1109/TIT.2017.2719044} {\path{doi:10.1109/TIT.2017.2719044}}.

\bibitem[HHL09]{HHL09}
Aram~W. Harrow, Avinatan Hassidim, and Seth Lloyd.
\newblock Quantum algorithm for linear systems of equations.
\newblock {\em Physical Review Letters}, 103(15):150502, 2009.
\newblock \href {https://doi.org/10.1103/PhysRevLett.103.150502} {\path{doi:10.1103/PhysRevLett.103.150502}}.

\bibitem[HL97]{HL97}
Marlis Hochbruck and Christian Lubich.
\newblock On {Krylov} subspace approximations to the matrix exponential operator.
\newblock {\em SIAM Journal on Numerical Analysis}, 34(5):1911--1925, 1997.
\newblock \href {https://doi.org/10.1137/S0036142995280572} {\path{doi:10.1137/S0036142995280572}}.

\bibitem[HM19]{HM19}
Yassine Hamoudi and Fr{\'e}d{\'e}ric Magniez.
\newblock Quantum {Chebyshev's} inequality and applications.
\newblock In {\em Proceedings of the 46th International Colloquium on Automata, Languages, and Programming}, pages 69:1--69:16, 2019.
\newblock \href {https://doi.org/10.4230/LIPIcs.ICALP.2019.69} {\path{doi:10.4230/LIPIcs.ICALP.2019.69}}.

\bibitem[IMP{\etalchar{+}}15]{IMP+15}
Rajibul Islam, Ruichao Ma, Philipp~M. Preiss, M.~Eric Tai, Alexander Lukin, Matthew Rispoli, and Markus Greiner.
\newblock Measuring entanglement entropy in a quantum many-body system.
\newblock {\em Nature}, 528(7580):77--83, 2015.
\newblock \href {https://doi.org/10.1038/nature15750} {\path{doi:10.1038/nature15750}}.

\bibitem[JST17]{JST17}
Sonika Johri, Damian~S. Steiger, and Matthias Troyer.
\newblock Entanglement spectroscopy on a quantum computer.
\newblock {\em Physical Review B}, 96(19):195136, 2017.
\newblock \href {https://doi.org/10.1103/PhysRevB.96.195136} {\path{doi:10.1103/PhysRevB.96.195136}}.

\bibitem[JVHW15]{JVHW15}
Jiantao Jiao, Kartik Venkat, Yanjun Han, and Tsachy Weissman.
\newblock Minimax estimation of functionals of discrete distributions.
\newblock {\em IEEE Transactions on Information Theory}, 61(5):2835--2885, 2015.
\newblock \href {https://doi.org/10.1109/TIT.2015.2412945} {\path{doi:10.1109/TIT.2015.2412945}}.

\bibitem[JVHW17]{JVHW17}
Jiantao Jiao, Kartik Venkat, Yanjun Han, and Tsachy Weissman.
\newblock Maximum likelihood estimation of functionals of discrete distributions.
\newblock {\em IEEE Transactions on Information Theory}, 63(10):6774--6798, 2017.
\newblock \href {https://doi.org/10.1109/TIT.2017.2733537} {\path{doi:10.1109/TIT.2017.2733537}}.

\bibitem[KDS{\etalchar{+}}20]{KDS+20}
Eugenia-Maria Kontopoulou, Gregory-Paul Dexter, Wojciech Szpankowski, Ananth Grama, and Petros Drineas.
\newblock Randomized linear algebra approaches to estimate the von {Neumann} entropy of density matrices.
\newblock {\em IEEE Transactions on Information Theory}, 66(8):5003--5021, 2020.
\newblock \href {https://doi.org/10.1109/TIT.2020.2971991} {\path{doi:10.1109/TIT.2020.2971991}}.

\bibitem[Kut05]{Kut05}
Samuel Kutin.
\newblock Quantum lower bound for the collision problem with small range.
\newblock {\em Theory of Computing}, 1(2):29--36, 2005.
\newblock \href {https://doi.org/10.4086/toc.2005.v001a002} {\path{doi:10.4086/toc.2005.v001a002}}.

\bibitem[Lan56]{Lan56}
Cornelius Lanczos.
\newblock {\em Applied Analysis}.
\newblock Prentice Hall, 1956.

\bibitem[LC17]{LC17}
Guang~Hao Low and Isaac~L. Chuang.
\newblock Hamiltonian simulation by uniform spectral amplification.
\newblock ArXiv e-prints, 2017.
\newblock \href {https://arxiv.org/abs/1707.05391} {\path{arXiv:1707.05391}}.

\bibitem[LC19]{LC19}
Guang~Hao Low and Isaac~L. Chuang.
\newblock Hamiltonian simulation by qubitization.
\newblock {\em Quantum}, 3:163, 2019.
\newblock \href {https://doi.org/10.22331/q-2019-07-12-163} {\path{doi:10.22331/q-2019-07-12-163}}.

\bibitem[LGLW23]{LGLW23}
Fran{\c{c}}ois Le~Gall, Yupan Liu, and Qisheng Wang.
\newblock Space-bounded quantum state testing via space-efficient quantum singular value transformation.
\newblock ArXiv e-prints, 2023.
\newblock \href {https://arxiv.org/abs/2308.05079} {\path{arXiv:2308.05079}}.

\bibitem[LSO{\etalchar{+}}06]{LSO+06}
Ashwin Lall, Vyas Sekar, Mitsunori Ogihara, Jun Xu, and Hui Zhang.
\newblock Data streaming algorithms for estimating entropy of network traffic.
\newblock {\em ACM SIGMETRICS Performance Evaluation Review}, 34(1):145--156, 2006.
\newblock \href {https://doi.org/10.1145/1140103.1140295} {\path{doi:10.1145/1140103.1140295}}.

\bibitem[LT98]{LT98}
M.~L. Lyra and C.~Tsallis.
\newblock Nonextensivity and multifractality in low-dimensional dissipative systems.
\newblock {\em Physical Review Letters}, 81(1):53, 1998.
\newblock \href {https://doi.org/10.1103/PhysRevLett.80.53} {\path{doi:10.1103/PhysRevLett.80.53}}.

\bibitem[LW19]{LW19}
Tongyang Li and Xiaodi Wu.
\newblock Quantum query complexity of entropy estimation.
\newblock {\em IEEE Transactions on Information Theory}, 65(5):2899--2921, 2019.
\newblock \href {https://doi.org/10.1109/TIT.2018.2883306} {\path{doi:10.1109/TIT.2018.2883306}}.

\bibitem[LW25]{LW25}
Yupan Liu and Qisheng Wang.
\newblock On estimating the trace of quantum state powers.
\newblock In {\em Proceedings of the 2025 Annual ACM-SIAM Symposium on Discrete Algorithms}, pages 947--993, 2025.
\newblock \href {https://doi.org/10.1137/1.9781611978322.28} {\path{doi:10.1137/1.9781611978322.28}}.

\bibitem[LWL24]{LWL24}
Jingquan Luo, Qisheng Wang, and Lvzhou Li.
\newblock Succinct quantum testers for closeness and $k$-wise uniformity of probability distributions.
\newblock {\em IEEE Transactions on Information Theory}, 70(7):5092--5103, 2024.
\newblock \href {https://doi.org/10.1109/TIT.2024.3393756} {\path{doi:10.1109/TIT.2024.3393756}}.

\bibitem[LWWZ25]{LWWZ24}
Nana Liu, Qisheng Wang, Mark~M. Wilde, and Zhicheng Zhang.
\newblock Quantum algorithms for matrix geometric means.
\newblock {\em npj Quantum Information}, 11:101, 2025.
\newblock \href {https://doi.org/10.1038/s41534-025-00973-7} {\path{doi:10.1038/s41534-025-00973-7}}.

\bibitem[Mar90]{Mar90}
A.~Markov.
\newblock Ob odnom voproce d. i. mendeleeva.
\newblock {\em Zapiski Imperatorskoi Akademii Nauk SP6}, 62:1--24, 1890.
\newblock URL: \url{https://history-of-approximation-theory.com/fpapers/mar1.pdf}. English version: \url{https://history-of-approximation-theory.com/fpapers/markov4.pdf}.

\bibitem[Mar92]{Mar92}
Vladimir Markov.
\newblock O funktsiyakh, naimeneye uklonyayushchikhsya ot nulya v dannom promezhutke, 1892.
\newblock URL: \url{https://history-of-approximation-theory.com/fpapers/vmar.pdf}. German version: Markoff, Wladimir. \"{U}ber Polynome, die in einem gegebenen Intervalle m\"{o}glichst wenig von Null abweichen. \textit{Mathematische Annalen}, 77:213--258, 1916.
\newblock \href {https://doi.org/10.1007/BF01456902} {\path{doi:10.1007/BF01456902}}.

\bibitem[MS24]{MS24}
Ashley Montanaro and Changpeng Shao.
\newblock Quantum and classical query complexities of functions of matrices.
\newblock In {\em Proceedings of the 56th Annual ACM Symposium on Theory of Computing}, pages 573--584, 2024.
\newblock \href {https://doi.org/10.1145/3618260.3649665} {\path{doi:10.1145/3618260.3649665}}.

\bibitem[NBvS04]{NBvS04}
Ilya Nemenman, William Bialek, and Rob de~Ruyter van Steveninck.
\newblock Entropy and information in neural spike trains: progress on the sampling problem.
\newblock {\em Physical Review E}, 69(5):056111, 2004.
\newblock \href {https://doi.org/10.1103/PhysRevE.69.056111} {\path{doi:10.1103/PhysRevE.69.056111}}.

\bibitem[New79]{New79}
Donald~J. Newman.
\newblock Approximation to $x^n$ by lower degree rational functions.
\newblock {\em Journal of Approximation Theory}, 27(3):236--238, 1979.
\newblock \href {https://doi.org/10.1016/0021-9045(79)90104-7} {\path{doi:10.1016/0021-9045(79)90104-7}}.

\bibitem[NR76a]{NR76}
D.~J. Newman and T.~J. Rivlin.
\newblock Approximation of monomials by lower degree polynomials.
\newblock {\em Aequationes Mathematicae}, 14(3):451--455, 1976.
\newblock \href {https://doi.org/10.1007/BF01835995} {\path{doi:10.1007/BF01835995}}.

\bibitem[NR76b]{NR76b}
Donald~J. Newman and A.~R. Reddy.
\newblock Rational approximation to $x^n$.
\newblock {\em Pacific Journal of Mathematics}, 67(1):247--250, 1976.
\newblock \href {https://doi.org/10.2140/pjm.1976.67.247} {\path{doi:10.2140/pjm.1976.67.247}}.

\bibitem[NR80]{NR80}
Donald~J. Newman and A.~R. Reddy.
\newblock Rational approximation to $x^n$ {II}.
\newblock {\em Canadian Journal of Mathematics}, 32(2):310--316, 1980.
\newblock \href {https://doi.org/10.4153/CJM-1980-023-9} {\path{doi:10.4153/CJM-1980-023-9}}.

\bibitem[NT18]{NT18}
Yuji Nakatsukasa and Lloyd~N. Trefethen.
\newblock Rational approximation of $x^n$.
\newblock {\em Proceedings of the American Mathematical Society}, 146(12):5219--5224, 2018.
\newblock \href {https://doi.org/10.1090/proc/14187} {\path{doi:10.1090/proc/14187}}.

\bibitem[OSV12]{OSV12}
Lorenzo Orecchia, Sushant Sachdeva, and Nisheeth~K. Vishnoi.
\newblock Approximating the exponential, the {Lanczos} method and an {$\tilde O(m)$}-time spectral algorithm for balanced separator.
\newblock In {\em Proceedings of the 44th Annual ACM Symposium on Theory of Computing}, pages 1141--1160, 2012.
\newblock \href {https://doi.org/10.1145/2213977.2214080} {\path{doi:10.1145/2213977.2214080}}.

\bibitem[OW16]{OW16}
Ryan O'Donnell and John Wright.
\newblock Efficient quantum tomography.
\newblock In {\em Proceedings of the 48th Annual ACM Symposium on Theory of Computing}, pages 899--912, 2016.
\newblock \href {https://doi.org/10.1145/2897518.2897544} {\path{doi:10.1145/2897518.2897544}}.

\bibitem[OW17]{OW17}
Ryan O'Donnell and John Wright.
\newblock Efficient quantum tomography {II}.
\newblock In {\em Proceedings of the 49th Annual ACM Symposium on Theory of Computing}, pages 962--974, 2017.
\newblock \href {https://doi.org/10.1145/3055399.3055454} {\path{doi:10.1145/3055399.3055454}}.

\bibitem[OW21]{OW21}
Ryan O'Donnell and John Wright.
\newblock Quantum spectrum testing.
\newblock {\em Communications in Mathematical Physics}, 387(1):1--75, 2021.
\newblock \href {https://doi.org/10.1007/s00220-021-04180-1} {\path{doi:10.1007/s00220-021-04180-1}}.

\bibitem[Pan03]{Pan03}
Liam Paninski.
\newblock Estimation of entropy and mutual information.
\newblock {\em Neural Computation}, 15(6):1191--1253, 2003.
\newblock \href {https://doi.org/10.1162/089976603321780272} {\path{doi:10.1162/089976603321780272}}.

\bibitem[Pan04]{Pan04}
Liam Paninski.
\newblock Estimating entropy on $m$ bins given fewer than $m$ samples.
\newblock {\em IEEE Transactions on Information Theory}, 50(9):2200--2203, 2004.
\newblock \href {https://doi.org/10.1109/TIT.2004.833360} {\path{doi:10.1109/TIT.2004.833360}}.

\bibitem[QKW24]{QKW24}
Yihui Quek, Eneet Kaur, and Mark~M. Wilde.
\newblock Multivariate trace estimation in constant quantum depth.
\newblock {\em Quantum}, 8:1220, 2024.
\newblock \href {https://doi.org/10.22331/Q-2024-01-10-1220} {\path{doi:10.22331/Q-2024-01-10-1220}}.

\bibitem[RASW23]{RASW23}
Soorya Rethinasamy, Rochisha Agarwal, Kunal Sharma, and Mark~M. Wilde.
\newblock Estimating distinguishability measures on quantum computers.
\newblock {\em Physical Review A}, 108(1):012409, 2023.
\newblock \href {https://doi.org/10.1103/PhysRevA.108.012409} {\path{doi:10.1103/PhysRevA.108.012409}}.

\bibitem[Red78]{Red78}
A.~R. Reddy.
\newblock On certain problems of {Chebyshev}, {Zolotarev}, {Bernstein} and {Akhieser}.
\newblock {\em Inventiones mathematicae}, 45(1):83--110, 1978.
\newblock \href {https://doi.org/10.1007/BF01406225} {\path{doi:10.1007/BF01406225}}.

\bibitem[Red87]{Red87}
A.~R. Reddy.
\newblock Approximations to $x^n$ and $|x|$---a survey.
\newblock {\em Journal of Approximation Theory}, 51(2):127--137, 1987.
\newblock \href {https://doi.org/10.1016/0021-9045(87)90027-X} {\path{doi:10.1016/0021-9045(87)90027-X}}.

\bibitem[R{\'{e}}n61]{Ren61}
Alfr{\'{e}}d R{\'{e}}nyi.
\newblock On measures of entropy and information.
\newblock In {\em Proceedings of the Fourth Berkeley Symposium on Mathematics, Statistics and Probability}, pages 547--562, 1961.
\newblock URL: \url{https://static.renyi.hu/renyi_cikkek/1961_on_measures_of_entropy_and_information.pdf}.

\bibitem[RJ72]{RJ72}
R.~D. Riess and L.~W. Johnson.
\newblock Estimates for {$E_n(x^{n+2m})$}.
\newblock {\em Aequationes Mathematicae}, 8(3):258--262, 1972.
\newblock \href {https://doi.org/10.1007/BF01844499} {\path{doi:10.1007/BF01844499}}.

\bibitem[Rud76]{Rud76}
Walter Rudin.
\newblock {\em Principles of Mathematical Analysis}.
\newblock International Series in Pure and Applied Mathematics. McGraw Hill, 1976.

\bibitem[SCC19]{SCC19}
Yi{\u{g}}it Suba{\c{s}}ı, Lukasz Cincio, and Patrick~J. Coles.
\newblock Entanglement spectroscopy with a depth-two quantum circuit.
\newblock {\em Journal of Physics A: Mathematical and Theoretical}, 52(4):044001, 2019.
\newblock \href {https://doi.org/10.1088/1751-8121/aaf54d} {\path{doi:10.1088/1751-8121/aaf54d}}.

\bibitem[SEM91]{SEM91}
Peter~S. Shenkin, Batu Erman, and Lucy~D. Mastrandrea.
\newblock Information-theoretical entropy as a measure of sequence variability.
\newblock {\em Proteins: Structure, Function, and Bioinformatics}, 11(4):297--313, 1991.
\newblock \href {https://doi.org/10.1002/prot.340110408} {\path{doi:10.1002/prot.340110408}}.

\bibitem[SH21]{SH21}
Sathyawageeswar Subramanian and Min-Hsiu Hsieh.
\newblock Quantum algorithm for estimating $\alpha$-renyi entropies of quantum states.
\newblock {\em Physical Review A}, 104(2):022428, 2021.
\newblock \href {https://doi.org/10.1103/PhysRevA.104.022428} {\path{doi:10.1103/PhysRevA.104.022428}}.

\bibitem[Sha48a]{Sha48a}
C.~E. Shannon.
\newblock A mathematical theory of communication.
\newblock {\em The Bell System Technical Journal}, 27(3):379--423, 1948.
\newblock \href {https://doi.org/10.1002/j.1538-7305.1948.tb01338.x} {\path{doi:10.1002/j.1538-7305.1948.tb01338.x}}.

\bibitem[Sha48b]{Sha48b}
C.~E. Shannon.
\newblock A mathematical theory of communication.
\newblock {\em The Bell System Technical Journal}, 27(4):623--656, 1948.
\newblock \href {https://doi.org/10.1002/j.1538-7305.1948.tb00917.x} {\path{doi:10.1002/j.1538-7305.1948.tb00917.x}}.

\bibitem[SLLJ25]{SLLJ24}
Myeongjin Shin, Junseo Lee, Seungwoo Lee, and Kabgyun Jeong.
\newblock Resource-efficient algorithm for estimating the trace of quantum state powers.
\newblock {\em Quantum}, 9:1832, 2025.
\newblock \href {https://doi.org/10.22331/q-2025-08-27-1832} {\path{doi:10.22331/q-2025-08-27-1832}}.

\bibitem[Sta03]{Sta03}
Herbert~R. Stahl.
\newblock Best uniform rational approximation of $x^\alpha$ on $[0, 1]$.
\newblock {\em Acta Mathematica}, 190(2):241--306, 2003.
\newblock \href {https://doi.org/10.1007/BF02392691} {\path{doi:10.1007/BF02392691}}.

\bibitem[SV14]{SV14}
Sushant Sachdeva and Nisheeth~K. Vishnoi.
\newblock Faster algorithms via approximation theory.
\newblock {\em Foundations and Trends\textregistered~in Theoretical Computer Science}, 9(2):125--210, 2014.
\newblock \href {https://doi.org/10.1561/0400000065} {\path{doi:10.1561/0400000065}}.

\bibitem[SY23]{SY23}
Adrian She and Henry Yuen.
\newblock Unitary property testing lower bounds by polynomials.
\newblock In {\em Proceedings of the 14th Innovations in Theoretical Computer Science Conference}, pages 96:1--96:17, 2023.
\newblock \href {https://doi.org/10.4230/LIPIcs.ITCS.2023.96} {\path{doi:10.4230/LIPIcs.ITCS.2023.96}}.

\bibitem[Tim63]{Tim63}
Aleksandr~F. Timan.
\newblock {\em Theory of Approximation of Functions of a Real Variable}, volume~34 of {\em International Series of Monographs on Pure and Applied Mathematics}.
\newblock Pergamon Press, 1963.
\newblock \href {https://doi.org/10.1016/c2013-0-05307-8} {\path{doi:10.1016/c2013-0-05307-8}}.

\bibitem[Tre19]{Tre19}
Lloyd~N. Trefethen.
\newblock {\em Approximation Theory and Approximation Practice}.
\newblock Other Titles in Applied Mathematics. SIAM, extended edition, 2019.
\newblock \href {https://doi.org/10.1137/1.9781611975949} {\path{doi:10.1137/1.9781611975949}}.

\bibitem[Tsa88]{Tsa88}
Constantino Tsallis.
\newblock Possible generalization of {Boltzmann-Gibbs} statistics.
\newblock {\em Journal of Statistical Physics}, 52:479--487, 1988.
\newblock \href {https://doi.org/10.1007/BF01016429} {\path{doi:10.1007/BF01016429}}.

\bibitem[Tsa01]{Tsa01}
C.~Tsallis.
\newblock I. nonextensive statistical mechanics and thermodynamics: Historical background and present status.
\newblock In Sumiyoshi Abe and Yuko Okamoto, editors, {\em Nonextensive Statistical Mechanics and Its Applications}, volume 560 of {\em Lecture Notes in Physics}, pages 3--98. Springer, 2001.
\newblock \href {https://doi.org/10.1007/3-540-40919-X_1} {\path{doi:10.1007/3-540-40919-X_1}}.

\bibitem[vAGGdW20]{vAGGdW20}
Joran van Apeldoorn, Andr\'{a}s Gily\'{e}n, Sander Gribling, and Ronald de~Wolf.
\newblock Quantum {SDP}-solvers: Better upper and lower bounds.
\newblock {\em Quantum}, 4:230, 2020.
\newblock \href {https://doi.org/10.22331/q-2020-02-14-230} {\path{doi:10.22331/q-2020-02-14-230}}.

\bibitem[Val11]{Val11}
Paul Valiant.
\newblock Testing symmetric properties of distributions.
\newblock {\em SIAM Journal on Computing}, 40(6):1927--1968, 2011.
\newblock \href {https://doi.org/10.1137/080734066} {\path{doi:10.1137/080734066}}.

\bibitem[vEB12]{vEB12}
S.~J. van Enk and C.~W.~J. Beenakker.
\newblock Measuring $\operatorname{Tr}(\rho^n)$ on single copies of $\rho$ using random measurements.
\newblock {\em Physical Review Letters}, 108(11):110503, 2012.
\newblock \href {https://doi.org/10.1103/PhysRevLett.108.110503} {\path{doi:10.1103/PhysRevLett.108.110503}}.

\bibitem[vN27]{vN27}
J.~v.~Neumann.
\newblock Thermodynamik quantummechanischer gesamheiten.
\newblock {\em Nachrichten von der Gesellschaft der Wissenschaften zu Göttingen, Mathematisch-Physikalische Klasse}, 1927:273--291, 1927.
\newblock URL: \url{http://eudml.org/doc/59231}.

\bibitem[vN32]{vN32}
John von Neumann.
\newblock {\em Mathematische Grundlagen der Quantenmechanik}.
\newblock Springer, 1932.
\newblock URL: \url{http://eudml.org/doc/203794}.

\bibitem[VV11a]{VV11a}
Gregory Valiant and Paul Valiant.
\newblock Estimating the unseen: an n/log(n)-sample estimator for entropy and support size, shown optimal via new {CLTs}.
\newblock In {\em Proceedings of the 43rd Annual ACM Symposium on Theory of Computing}, pages 685--694, 2011.
\newblock \href {https://doi.org/10.1145/1993636.1993727} {\path{doi:10.1145/1993636.1993727}}.

\bibitem[VV11b]{VV11b}
Gregory Valiant and Paul Valiant.
\newblock The power of linear estimators.
\newblock In {\em Proceedings of the 52nd IEEE Annual Symposium on Foundations of Computer Science}, pages 403--412, 2011.
\newblock \href {https://doi.org/10.1109/FOCS.2011.81} {\path{doi:10.1109/FOCS.2011.81}}.

\bibitem[VV17]{VV17}
Gregory Valiant and Paul Valiant.
\newblock Estimating the unseen: improved estimators for entropy and other properties.
\newblock {\em Journal of the ACM}, 64(6):37:1--37:41, 2017.
\newblock \href {https://doi.org/10.1145/3125643} {\path{doi:10.1145/3125643}}.

\bibitem[Wan24]{Wan24}
Qisheng Wang.
\newblock Optimal trace distance and fidelity estimations for pure quantum states.
\newblock {\em IEEE Transactions on Information Theory}, 70(12):8791--8805, 2024.
\newblock \href {https://doi.org/10.1109/TIT.2024.3447915} {\path{doi:10.1109/TIT.2024.3447915}}.

\bibitem[Wat02]{Wat02}
John Watrous.
\newblock Limits on the power of quantum statistical zero-knowledge.
\newblock In {\em Proceedings of the 43rd Annual IEEE Symposium on Foundations of Computer Science}, pages 459--468, 2002.
\newblock \href {https://doi.org/10.1109/SFCS.2002.1181970} {\path{doi:10.1109/SFCS.2002.1181970}}.

\bibitem[WGL{\etalchar{+}}24]{WGL+24}
Qisheng Wang, Ji~Guan, Junyi Liu, Zhicheng Zhang, and Mingsheng Ying.
\newblock New quantum algorithms for computing quantum entropies and distances.
\newblock {\em IEEE Transactions on Information Theory}, 70(8):5653--5680, 2024.
\newblock \href {https://doi.org/10.1109/TIT.2024.3399014} {\path{doi:10.1109/TIT.2024.3399014}}.

\bibitem[WH19]{WH19}
Jingxiang Wu and Timothy~H. Hsieh.
\newblock Variational thermal quantum simulation via thermofield double states.
\newblock {\em Physical Review Letters}, 123(22):220502, 2019.
\newblock \href {https://doi.org/10.1103/PhysRevLett.123.220502} {\path{doi:10.1103/PhysRevLett.123.220502}}.

\bibitem[WLW21]{WLW21}
Youle Wang, Guangxi Li, and Xin Wang.
\newblock Variational quantum {Gibbs} state preparation with a truncated {Taylor} series.
\newblock {\em Physical Review Applied}, 16(5):054035, 2021.
\newblock \href {https://doi.org/10.1103/PhysRevApplied.16.054035} {\path{doi:10.1103/PhysRevApplied.16.054035}}.

\bibitem[WY16]{WY16}
Yihong Wu and Pengkun Yang.
\newblock Minimax rates of entropy estimation on large alphabets via best polynomial approximation.
\newblock {\em IEEE Transactions on Information Theory}, 62(6):3702--3720, 2016.
\newblock \href {https://doi.org/10.1109/TIT.2016.2548468} {\path{doi:10.1109/TIT.2016.2548468}}.

\bibitem[WZ24]{WZ24b}
Qisheng Wang and Zhicheng Zhang.
\newblock Fast quantum algorithms for trace distance estimation.
\newblock {\em IEEE Transactions on Information Theory}, 70(4):2720--2733, 2024.
\newblock \href {https://doi.org/10.1109/TIT.2023.3321121} {\path{doi:10.1109/TIT.2023.3321121}}.

\bibitem[WZ25]{WZ24}
Qisheng Wang and Zhicheng Zhang.
\newblock Time-efficient quantum entropy estimator via samplizer.
\newblock {\em IEEE Transactions on Information Theory}, 2025.
\newblock \href {https://doi.org/10.1109/TIT.2025.3576137} {\path{doi:10.1109/TIT.2025.3576137}}.

\bibitem[WZC{\etalchar{+}}23]{WZC+23}
Qisheng Wang, Zhicheng Zhang, Kean Chen, Ji~Guan, Wang Fang, Junyi Liu, and Mingsheng Ying.
\newblock Quantum algorithm for fidelity estimation.
\newblock {\em IEEE Transactions on Information Theory}, 69(1):273--282, 2023.
\newblock \href {https://doi.org/10.1109/TIT.2022.3203985} {\path{doi:10.1109/TIT.2022.3203985}}.

\bibitem[WZL24]{WZL24}
Xinzhao Wang, Shengyu Zhang, and Tongyang Li.
\newblock A quantum algorithm framework for discrete probability distributions with applications to {R\'{e}nyi} entropy estimation.
\newblock {\em IEEE Transactions on Information Theory}, 70(5):3399--3426, 2024.
\newblock \href {https://doi.org/10.1109/TIT.2024.3382037} {\path{doi:10.1109/TIT.2024.3382037}}.

\bibitem[WZW23]{WZW23}
Youle Wang, Benchi Zhao, and Xin Wang.
\newblock Quantum algorithms for estimating quantum entropies.
\newblock {\em Physical Review Applied}, 19(4):044041, 2023.
\newblock \href {https://doi.org/10.1103/PhysRevApplied.19.044041} {\path{doi:10.1103/PhysRevApplied.19.044041}}.

\bibitem[YS21]{YS21}
Justin Yirka and Yi{\u{g}}it Suba{\c{s}}ı.
\newblock Qubit-efficient entanglement spectroscopy using qubit resets.
\newblock {\em Quantum}, 5:535, 2021.
\newblock \href {https://doi.org/10.22331/q-2021-09-02-535} {\path{doi:10.22331/q-2021-09-02-535}}.

\bibitem[ZL24]{ZL24}
You Zhou and Zhenhuan Liu.
\newblock A hybrid framework for estimating nonlinear functions of quantum states.
\newblock {\em npj quantum information}, 10:62, 2024.
\newblock \href {https://doi.org/10.1038/s41534-024-00846-5} {\path{doi:10.1038/s41534-024-00846-5}}.

\bibitem[ZWZY25]{ZWZY25}
Yukun Zhang, Yusen Wu, You Zhou, and Xiao Yuan.
\newblock Measuring less to learn more: Quadratic speedup in learning nonlinear properties of quantum density matrices.
\newblock ArXiv e-prints, 2025.
\newblock \href {https://arxiv.org/abs/2509.01571} {\path{arXiv:2509.01571}}.

\end{thebibliography}

\appendix

\section{Best Polynomial Approximation of Even/Odd Functions} \label{sec:even-odd-poly-approx}

In this appendix, for completeness, we show that any even/odd function always has an even/odd best approximation polynomial with the same parity. 

\begin{lemma}
    For every even (resp.\ odd) function $f \colon \sbra{-a, a} \to \mathbb{R}$ with $a > 0$, if there is a polynomial $p \in \mathbb{R}\sbra{x}$ such that 
    \begin{equation}
        \sup_{x \in \sbra{-a, a}} \abs*{ p\rbra{x} - f\rbra{x} } \leq \varepsilon
    \end{equation}
    for some $\varepsilon > 0$, then there is an even (resp.\ odd) polynomial $p^*$ of degree $\leq \deg\rbra{p}$ such that
    \begin{equation}
        \sup_{x \in \sbra{-a, a}} \abs*{ p^*\rbra{x} - f\rbra{x} } \leq \varepsilon.
    \end{equation}
\end{lemma}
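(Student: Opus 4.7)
The plan is to use a standard symmetrization trick. Given the approximation polynomial $p$ promised in the hypothesis, I would construct the candidate $p^*$ explicitly as
\begin{equation}
    p^*(x) = \frac{p(x) + p(-x)}{2} \quad \text{if $f$ is even}, \qquad p^*(x) = \frac{p(x) - p(-x)}{2} \quad \text{if $f$ is odd}.
\end{equation}
The parity of $p^*$ is immediate from the definition (substituting $x \mapsto -x$), and since $p(x)$ and $p(-x)$ are polynomials in $x$ of the same degree, the operation of averaging or anti-averaging cannot increase the degree, so $\deg(p^*) \leq \deg(p)$.

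Next I would verify the approximation bound. The key observation is that the parity hypothesis on $f$ lets us rewrite $f(x)$ in a matching symmetrized form. If $f$ is even, then $f(x) = \tfrac{1}{2}(f(x) + f(-x))$, so
\begin{equation}
    p^*(x) - f(x) = \frac{\bigl(p(x) - f(x)\bigr) + \bigl(p(-x) - f(-x)\bigr)}{2},
\end{equation}
and applying the triangle inequality together with the hypothesis $\sup_{y \in [-a,a]} |p(y) - f(y)| \leq \varepsilon$ yields $|p^*(x) - f(x)| \leq \varepsilon$ for every $x \in [-a,a]$, using that $-x$ also lies in $[-a,a]$. The odd case is analogous: use $f(x) = \tfrac{1}{2}(f(x) - f(-x))$ to obtain
\begin{equation}
    p^*(x) - f(x) = \frac{\bigl(p(x) - f(x)\bigr) - \bigl(p(-x) - f(-x)\bigr)}{2},
\end{equation}
and conclude by the triangle inequality as before.

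There is essentially no obstacle here — the argument is a one-line symmetrization that exploits the invariance of the sup norm on $[-a, a]$ under the reflection $x \mapsto -x$. The only mild point worth stating is the independent closure under sign: the interval $[-a, a]$ is symmetric about $0$, so $-x \in [-a,a]$ whenever $x \in [-a,a]$, which is what makes the bound on $|p(-x) - f(-x)|$ free. This justifies, in the earlier proof of \cref{lemma:q-complexity-non-uniform}, the assumption that the best-approximating polynomial for $x^{q-1}$ on $[-1,1]$ can be taken to have matching parity $(q-1) \bmod 2$.
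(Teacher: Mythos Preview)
Your proposal is correct and takes essentially the same symmetrization approach as the paper: define $p^* = \tfrac{1}{2}(p(x) \pm p(-x))$ according to the parity of $f$, observe that $\deg(p^*) \leq \deg(p)$, and bound $|p^*(x) - f(x)|$ using the triangle inequality together with the symmetry of $[-a,a]$. If anything, your verification of the approximation bound is cleaner than the paper's written version, whose displayed inequality chain has the direction of the triangle inequality reversed; your averaging-of-errors argument is the correct way to carry this through.
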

\begin{proof}
    Let
    \begin{equation}
        p_{\textup{even}} = \frac{p\rbra{x} + p\rbra{-x}}{2}, \quad p_{\textup{odd}} = \frac{p\rbra{x} - p\rbra{-x}}{2}.
    \end{equation}
    Then, $p\rbra{x} = p_{\textup{even}}\rbra{x} + p_{\textup{odd}}\rbra{x}$.
    We only consider the case that $f$ is even, and the case that $f$ is odd can be shown similarly. 
    When $f$ is even, then we can choose $p^* = p_{\textup{even}}$. 
    To see this, note that for any 
    $x \in \sbra{-a, a}$, we have
    \begin{align}
        \abs{p\rbra{x} - f\rbra{x}}
        & = \abs{p_{\textup{even}}\rbra{x} + p_{\textup{odd}}\rbra{x} - f\rbra{x}} \\
        & \geq \abs{p_{\textup{even}}\rbra{x} - f\rbra{x}} + \abs{p_{\textup{odd}}\rbra{x}} \\
        & \geq \abs{p_{\textup{even}}\rbra{x} - f\rbra{x}},
    \end{align}
    which gives
    \begin{equation}
        \sup_{x \in \sbra{-a, a}} \abs*{ p_{\textup{even}}\rbra{x} - f\rbra{x} } \leq \sup_{x \in \sbra{-a, a}} \abs*{ p\rbra{x} - f\rbra{x} } \leq \varepsilon.
    \end{equation}
\end{proof}

\section{Sachdeva and Vishnoi's Lower Bound for Approximating Monomials}

In this appendix, for completeness, we reproduce Sachdeva and Vishnoi's proof of the lower bound on the approximate degree of the monomials in \cite{SV14}. 
To this end, we need the following theorems. 

\begin{theorem} [Markov brothers' inequality, adapted from \cite{Mar90}] \label{thm:markov-brother}
    Let $p \in \mathbb{R}\sbra{x}$ be a polynomial of degree $d$ such that $\abs{p\rbra{x}} \leq 1$ for all $x \in \sbra{-1, 1}$.
    Then, the derivative, $p'\rbra{x}$, of $p\rbra{x}$ satisfies
    \begin{equation}
        \abs{p'\rbra{x}} \leq d^2
    \end{equation}
    for all $x \in \sbra{-1, 1}$.
\end{theorem}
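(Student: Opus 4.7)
The plan is to establish the inequality via the extremal property of the Chebyshev polynomials $T_d(x) = \cos(d \arccos x)$. The argument proceeds in two stages: first, verify that $T_d$ saturates the inequality; second, show $T_d$ is extremal among all polynomials of degree $\leq d$ bounded by $1$ on $[-1, 1]$.

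For the first stage, differentiating the identity $T_d(\cos\theta) = \cos(d\theta)$ gives $T_d'(x) = d \cdot U_{d-1}(x)$, where $U_{d-1}(\cos\theta) = \sin(d\theta)/\sin\theta$ is the Chebyshev polynomial of the second kind. A direct bound (via L'H\^opital at $\theta = 0$ or the expansion $U_{d-1}(\cos\theta) = \sum_{j=0}^{d-1} e^{i(d-1-2j)\theta}$) shows $|U_{d-1}(x)| \leq d$ on $[-1, 1]$, with equality at $x = \pm 1$. Hence $|T_d'(x)| \leq d^2$ and $|T_d'(\pm 1)| = d^2$, so $T_d$ saturates the inequality at the endpoints.

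For the second stage, I would use a Riesz-type interpolation identity: for each fixed $x_0 \in [-1, 1]$, Lagrange interpolation at the Chebyshev extremal nodes $\xi_k = \cos(k\pi/d)$, $k = 0, \ldots, d$, yields a representation $p'(x_0) = \sum_{k=0}^d c_k(x_0)\, p(\xi_k)$ valid for every polynomial $p$ of degree $\leq d$, where $c_k(x_0) = \ell_k'(x_0)$ and $\ell_k$ is the $k$th Lagrange basis polynomial. Then $|p'(x_0)| \leq \|p\|_\infty \cdot \sum_k |c_k(x_0)|$, so it suffices to show $\sum_k |c_k(x_0)| \leq d^2$. Testing the identity against $p = T_d$, using $T_d(\xi_k) = (-1)^k$, gives $\sum_k (-1)^k c_k(x_0) = T_d'(x_0)$, so the bound reduces to the tight endpoint bound $|T_d'(x_0)| \leq d^2$ from stage one, \emph{provided} the signs of $c_k(x_0)$ alternate in step with $(-1)^k$.

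The main obstacle is verifying this sign-alternation property of the Lagrange derivative coefficients built on the Chebyshev extremal nodes: one must show $(-1)^k c_k(x_0) \geq 0$ for every $k$ and every $x_0 \in [-1, 1]$. This is a nontrivial fact about the geometry of Chebyshev interpolation and is where the classical Markov--brothers proof concentrates its technical effort. An alternative route is a compactness-plus-equioscillation argument, showing directly that the extremal polynomial for $|p'(x_0)|$ subject to $\|p\|_\infty \leq 1$ must equioscillate $d+1$ times on $[-1, 1]$ and hence coincide with $\pm T_d$; this shifts rather than eliminates the difficulty, and additionally requires care because the constraint is pointwise on $[-1, 1]$ while the objective is evaluated at a single point.
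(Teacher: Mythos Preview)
The paper does not prove this theorem: it is stated as a classical result, cited from Markov's original 1890 paper, and then used as a black box in the proof of \cref{thm:approx-deg-SV14}. So there is no paper proof to compare against.

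Regarding your plan itself, the main idea---Lagrange interpolation at the Chebyshev extremal nodes $\xi_k = \cos(k\pi/d)$ and testing against $T_d$---is a standard ingredient, but the step you flag as the ``main obstacle'' is not merely technically hard; the claimed sign pattern is \emph{false} at interior points. Take $d = 2$, nodes $\xi_0 = 1$, $\xi_1 = 0$, $\xi_2 = -1$, and $x_0 = 0$: the Lagrange basis derivatives are $c_0(0) = \tfrac{1}{2}$, $c_1(0) = 0$, $c_2(0) = -\tfrac{1}{2}$, so $(-1)^2 c_2(0) = -\tfrac{1}{2} < 0$. Correspondingly $\sum_k |c_k(0)| = 1$ while $|T_2'(0)| = 0$, so the identity $\sum_k |c_k(x_0)| = |T_d'(x_0)|$ you are aiming for simply does not hold in the interior. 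The sign alternation you need is valid only for $x_0$ outside the largest interior node (i.e., $|x_0| \geq \cos(\pi/d)$), which is why the classical proofs split into an interior region (handled via Bernstein's inequality $|p'(x)|\sqrt{1-x^2} \leq d$) and an endpoint region (where your interpolation argument, or an equioscillation comparison with $T_d$, does go through). Your alternative compactness-plus-equioscillation route is closer to Markov's original argument, but as you note it needs a careful analysis of where the extremal polynomial's maximum derivative actually occurs.
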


\begin{theorem} [Mean value theorem, cf.\ {\cite[Theorem 5.10]{Rud76}}] \label{thm:mean-value}
    If $f \colon \sbra{a, b} \to \mathbb{R}$ is a continuous function and it is differentiable in $\rbra{a, b}$, then there is a point $x \in \rbra{a, b}$ such that 
    \begin{equation}
        f\rbra{b} - f\rbra{a} = \rbra{b-a} f'\rbra{x}.
    \end{equation}
\end{theorem}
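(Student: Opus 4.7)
The plan is to reduce the statement to Rolle's theorem via an affine correction. I would introduce the auxiliary function $g \colon \sbra{a, b} \to \mathbb{R}$ defined by
\begin{equation}
    g\rbra{x} = f\rbra{x} - f\rbra{a} - \frac{f\rbra{b} - f\rbra{a}}{b - a} \rbra{x - a},
\end{equation}
which is designed precisely so that the ``secant slope'' is subtracted off. Since $g$ differs from $f$ by an affine function (continuous everywhere and differentiable everywhere), $g$ inherits continuity on $\sbra{a, b}$ and differentiability on $\rbra{a, b}$ from $f$.

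Next I would verify by direct substitution that $g\rbra{a} = 0$ and $g\rbra{b} = f\rbra{b} - f\rbra{a} - \rbra{f\rbra{b} - f\rbra{a}} = 0$, so $g$ takes equal values at the two endpoints. I would then invoke Rolle's theorem to produce a point $x \in \rbra{a, b}$ with $g'\rbra{x} = 0$. Differentiating $g$ gives $g'\rbra{x} = f'\rbra{x} - \frac{f\rbra{b} - f\rbra{a}}{b - a}$, and setting this to zero and multiplying through by $b - a$ yields the desired identity $f\rbra{b} - f\rbra{a} = \rbra{b - a} f'\rbra{x}$.

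The main obstacle is justifying Rolle's theorem itself, which is where the real work lies. I would derive Rolle's theorem from two classical analytic ingredients: the Extreme Value Theorem (a continuous function on a compact interval attains both its supremum and its infimum) and Fermat's interior extremum theorem (if a function differentiable at an interior point $x_0$ attains a local extremum at $x_0$, then $f'\rbra{x_0} = 0$). Applied to $g$, the Extreme Value Theorem produces a maximum $x_{\max}$ and minimum $x_{\min}$ in $\sbra{a, b}$; if both coincide with the endpoints then $g$ is constant (since $g\rbra{a} = g\rbra{b}$) and any interior point has vanishing derivative, while otherwise at least one of $x_{\max}, x_{\min}$ lies in the open interval $\rbra{a, b}$ and Fermat's theorem supplies the critical point. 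Since this entire chain is textbook material, I would ultimately just cite a standard reference such as \cite{Rud76} rather than rederive these foundational lemmas in full.
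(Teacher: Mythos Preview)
Your argument is the standard, correct reduction to Rolle's theorem and is precisely the proof given in the cited reference \cite{Rud76}; the paper itself does not supply a proof of this theorem but merely states it and points to Rudin, so there is nothing further to compare.
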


Now we are ready to give a detailed proof of the following result that was sketched in \cite[Section 5]{SV14}. 

\begin{theorem} [Lower bound for approximating monomials, \cite{SV14}] \label{thm:approx-deg-SV14}
    For every constant $\varepsilon \in \rbra{0, \frac{e-1}{2e}}$, 
    \begin{equation}
        \widetilde{\deg}_{\varepsilon}\rbra{x^{q}, \sbra{-1, 1}, \sbra{-1, 1}} = \Omega\rbra{\sqrt{q}}.
    \end{equation}
\end{theorem}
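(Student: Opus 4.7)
The plan is to combine the Markov brothers' inequality (\cref{thm:markov-brother}) with the mean value theorem (\cref{thm:mean-value}) applied to the interval $\sbra{a, 1}$ with $a = 1 - 1/q$, exploiting the fact that the monomial $x^q$ drops from $1$ at $x = 1$ to at most $1/e$ at $x = a$ over an interval of length only $1/q$.

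Let $p \in \mathbb{R}\sbra{x}$ be any polynomial of degree $d$ satisfying $\abs{p\rbra{x}} \leq 1$ on $\sbra{-1, 1}$ and $\sup_{x \in \sbra{-1, 1}} \abs{p\rbra{x} - x^q} \leq \varepsilon$. From the approximation guarantee evaluated at $x = 1$ and $x = a$, combined with the elementary inequality $\rbra{1 - 1/q}^q \leq 1/e$, I would immediately obtain
\begin{equation*}
    p\rbra{1} \geq 1 - \varepsilon, \qquad p\rbra{a} \leq a^q + \varepsilon \leq \frac{1}{e} + \varepsilon.
\end{equation*}
The mean value theorem applied on $\sbra{a, 1}$ then produces a point $c \in \rbra{a, 1}$ with
\begin{equation*}
    p'\rbra{c} = \frac{p\rbra{1} - p\rbra{a}}{1 - a} \geq q \rbra*{\frac{e-1}{e} - 2\varepsilon},
\end{equation*}
since $1 - a = 1/q$.

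Finally, combining this lower bound on $\abs{p'\rbra{c}}$ with the Markov brothers' inequality $\abs{p'\rbra{c}} \leq d^2$ yields $d^2 \geq q\rbra*{\frac{e-1}{e} - 2\varepsilon}$, which is $\Omega\rbra{q}$ with a positive hidden constant precisely when the constant $\varepsilon$ satisfies $\varepsilon < \frac{e-1}{2e}$; taking square roots gives $d = \Omega\rbra{\sqrt{q}}$, as required. There is no substantive technical obstacle, since the argument is essentially a single invocation of Markov's inequality on one well-chosen interval. The only judgment call is the choice of $a$: picking $a = 1 - t/q$ for an arbitrary constant $t > 0$ would yield the more general bound $d^2 \geq q\rbra{1 - e^{-t} - 2\varepsilon}/t$, so $t = 1$ is exactly the choice that recovers the stated range of $\varepsilon$, while larger $t$ would broaden the admissible range at the cost of a smaller hidden constant.
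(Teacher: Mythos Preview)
Your proposal is correct and follows essentially the same approach as the paper: apply the mean value theorem to $p$ on $\sbra{1-1/q,\,1}$, use $(1-1/q)^q \leq 1/e$ together with the $\varepsilon$-approximation at the endpoints to lower bound $p'$ at some interior point by $q\rbra{\frac{e-1}{e}-2\varepsilon}$, and conclude via the Markov brothers' inequality. Your additional remark about the one-parameter family $a = 1 - t/q$ is correct and is not in the paper's proof, but it does not alter the argument.
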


\begin{proof}
    Let $p \colon \sbra{-1, 1} \to \sbra{-1, 1}$ be a polynomial of degree $\deg\rbra{p} = \widetilde{\deg}_{\varepsilon}\rbra{x^{q}, \sbra{-1, 1}, \sbra{-1, 1}}$ such that
    \begin{equation}
        \sup_{x \in \sbra{-1, 1}} \abs{p\rbra{x} - x^q} \leq \varepsilon.
    \end{equation}
    Applying the mean value theorem (\cref{thm:mean-value}) with $a = 1-\frac{1}{q}$, $b = 1$, and $f\rbra{x} = p\rbra{x}$, we have that there exists a point $x^* \in \rbra{1-\frac{1}{q}, 1}$ such that
    \begin{align}
        p'\rbra{x^*}
        & = q \rbra*{ p\rbra{1}-p\rbra*{1-\frac{1}{q}} } \\
        & \geq q \rbra*{ \rbra*{\left.x^q\right|_{x = 1} - \varepsilon} - \rbra*{\left.x^q\right|_{x = 1 - \frac 1 q} + \varepsilon}} \\
        & = q \rbra*{ 1 - \rbra*{1-\frac{1}{q}}^q - 2\varepsilon } \\
        & > q \rbra*{1 - \frac{1}{e} - 2\varepsilon}.
    \end{align}
    By Markov brothers' inequality (\cref{thm:markov-brother}), we have
    \begin{align}
        \deg\rbra{p} 
        & \geq \sqrt{\abs{p'\rbra{x^*}}} \\
        & \geq \sqrt{q\rbra*{1 - \frac{1}{e} - 2\varepsilon}}.
    \end{align}
\end{proof}

To make the lower bound in \cref{thm:approx-deg-SV14} more general, we extend it to the case where the function value of the approximation polynomial can be any real numbers. 

\begin{theorem} \label{thm:lb-by-sv14}
    For every constant $\varepsilon \in \rbra{0, \frac{e-1}{4e}}$, 
    \begin{equation}
        \widetilde{\deg}_{\varepsilon}\rbra{x^{q}, \sbra{-1, 1}, \mathbb{R}} = \Omega\rbra{\sqrt{q}}.
    \end{equation}
\end{theorem}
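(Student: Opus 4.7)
The plan is to reduce the range-$\mathbb{R}$ case to the range-$[-1,1]$ case established in \cref{thm:approx-deg-SV14}, by a simple rescaling trick. This is the same device the author uses in going from \cref{thm:monomial-approx-lower-bound-fixed-range} to \cref{thm:approx-deg-lb}, so one should expect a short argument that trades a factor of $2$ in the precision parameter in exchange for an $L^\infty$ bound on the approximating polynomial.

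Concretely, I would let $p \in \mathbb{R}[x]$ be a best approximation polynomial witnessing $\widetilde{\deg}_\varepsilon(x^q,[-1,1],\mathbb{R})$, so that $\sup_{x \in [-1,1]} |p(x) - x^q| \leq \varepsilon$ but a priori $p$ may exceed $1$ in absolute value. Define the rescaled polynomial
\begin{equation}
    r(x) = (1-\varepsilon)\, p(x).
\end{equation}
Two direct estimates control $r$:
\begin{align}
    |r(x) - x^q| &\leq (1-\varepsilon)|p(x) - x^q| + \varepsilon |x^q| \leq (1-\varepsilon)\varepsilon + \varepsilon \leq 2\varepsilon, \\
    |r(x)| &\leq (1-\varepsilon)(|x^q| + \varepsilon) \leq (1-\varepsilon)(1+\varepsilon) = 1-\varepsilon^2 \leq 1,
\end{align}
uniformly for $x \in [-1,1]$. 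Thus $r \colon [-1,1] \to [-1,1]$ is a $2\varepsilon$-approximation of $x^q$ of the same degree as $p$.

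Since the hypothesis $\varepsilon < \frac{e-1}{4e}$ gives $2\varepsilon < \frac{e-1}{2e}$, we are in the range of parameters covered by \cref{thm:approx-deg-SV14}, and therefore
\begin{equation}
    \deg(p) = \deg(r) \geq \widetilde{\deg}_{2\varepsilon}\!\rbra{x^q, \sbra{-1,1}, \sbra{-1,1}} = \Omega(\sqrt{q}),
\end{equation}
for all sufficiently large $q$, which yields the claim. The entire argument is essentially a one-line reduction; there is no real obstacle, and the only detail worth being careful about is that the constant $\frac{e-1}{4e}$ in the hypothesis matches exactly the constant needed so that $2\varepsilon$ still lies in the admissible range $(0, \frac{e-1}{2e})$ of \cref{thm:approx-deg-SV14}.
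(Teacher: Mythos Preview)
Your proposal is correct and takes essentially the same approach as the paper: the paper's proof of \cref{thm:lb-by-sv14} simply invokes the rescaling argument from \cref{thm:approx-deg-lb} (set $r(x)=(1-\varepsilon)p(x)$, note $|r|\le 1$ and $|r-x^q|\le 2\varepsilon$) and then applies \cref{thm:approx-deg-SV14} with precision $2\varepsilon$. Your write-up is in fact slightly more explicit than the paper's in displaying the two uniform bounds on $r$, but the idea and the constants match exactly.
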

\begin{proof}
    By \cref{thm:approx-deg-SV14} using the same arguments as in \cref{thm:approx-deg-lb}, we have
    \begin{equation}
        \widetilde{\deg}_{\varepsilon}\rbra{x^{q}, \sbra{-1, 1}, \mathbb{R}} \geq \widetilde{\deg}_{2\varepsilon}\rbra{x^{q}, \sbra{-1, 1}, \sbra{-1, 1}} = \Omega\rbra{\sqrt{q}}.
    \end{equation}
\end{proof}

\end{document}